\title{Approximation Schemes for Geometric Knapsack for Packing Spheres and Fat Objects\thanks{A preliminary version of the work appeared in the proceedings of the 51st EATCS International Colloquium on Automata, Languages, and Programming (ICALP) 2024; \href{https://doi.org/10.4230/LIPIcs.ICALP.2024.8}{doi:10.4230/LIPIcs.ICALP.2024.8}.}} 
\theoremstyle{plain}
\theoremstyle{plain}
\newtheorem{lem}[theorem]{\protect\lemmaname}
\theoremstyle{plain}
\newtheorem{prop}[theorem]{\protect\propositionname}
\theoremstyle{definition}
\newtheorem{defn}[theorem]{\protect\definitionname}
\providecommand{\definitionname}{Definition}
\providecommand{\lemmaname}{Lemma}
\providecommand{\propositionname}{Proposition}
\providecommand{\theoremname}{Theorem}
\global\long\def\C{\mathcal{C}}%
\global\long\def\OPT{\mathsf{OPT}}%
\global\long\def\G{\mathcal{G}}%
\global\long\def\sml{\mathrm{sml}}%
\global\long\def\P{\mathcal{P}}%
\global\long\def\N{\mathbb{N}}%
\global\long\def\DP{\mathrm{DP}}%
\newcommand{\arir}[1]{}
\newcommand{\ari}[1]{#1}
\newcommand{\ag}[1]{#1}
\newcommand{\agtwo}[1]{#1}
\author{Pritam Acharya}{Department of Mathematics, Indian Institute of Science Education and Research Pune, India}{pritam.acharya@students.iiserpune.ac.in}{}{}
\author{Sujoy Bhore}{Department of Computer Science and Engineering, Indian Institute of Technology Bombay, India}{sujoy@cse.iitb.ac.in}{}{}
\author{Aaryan Gupta}{Department of Computer Science and Engineering, Indian Institute of Technology Bombay, India}{aaryanguptaxyz@gmail.com}{}{}
\author{Arindam Khan}{Department of Computer Science and Automation, Indian Institute of Science Bengaluru, India}{arindamkhan@iisc.ac.in}{}{}
\author{Bratin Mondal}{Department of Computer Science and Engineering, Indian Institute of Technology Kharagpur, India}{mondalbratin2003@kgpian.iitkgp.ac.in}{}{}
\author{Andreas Wiese}{Department of Mathematics, Technical University of Munich, Germany}{andreas.wiese@tum.de}{}{}
\authorrunning{Acharya et al.} 
\keywords{Approximation Algorithms, Polygon Packing, Circle Packing, Sphere Packing, Geometric Knapsack, Resource Augmentation.} 
\begin{document}

\maketitle

\begin{abstract}
We study the geometric knapsack problem in which we are given a set of $d$-dimensional objects (each with associated profits) and the goal is to find the maximum profit subset that can be packed non-overlappingly into a given $d$-dimensional (unit hypercube) knapsack.
Even if $d=2$ and all input objects are disks, this problem is known to be \textsf{NP}-hard [Demaine, Fekete, Lang, 2010].
In this paper, we give polynomial time $(1+\eps)$-approximation algorithms for the following types of input objects in any constant dimension $d$:
\begin{itemize}
 \item disks and hyperspheres,
 \item a class of fat convex polygons that generalizes regular $k$-gons for $k\ge 5$ (formally, polygons
 with a constant number of edges, whose lengths are in a bounded range, and in which each angle is strictly larger than $\pi/2$),
 \item arbitrary fat convex objects that are sufficiently small compared to the knapsack.
\end{itemize}
We remark that in our \textsf{PTAS} for disks and hyperspheres, we output the computed set of objects, but for a $O_\eps(1)$ of them, we determine their coordinates only up to an exponentially small error. However, it is unclear whether there always exists a $(1+\eps)$-approximate solution that uses only rational coordinates for the disks' centers. We leave this as an open problem that is related to well-studied geometric questions in the realm of circle packing.

\end{abstract}

\section{Introduction}

\label{sec:intro} One of the cornerstones of geometry is the problem
of packing circles and spheres into a container, e.g., a square or
a hypercube. It dates back to the 17th century when Kepler conjectured
his famous bound on the average density of any packing of spheres
in the three-dimensional Euclidean space~\cite{kepler2010six}. The problem has been investigated, for example, by Lagrange~\cite{chang2010simple}
who solved it in the setting of two dimensions, by Hales and Ferguson~\cite{hales2006formulation}
who proved Kepler's original conjecture, and by Viazovska~\cite{viazovska2017sphere}
who studied the problem in dimension $8$ and was awarded the Fields
medal in 2022 for her work.

A natural corresponding optimization question is the \emph{geometric
knapsack problem}, where we are given a set of $d$-dimensional objects
for some constant $d\in\mathbb{N}$, e.g., circles or \mbox{(hyper-)}spheres,
but possibly also other shapes (like squares, pentagons, hexagons,
etc. for the case of $d=2$) \agtwo{with} each of them having a given profit. The goal is to find the
subset of maximum total profit that can be packed non-overlappingly
into a given square or (hyper-)cube. \agtwo{In this work, we consider the translations of the objects but do not allow rotations.} 

Geometric knapsack is a natural mathematical problem and it is well-motivated
by practical applications in several areas, including radio tower
placement~\cite{szabo2007new}, origami design~\cite{lang1996computational},
cylinder pallet assembly~\cite{castillo2008solving,fraser1994integrated},
tree plantation~\cite{szabo2007new}, cutting industry~\cite{szabo2007new},
bundling tubes or cables~\cite{wang2002improved}, layout of control
panels~\cite{castillo2008solving}, or design of digital modulation
schemes~\cite{peikert1992packing}.

The problem is known to be \textsf{NP}-hard, already for $d=2$ and
if all input objects are \ari{axis-aligned} squares or disks \cite{demaine-Origami-NPHard,adamaszek2016hardness}.
This motivates designing approximation algorithms for it. For hypercubes
in any constant dimension $d$, there is a polynomial time $(1+\eps)$-approximation
algorithm is known for any constant $\eps>0$~\cite{Jansen0LS22}, i.e.,
a polynomial time approximation scheme (\textsf{PTAS}). Thus, this is the best
possible approximation guarantee, unless \textsf{P}=\textsf{NP}.

However, for other classes of (fat) objects, the best-known results
either have approximation ratios that are (far) from their respective lower
bounds or they require resource augmentation, i.e., an increase in the
size of the given knapsack. One intuitive reason for this is that
\ari{axis-aligned} squares and cubes can be stacked nicely without wasting space and
the resulting coordinates are well-behaved, while for more general
shapes this might not be the case, see \Cref{fig:corner}. %

\begin{figure}
	\centering \includegraphics[width=1\textwidth]{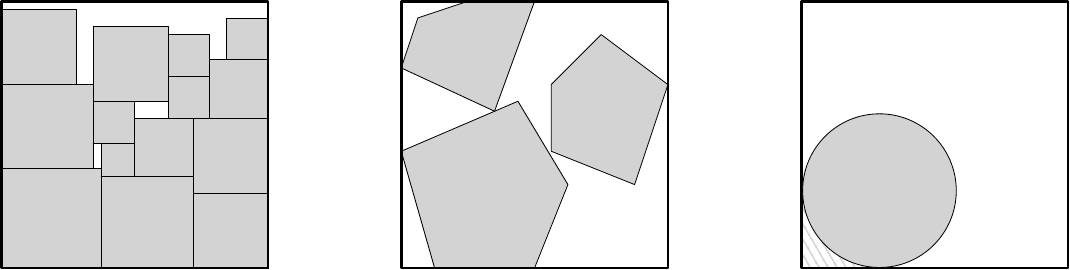}
	\caption{Left: The squares are stacked compactly inside the knapsack. Middle: The pentagons cannot
	be stacked as tightly inside the knapsack as the squares.
	Right: The space in the corner (striped area) cannot be covered by any large circle.}
	\label{fig:corner}
\end{figure}

For circles, the best known result is a $(3+\eps)$-approximation~\cite{Survey-Miyazawa}
but the best-known lower bound is only \textsf{NP}-hardness. 
\ari{Still, the membership in \textsf{NP} is wide open for the following question:
given a set of $n$ circles of $O(1)$ number of different sizes, decide whether they can be packed into a unit square.
 It is also open whether packing circles into a square knapsack is $\exists\mathbb{R}$-complete or not \cite{abrahamsen2020framework}. 
Even for $n$ unit circles,  we do not know the exact value of the smallest size squares that can pack them. 
See \cite{circP}  for the current status of
upper and lower bounds for $n \le 1000$. }

There
is a \textsf{PTAS} in any constant dimension $d$ for this case, but it requires
resource augmentation~\cite{MiyazawaWAOA}. For triangles, there
is a $O(1)$-approximation algorithm (assuming it is allowed to rotate
the triangles arbitrarily) whose precise approximation ratio is not
explicitly specified~\cite{MerinoW20}. Also for this case, it is
still possible that there is a \textsf{PTAS}.
On the other hand, there are settings of geometric knapsack
that do \emph{not} admit a \textsf{PTAS}, e.g., axis-parallel cuboids in three dimensions \cite{ChlebikC09}.

\ari{Furthermore, in practical applications (e.g., loading cargo into a truck or cutting pieces out of raw material like cloth or metal) the objects do not necessarily all have the same shape. For example, Bennell and Oliveira \cite{bennell2009tutorial}  consider a mix of different shapes of objects (their primary objects are circles, rectangles, regular polygons, and convex polygons).
However, the previous papers in the theoretical literature for geometric knapsacks mostly assume that all input objects are of the same type, e.g., only squares, only circles, or only rectangles, etc. Thus, from a theoretical point of view, it is interesting to see how the problem behaves when the input objects might be of different types.}

This raises the following natural question that we study in this paper:
\begin{center}
\emph{What are the best approximation ratios we can achieve for the
geometric knapsack problem, depending on the type of the input objects?
For which type of objects does a \textsf{PTAS} exist?}
\par\end{center}

\subsection{Our contribution}

In this paper, we present a polynomial time $(1+\eps)$-approximation algorithm for the geometric knapsack problem for packing  
$d$-dimensional spheres into $d$-dimensional hypercube knapsack, for any constant dimension $d \ge 2$. For spheres,
there is a complication that possibly any (near-)optimal
\agtwo{packing} for a given instance requires irrational coordinates. Therefore,
our output consists of a set of spheres that can be packed non-overlappingly
inside the given knapsack and whose profit is at least \ag{$(1+\eps)^{-1}w\mathrm{(OPT)}$,
where $w\mathrm{(OPT)}$ denotes the profit of the optimal solution $\mathrm{OPT}$}.
Moreover, for all but at most $O_{\varepsilon}(1)$ spheres, our algorithm
outputs the precise (rational) coordinates of the packing. \footnote{The notation $O_{\eps}(f(n))$ means that the implicit constant hidden in big-$O$ notation can depend on $\eps$.}
For the other $O_{\varepsilon}(1)$
spheres it outputs them up to an exponentially small error in each
dimension. We remark that there are related packing problems for which
it is known that irrational coordinates are sometimes necessary and
that computing them is $\exists\mathbb{R}$-complete (and hence possibly
even harder than \textsf{NP}-hardness) \cite{abrahamsen2020framework}.
On the other hand, if we knew that there always exists a $(1+\eps)$-approximate
solution in which all coordinates are rational with only a polynomial
number of bits, our algorithm would find such coordinates in polynomial
time. We stress that our returned set of spheres is always guaranteed
to fit into the given knapsack with appropriate (possibly irrational)
coordinates but \emph{without} resource augmentation.

Our second result is a polynomial time $(1+\eps)$-approximation
\agtwo{algorithm} for the geometric knapsack problem for wide classes of
convex geometric polygons. We give a \textsf{PTAS} for a class of
fat convex polygons which generalizes pentagons, hexagons, and regular
$k$-gons for constant $k>4$ (see \Cref{fig:regular-obj-packing}). Formally, we require
for each polygon that the angle between any two adjacent edges is
at least $\pi/2+\delta$ for some constant $\delta>0$ and that each
polygon has a constant number of edges with similar lengths (up to
a constant factor). Note that in contrast to many prior results, we
allow each input object to have a different shape, e.g., with a different
number of edges, different angles formed by them, and a different
orientation. Also, the polygons may differ arbitrarily in size.

\begin{figure}
	\centering \includegraphics[width=0.5\textwidth]{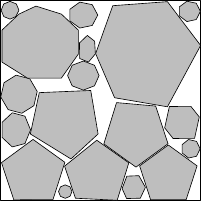}
	\caption{Packing of fat convex polygons in a knapsack}
	\label{fig:regular-obj-packing}
\end{figure}

If each input object is sufficiently small compared to the knapsack,
we obtain
even a polynomial time $(1+\eps)$-approximation for \emph{arbitrary}
fat convex \ari{objects} in any constant dimension $d$. We remark that for other packing problems like
one-dimensional \textsc{Knapsack} or \textsc{Bin Packing}, near-optimal
solutions can easily be achieved via greedy algorithms if the input
objects are sufficiently small. 
\ari{Even for sufficiently small $d$-dimensional axis-aligned hypercuboids, it is known that simple algorithms like NFDH \cite{NFDH, bansal2006bin} have negligible wasted space. 
However, for other geometric objects this
is much harder since we might not be able to place the input objects
compactly without wasting space. 
For example, classical result by Thue \cite{chang2010simple} showed that one can pack at most 
$\frac{\pi}{2 \sqrt{3}} \approx 0.9069$ fraction of the total area, even in the case of packing of unit circles.  
Furthermore, for circles and other similar convex objects, irrational coordinates may arise in the packing and the optimal solution may use a very complicated packing to minimize the wasted space.}

\subsection{Our techniques}
Now, we discuss the techniques of our results, starting with our \textsf{PTAS}
for spheres. To compute our packing, we first enumerate all the large
spheres in the optimal solution, i.e., the spheres whose radius is
at least a constant fraction of the side length of the knapsack. Also,
we guess their placement up to a polynomially small error, which yields
a small range of possible placements for each of them. Note that we
cannot guess these coordinates precisely, since we cannot even exclude
that they are irrational. However, we guarantee that such coordinates
\emph{exist}, by solving a system of polynomial equations \emph{exactly
}in polynomial time.
{Polynomial inequalities (semi-algebraic sets) have previously been applied in the context of circle packing problems \cite{szabo2005global}. 
For instance, Miyazawa et al. \cite{MiyazawaPSSW16} employed polynomial inequalities for the circle bin packing problem. 
We remark that we introduce the new idea of first guessing the coordinates up to a small error, thanks to which we know that we placed the spheres almost at their correct coordinates.}

Next, we want to place small spheres into the remaining part of the
knapsack. Unfortunately, we do not know precisely which part of the
knapsack is available for them since we do not know the precise coordinates
of the large spheres. Thus, there is some area of the knapsack that
is \emph{maybe} used by the large spheres in our packing; however, potentially, the
optimal solution uses it for placing small spheres. Our key insight
is that this area is small compared to the area that is for
sure \emph{not} used by large spheres in the optimal solution. Using the
fact that objects are spheres, we show that some area in each corner
of the knapsack cannot be covered by any large sphere in \emph{any}
solution and whose size is at least a constant fraction of the knapsack
(see the bottom-left empty corner in \Cref{fig:corner}). We use this area to compensate for the fact that we
do not know the precise coordinates of our large spheres and we waste
space because of this.

When we select and pack the small spheres, we define a constant number
of (small) identical knapsacks that fit into the given knapsack together
with the large spheres and into which we place our small spheres.
For the remaining task of placing the small spheres, we argue that it is sufficient to have an algorithm
that uses resource augmentation, i.e., that increases the size of
each knapsack by a factor of $1+\eps$ (in each dimension).
Thus, on a high level, we reduce the problem of packing arbitrary spheres
into \emph{one} knapsack to the problem of packing small spheres into
a \emph{constant number} of knapsacks \emph{with resource augmentation}.

{In order to solve the remaining problem, we use a hierarchical decomposition,
similar to, e.g., \cite{ErlebachJS05, JansenS08, MiyazawaPSSW16, MiyazawaWAOA}. In particular, Miyazawa et al. \cite{MiyazawaPSSW16} employ such a decomposition for the circle bin packing problem and argue that it is possible to first pack the relatively large circles, partition the remaining space in the bins into smaller sub-bins for smaller circles, and continue recursively.
Furthermore, for the circle knapsack problem
Miyazawa et al.~\cite{MiyazawaWAOA} introduced a sophisticated strategy based on a hierarchical decomposition that uses a combination of two configuration-based integer programs (IP). 
In this work, we use a rather bottom-up approach and provide a dynamic program based solution, which is arguably simpler than the corresponding algorithm in \cite{MiyazawaWAOA} and we also show that our subroutine works for arbitrary convex fat objects which is a larger class of objects.
Intuitively, in the structured packing the grid cells are
partitioned such that each placed object $P$ is contained in a constant number of grid cells
whose size is comparable to $P$. Importantly, these grid cells are used \emph{exclusively} by $P$
and not by any other placed object (not even partially). 
Our DP has a subproblem
for each combination of a level (corresponding to a size range of the input objects) and a number of
available grid cells corresponding to this level.
Given such a subproblem, it suffices to enumerate a polynomial number of possibilities for selecting and placing
objects of this level, which reduces the given subproblem
to a subproblem corresponding to the next level. This DP might have applications in other related packing problems.

In our algorithm for fat convex polygons (with the properties described
above), we extend our algorithm for spheres as follows. For the
guessed large polygons, we compute their coordinates \emph{exactly
}in polynomial time. Here, we use the (known) fact that there exists
a placement for them that corresponds to an extreme point solution
of a suitable linear program, which has rational coordinates. Then,
intuitively we use the condition for the polygons' angles to ensure
that the large objects leave a certain area of the knapsack empty.
We use this empty area in a similar way as in the setting of circles.
Again, we place the small objects into a constant number of knapsacks
under resource augmentation, using our new subroutine described above.

If all input objects are sufficiently small compared to the size of the knapsack (formally, we assume that each of them
fits in a smaller knapsack with side length $\Theta(\eps)$)
there are no large objects and, hence,
we can omit the step of enumerating them. In particular, we do not
need the conditions of the polygons' edges anymore.
Since the input objects are so small,
we can show that by losing a factor of $1+\eps$ in the approximation ratio,
we may pretend that we have resource augmentation available.
Hence, we can directly call our subroutine for small objects under resource augmentation.

We leave it as an open question to determine whether irrational coordinates
are sometimes necessary for $(1+\eps)$-approximate 
solutions for geometric knapsack for spheres, i.e., if values $\eps>0$ exist for which this is the case.
If yes, it would be
interesting to determine the best possible approximation ratio one
can achieve with rational coordinates only. Note that this question
is related to the well-studied problem of determining the size of
the smallest knapsack needed to pack a given number of unit circles.
For that problem, it is known that for some number of unit circles
the smallest knapsack has irrational edge lengths~\cite{friedman-cirinsqu}.
On the other hand, recall that if rational coordinates with a polynomial number of bits
always suffice, our algorithm for spheres can compute the coordinates
of \emph{all} returned spheres of our $(1+\eps)$-approximation algorithm
\emph{exactly}.

\subsection{Other related work}
{Two related problems are the geometric bin packing and geometric strip packing problems.
In geometric bin packing, we want to place a given set of geometric objects non-overlappingly into the smallest possible number
of unit-size bins. 
In the strip packing problem, the objective is to pack a set of geometric objects non-overlappingly  into a strip of unit
width and minimum height.
Miyazawa, Pedrosa, Schouery, Sviridenko, and Wakabayashi \cite{MiyazawaPSSW16} gave an \textsf{APTAS} for the circle bin packing problem with resource augmentation, and
an \textsf{APTAS} for the circle strip packing problem. Their framework extends to a wide range of geometric packing problems, e.g., with ellipses, regular polygons, $d$-dimensional spheres under $L_p$-norm, etc.
They also posed the question of whether it is always possible to obtain a rational solution to the problem of packing a set of circles with rational radii in a non-augmented bin of rational coordinates. 
Lintzmayer, Miyazawa, and Xavier~\cite{lintzmayer2018two} used the structure of
solutions presented by \cite{MiyazawaPSSW16}, in combination with new ideas, to obtain a \textsf{PTAS} for the circle knapsack problem
under resource
augmentation in one dimension, assuming that the profit of each circle
equals its area. 
This was recently improved by Chagas, Dell\textquoteright Arriva, and Miyazawa~\cite{MiyazawaWAOA} to the earlier mentioned \textsf{PTAS} under resource augmentation in one dimension for spheres with arbitrary profits in any constant
dimension $d$. 
In addition, there have been many attempts to develop heuristics and
other optimization methods on circle packing, see e.g.,~\cite{szabo2007new,hifi2009literature,lubachevsky1997curved}.}

Recently, independently and parallel to our work, Chagas, Dell\textquoteright Arriva, and Miyazawa~\cite{ChagasNew} have also obtained similar results as the results we present in this paper. They have given a \textsf{PTAS} for the hypersphere multiple knapsack problem. Additionally, they can handle common constraints such as conflict constraints (some pairs of items cannot be packed together), multiple-choice constraints (given a subset of items, at most one of them can be packed), and capacity constraints (items additionally have associated weights, and the sum of weights of the packed items cannot exceed the weight capacity of the knapsack). Moreover, they extended the results to a wide range of convex fat objects, such as, ellipsoids, rhombi, and hyperspheres under $L_p$-norms. 
Furthermore, they provided a resource augmentation scheme for fat objects where rotations are allowed. Finally, they 
show that their framework can be extended to other problems, such as 
the cutting stock problem, the minimum-size bin packing problem, and the multiple strip packing problem.}

Geometric packing problems are also well-studied for rectangular objects. 
For geometric knapsack for axis-parallel rectangles (i.e., when $d=2$),
the best known polynomial time algorithm has an approximation ratio of $17/9+\eps$~\cite{galvez2017approximating}. 
There is a pseudo-polynomial time algorithm with a ratio of $4/3+\eps$
\cite{galvez2021improved} and a pseudo-polynomial time approximation scheme if we require guillotine-separable packing \cite{KhanMSW21}. If it is allowed to rotate the rectangles
by 90 degrees, there is also a polynomial time $(1.5+\eps)$-approximation
algorithm known~\cite{galvez2017approximating}. Moreover, the problem
admits a \textsf{QPTAS} if the input data are quasi-polynomially bounded integers~\cite{adamaszek2014quasi}.
For the settings of geometric bin packing with squares or (hyper-)cubes \cite{bansal2006bin} or skewed rectangles \cite{KhanS23},  
the problem \agtwo{admits} an asymptotic \textsf{PTAS}. In the
case of general rectangles, the best-known result is an asymptotic $1.405$-approximation~\cite{bansal2014improved}
but an asymptotic \textsf{PTAS} cannot exist unless $\mathsf{P=NP}$~\cite{ChlebikC09}.
Maximum independent set in geometric intersection graphs \cite{ChanH12, Mitchell21, GalvezKMMPW22} is another well-studied related problem. 

In a recent paper, Abrahamsen, Miltzow, and Seiferth~\cite{abrahamsen2020framework}
developed a framework to show that for many combinations of allowed
pieces, containers, and motions, the resulting packing problem is
$\exists\mathbb{R}$-complete. For example, they showed that it is
$\exists\mathbb{R}$-complete problem to decide if a set of convex
polygons with at most seven corners each can be packed into a square
if arbitrary rotations are allowed. 
However, it is not known if the setting of packing circles into a
square knapsack is $\exists\mathbb{R}$-complete.

There is also a large body of work on questions about the optimal packings of unit circles into unit squares or equilateral triangles.
We refer to \cite{fekete2023packing, fodor2003densest, goldberg1971packing, hifi2009literature} for an overview.

\subsection{Organization of this paper}
In Section \ref{sec:ptas-ra}, we discuss the \textsf{PTAS} when the input items are sufficiently small fat convex objects. In Section \ref{sec:spheres}, we give our \agtwo{algorithm} for spheres.
In Section \ref{sec:polygon}, we consider the case of convex polygons.  Finally, in Section \ref{sec:conc} we end with conclusions. 
Some proofs have been presented in the appendix. 
The corresponding lemmas and theorems are marked with~$(\star)$.

\section{\textsf{PTAS} under Resource Augmentation}
\label{sec:ptas-ra}
In this section, we present a \textsf{PTAS} when the input items are fat convex objects, and we are allowed to increase the size of the given knapsack by a factor of $1+\eps$ in each dimension. Chagas et al.~\cite{MiyazawaWAOA} presented a \textsf{PTAS} for circles with resource augmentation in one dimension. Their result is based on a combination of multiple integer programs with variables for different configurations for packing parts of the given knapsack.
Our result is arguably simpler and purely based on dynamic programming.

Let $P_i$ be a two-dimensional convex object and let $r_{i}^{\mathrm{out}}(P_i)$ and $r_{i}^{\mathrm{in}}(P_i)$ be
the radius of the smallest circle containing $P_{i}$ and the radius
of the largest circle contained in $P_i$, respectively. 
We will drop $P_i$ when it is clear from the context. 
We say that $P_i$ is $f$\emph{-fat} if  $r_{i}^{\mathrm{out}}/r_{i}^{\mathrm{in}}\le f$
for some value $f \ge 1$. In the remainder of this section, we prove the following theorem.

\begin{theorem}
	\label{thm:fat-ra}
	Let \ari{$f\ge 1$}, $\eps>0$, and $d\in \N$ be constants. Given a set of $d$-dimensional $f$-fat convex input objects, there exists a polynomial time algorithm that can pack a subset of them with a total profit of \ag{$w(\OPT)$} into a knapsack $K':=[0, 1+\eps]^d$, where \ag{$w(\OPT)$} is the optimal profit that can be packed into a knapsack $K:=[0, 1]^d$.
\end{theorem}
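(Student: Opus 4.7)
The plan is to prove Theorem~\ref{thm:fat-ra} by combining a hierarchical grid decomposition of the augmented knapsack $K'$ with a structural lemma and a dynamic program. First, fix a small constant $\mu = \mu(\eps,f,d) > 0$ and impose a nested sequence of axis-aligned grids on $K'$: at level $\ell \ge 0$, every cell has side length $\Theta(\mu^\ell)$, and each level-$\ell$ cell is subdivided into $(1/\mu)^d$ level-$(\ell+1)$ cells. Each input object $P_i$ is classified by a level $\ell_i$ so that $r_i^{\mathrm{out}}(P_i) \in (\mu^{\ell_i+1}, \mu^{\ell_i}]$. By $f$-fatness, $P_i$ then fits into a cube of side $O(\mu^{\ell_i})$ while occupying volume $\Omega((\mu^{\ell_i}/f)^d)$, so a constant number of level-$\ell_i$ cells suffice to cover it.

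The core structural lemma would state that, starting from any feasible packing of $\OPT$ inside $K$, one can produce a packing of the same set of objects inside $K' = [0, 1+\eps]^d$ in which every placed $P_i$ is contained in a union of $O(1)$ consecutive level-$\ell_i$ cells, and those cells do not intersect any other placed object. I would prove this by processing levels in increasing order of $\ell$ and translating each object locally so that it aligns with its level's grid; the displacement introduced at level $\ell$ has magnitude $O(\mu^\ell)$, and the geometric sum $\sum_{\ell \ge 0} O(\mu^\ell)$ is absorbed by the $\eps$-augmentation once $\mu$ is chosen small enough (in terms of $\eps,f,d$). Fatness is essential here: around each $f$-fat object the packing contains a ring of empty space proportional to its inner radius, which is precisely what makes the local translations feasible without creating new overlaps.

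Given the structural lemma, I would set up a dynamic program that processes grid cells in a bottom-up manner. For each cell $C$ at each relevant level $\ell$, the DP computes the maximum profit obtainable by packing objects of level $\ge \ell$ inside $C$. The recurrence at $(\ell, C)$ considers two kinds of moves: either assign one (or $O(1)$) objects of level $\ell$ to $C$, consuming $O(1)$ of its level-$(\ell+1)$ child cells, or skip and take the sum of the optimal values over all $(1/\mu)^d$ children. By the structural lemma there are only polynomially many candidate placements per cell, and the number of distinct $(\ell, C)$ pairs that can ever host a nontrivial object is polynomial in $n$ (each input object ``activates'' only $O(1)$ cells at its own level, and the hierarchy above these cells has bounded branching), so the DP runs in polynomial time.

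The main obstacle will be the structural lemma, specifically ensuring that the per-level shifts do not cascade into overlaps at deeper levels. I would handle this by a top-down induction: fix the placement of the largest objects first inside their level-$\ell_i$ cells, and at each finer scale use the ring of empty space guaranteed by $f$-fatness as the budget for the next local translation, exploiting the fact that the required displacements decay geometrically with $\ell$. Once the structural lemma is established, correctness of the DP follows by induction on $\ell$, and the polynomial runtime follows from the compact hierarchical state space described above.
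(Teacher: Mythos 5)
Your high-level plan — a hierarchical grid, a structural lemma showing a ``discretized'' packing exists in $K'$, and a DP that packs level by level — matches the paper's architecture, but several of the specific mechanisms you rely on do not hold as stated, and one of them would make the DP incorrect.

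First, fatness of the objects does \emph{not} give you a ``ring of empty space proportional to its inner radius'' around each packed object: $f$-fat convex objects (e.g., unit squares) can tile the plane with zero slack. The separation between objects has to be manufactured from the $\eps$-augmentation itself, by shrinking every object by a $1+\eps$ factor. The paper makes this precise with a separating-hyperplane argument (Lemma~\ref{lem:fatdisc}): the separating line between two shrunk objects can be thickened to a corridor wide enough that no grid cell is shared, and the amount of shrinking needed is controlled by fatness and the cell size $\dc$. Your claim about a pre-existing empty ring is the crux of your structural lemma, so this gap is not cosmetic.

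Second, your level classification uses contiguous bands $r^{\mathrm{out}}(P_i)\in(\mu^{\ell_i+1},\mu^{\ell_i}]$ with no gap between levels. This makes the alignment step unworkable: for the shift argument you need each level-$\ell$ object to be small relative to a level-$(\ell-1)$ cell (so a random grid shift misses it with high probability), and for the discretization you need it to be large relative to a level-$\ell$ cell. Objects near the boundary of a band cannot satisfy both. The paper resolves this by inserting a ``medium'' band between consecutive ``large'' bands, and showing via a shifting argument (Lemma~\ref{lem:medium-items}) that a choice of band boundaries exists for which the total area of medium objects is at most $\eps$; these are removed from the DP and repacked by NFDH in a thin extra strip. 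Without some such gap-creating step your induction over levels will fail.

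Third, and most importantly, the per-cell DP you propose is not sound. You define the value of a cell $C$ as ``the maximum profit obtainable by packing objects of level $\ge\ell$ inside $C$'' and combine children by summation. But the same input object may then be charged by two different sibling cells — nothing in the state prevents reuse. Fixing this by tracking which objects remain available blows the state up exponentially. The paper's DP avoids this: the state is $\DP[\ell,m]$ where $m$ counts the number of (interchangeable) level-$(\ell-1)$ cells available; it enumerates how many of these cells realize each of the constantly many configurations, and assigns the level-$\ell$ objects to the resulting slots by one global weighted bipartite matching, which inherently uses each object at most once. This is a genuinely different DP design, and it is what makes the running time $n^{O(C)}$ with $C$ constant.

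In summary: the hierarchical-grid-plus-DP skeleton is right, but you need to (i) derive object separation from the $1+\eps$ augmentation via a shrinking/separating-hyperplane argument rather than from fatness alone, (ii) introduce a medium band between levels and argue its objects contribute negligible area via a shifting argument, and (iii) replace the per-cell DP with a counting-based DP over the number of available cells plus a global matching step, so that no object is used twice.
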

\arir{OPT is value or set}

For simplicity, we first describe our algorithm in the setting where $d=2$.
Given a packing of a set of $f$-fat objects $\mathcal{P}=\{P_1, P_2, \dots, P_n\}$ in our knapsack $K=[0, 1] \times [0, 1]$, we want to show that there is also a structured packing of these objects \ari{into an augmented knapsack $K'=[0, 1+\eps]\times [0, 1+\eps]$}, defined via a discrete grid. Let $\dc>0$ be a constant to be defined later such that $1/\dc \in \mathbb{N}$. We place a two-dimensional grid inside \ari{$K'$} such that each grid cell has an edge length of $\dc$. Let $\mathcal{G}$ denote the set of all resulting grid cells. We assume first that each object $P_i \in \mathcal{P}$ is \emph{$\dl$-large}, meaning that $r^{in}(P_i) \ge \dl$ for some given constant $\dl>0$. We say that our given packing for $\mathcal{P}$ is {\em discretized} if there is a partition of $\mathcal{G}$ into sets $\{\mathcal{G}_{P_1}, \mathcal{G}_{P_2}, \dots, \mathcal{G}_{P_n}\}$ such that for each $P_i \in \P$, we have that $P_i$ is contained in the union of the cells in $\mathcal{G}_{P_i}$. Therefore, each object $P_i \in \P$ has its ``own'' set of grid cells $\mathcal{G}_{P_i}$ that contain $P_i$ and that do not intersect with any other object $P_j \in \P \setminus \{P_i\}$. 
{This is similar to the approach used by Miyazawa et al. \cite{MiyazawaPSSW16} for the circle bin packing problem which is also based on a hierarchical decomposition.}

\ari{We show that for an appropriate choice of $\dc$, there is a discretized packing for $\mathcal{P}$ in $K'$, i.e., if we can increase the size of our knapsack $K$ by a factor of $1+\eps$ in each dimension.}

\begin{lemma}
	\label{lem:fatdisc}
	For each $f \ge 1$, $\eps>0$, and $\dl>0$, there is a value $\dc>0$ such that for any set of $\dl$-large $f$-fat convex objects $\mathcal{P}$
	that can be placed non-overlappingly inside a knapsack $K=[0, 1] \times [0, 1]$, there is a discretized packing
	for $\mathcal{P}$ inside knapsack $\ag{K'= } [0, 1+\eps] \times [0, 1+\eps]$
	based on a grid in which each edge of each grid cell has a length of $\dc$.
\end{lemma}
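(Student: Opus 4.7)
The plan is to exhibit a translation of each $P_i$ so that, after the shift, any two objects are at Euclidean distance strictly greater than the diameter $\dc\sqrt{2}$ of a grid cell. Once this holds, every cell of $\G$ intersects at most one (translated) object, so the partition $\{\G_{P_1},\dots,\G_{P_n}\}$ can be built by assigning each cell to the unique $P_i$ it intersects, with cells meeting no object distributed arbitrarily; the requirement $P_i \subseteq \bigcup \G_{P_i}$ then follows because every cell touched by $P_i$ is assigned to $\G_{P_i}$.

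The translation I would use is an ``incenter expansion'': fix, for each $P_i$, a point $c_i$ that is the center of some inscribed ball of radius $r^{\mathrm{in}}(P_i) \ge \dl$, and replace $P_i$ by $P_i + \eps c_i$. The translated packing lies in $K' = [0,1+\eps]^2$, since $c_i \in P_i \subseteq [0,1]^2$ gives $P_i + \eps c_i \subseteq [0,1]^2 + [0,\eps]^2 \subseteq [0,1+\eps]^2$. The heart of the argument is to show that any two translated objects are at distance at least $2\eps\dl$. Given distinct $P_i, P_j$, pick a unit normal $\nu$ to a hyperplane separating them in the original packing, pointing from $P_i$ towards $P_j$. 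Since both objects contain an inscribed ball of radius at least $\dl$ around their incenters, projecting onto $\nu$ gives $\max_{p \in P_i}\nu\cdot p \ge \nu\cdot c_i + r^{\mathrm{in}}(P_i)$ and $\min_{q \in P_j}\nu\cdot q \le \nu\cdot c_j - r^{\mathrm{in}}(P_j)$, so separation forces $\nu\cdot(c_j - c_i) \ge r^{\mathrm{in}}(P_i) + r^{\mathrm{in}}(P_j) \ge 2\dl$. Under the translation, the gap between the $\nu$-projections of $P_i$ and $P_j$ grows by exactly $\eps\,\nu\cdot(c_j - c_i) \ge 2\eps\dl$, and a one-dimensional gap lower-bounds the Euclidean distance.

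It then suffices to choose $\dc$ as a reciprocal of a large enough positive integer (arranged, if necessary, so that $(1+\eps)/\dc$ is also an integer, so the grid tiles $K'$ cleanly) with $\dc < \sqrt{2}\,\eps\dl$. The cell diameter $\dc\sqrt{2}$ is then strictly smaller than the pairwise separation $2\eps\dl$, so no cell can intersect two distinct translated objects, completing the construction.

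The step I expect to be the main obstacle is obtaining the $2\eps\dl$ separation in a clean form. The crucial trick is to anchor the translation to the \emph{incenter} rather than an arbitrary reference point such as the centroid: the separating-hyperplane inequality then directly yields the linear bound $\nu\cdot(c_j - c_i) \ge r^{\mathrm{in}}(P_i) + r^{\mathrm{in}}(P_j)$, which scales cleanly by $\eps$ under translation. Interestingly, fatness itself (the $r^{\mathrm{out}}/r^{\mathrm{in}} \le f$ condition) plays no role in this lemma; only the $\dl$-largeness enters, which is consistent with the fact that the argument works for arbitrary $\dl$-large convex objects regardless of aspect ratio.
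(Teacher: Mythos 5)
Your proof is correct and follows the same core idea as the paper — move each object slightly, anchored at an incenter, and use a separating hyperplane to get a pairwise gap that beats the cell diameter — but your route to the gap bound is cleaner. The paper works in a shrunk frame (it argues that enlarging the knapsack is equivalent to contracting each $P_i$ around its incenter by $1/(1+\eps)$) and then invokes fatness to bound the angle $\theta \ge \sin^{-1}(1/2f)$ between the separating line $\Psi$ and the segment from the incenter to the tangency point $p_v$; this angle estimate is what forces $f$ into their choice $\dc \le \eps\dl/(2\sqrt 2 f)$. You instead translate each $P_i$ by $\eps c_i$ (equivalent to the paper's shrinking after a global $(1+\eps)$-dilation), and project directly onto the separating normal $\nu$: the inscribed balls immediately give $\nu\cdot(c_j - c_i) \ge r^{\mathrm{in}}(P_i)+r^{\mathrm{in}}(P_j) \ge 2\dl$, so the translated objects separate by $2\eps\dl$ with no angle computation and no dependence on $f$. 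Your closing observation is therefore accurate: fatness plays no role in this lemma, only $\dl$-largeness, and your $\dc < \sqrt 2\,\eps\dl$ is a strictly simpler and $f$-free choice compared to the paper's. The only loose thread, in both your write-up and the paper's, is the arithmetic side condition that the grid with cell side $\dc$ actually tiles $[0,1+\eps]^2$; your parenthetical "arranged, if necessary" is the right fix, and one can make it precise by taking $\eps$ rational and $\dc$ a common divisor of $1$ and $1+\eps$.
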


\begin{proof}
	\begin{figure}
		\centering \includegraphics[width=0.6\textwidth]{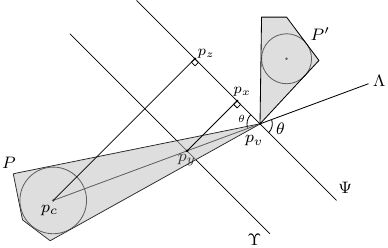}
		\caption{\ag{Line $\Psi$ separates the two $f$-fat and convex objects $P$ and $P'$. We construct line $\Upsilon$ such that it intersects no common grid cells with line $\Psi$. We proceed to shrink the two objects $P, P'$ by a factor of $1+\eps$ such that they cannot intersect the space between lines $\Upsilon$ and $\Psi$.}}
		\label{fig:hyperplane}
	\end{figure}

	\begin{figure}
		\centering \includegraphics[width=0.6\textwidth]{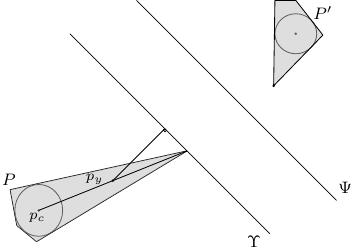}
		\caption{\ag{$P,P'$ are shrunk so that they cannot intersect the space between the lines $\Upsilon$ and $\Psi$.}}
		\label{fig:zoomed}
	\end{figure}
	
	Let $P, P' \in \mathcal{P}$ be two $f$-fat convex $\dl$-large objects packed inside the knapsack.
	Then by {\em separating hyperplane theorem for convex objects}~\cite{boyd2004convex}, we know that there is a line $\Psi$ containing a point $p_v$ on \agtwo{the} boundary of $P$, and $\Psi$ separates $P$ from $P'$  (see Figure \ref{fig:hyperplane}). 
	Now, intuitively, increasing the size
	of the knapsack by a factor of $1+\eps$ is equivalent to shrinking
	the objects in $\mathcal{P}$ by a factor of $1+\eps$. So, we want to find the right constraints such that after shrinking $P$ and $P'$ do not share any grid cell. 
	Let the center of the incircle (of radius $r^{in}$) contained in $P$ be $p_c$ and the line joining $p_c$ and $p_v$ be $\Lambda$. 
	Let the foot of the image of the point $p_c$ on line $\Psi$ be the point $p_z$.  Now the length of line segment $\overline{p_c p_z}  := |\overline{p_c p_z}| \geq r^{in}$ and $|\overline{p_c p_v}| \leq 2 f r^{in}$, as the diameter of $P$ is at most $2 f r^{in}$. Hence, the angle $\theta$ between $\Psi$ and $\Lambda$ is at least ${\sin}^{-1}(\frac{1}{2f})$. Consider points $p_x$ on $\Psi$ and $p_y$ on $\Lambda$ such that $|\overline{p_x p_y}| = \sqrt2 \dc$ and the line  joining $p_x, p_y$ is parallel to the line joining $p_c p_z$. Let $\Upsilon$ be the line passing through $p_y$ and parallel to $\Psi$.  
	Then any point on $\Psi$ does not share a grid cell with any point on $\Upsilon$. 
	Also, $|\overline{p_y p_v}| =  \frac{|\overline{p_c p_v}|}{|\overline{p_c p_z}| } \cdot  {|\overline{p_x p_y}| } \le \frac{2 f r^{in}}{r^{in}} \sqrt2 \dc \le 2 \sqrt2 f \dc$.
	Now we want to shrink  $P$ by $(1+\eps)$ factor keeping $p_c$ at the same position such that the shrunk version of $P$ lies completely within one side of $\Upsilon$ \ag{(see Figure \ref{fig:zoomed})}.
	After shrinking, $\overline{p_c p_v}$ gets smaller by $\eps |\overline{p_c p_v}| \ge \eps r^{in} \ge \eps \dl$.
	Now we choose $\dc$ such that $\dc \le \frac{\eps}{2 \sqrt2 f} \cdot \dl$.
	Thus we satisfy $\eps |\overline{p_c p_v}| \ge \eps \dl \ge  2 \sqrt2 f \dc  \ge  |\overline{p_y p_v}|$ and this ensures that the shrunk down version of $P$ and $P'$ do not share any grid cell.  
	We assign each polygon to the grid cells that it intersects with. 
	Hence this process leads to a discretization such that the no grid cell is intersected by two polygons.
\end{proof}

\begin{figure}
	\begin{center}
		\includegraphics[width=14cm]{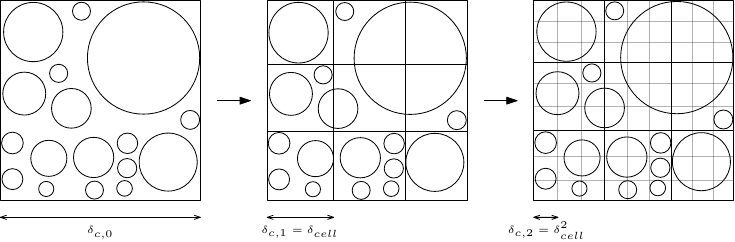}
		\caption{An illustration of a hierarchical grid. The second level gridcells in the example define a `discretized' packing of the circles, i.e., no two circles share a common gridcell.}
		\label{fig:discretized_packing_example}
	\end{center}
\end{figure}

Next, we argue that there is also a structured packing for fat objects
that are not necessarily all (relatively) large. Let $\dl,\dc,\ds>0$
be constants to be defined later (they will depend on $\eps$
which will denote the amount by which we increase the size
of our knapsack). We place now a \emph{hierarchical} two-dimensional grid
with multiple levels. We define that
the whole knapsack $K$ is one grid cell of level 0 of side length
$\delta_{c,0}:=1$. For each level $\ell\ge1$, we define grid cells
whose edges all have a length of $\delta_{c,\ell}:=\dc\delta_{c,{(\ell-1)}}$.
Recursively, for each $\ell\ge1$ 
we partition each grid cell of level $\ell-1$ into ${1/\dc^{2}}$
grid cells of level $\ell$ with side length $\delta_{c,\ell}$ each. See Figure  \ref{fig:discretized_packing_example}.
Similarly as before, we want that there is a partition of the grid
cells such that for each object $P\in\P$ there is a set of grid cells
$\G_{P}$ that contain $P$ and that are disjoint from the grid cells
$\G_{P'}$ for each object $P'\in\P\setminus\{P\}$. Also, we want
that all grid cells in $\G_{P}$ are of the same level, that their
size is comparable to the size of $P$, and that the number of grid
cells in $\G_{P}$ is bounded. To ensure this, we group the objects
$P\in\P$ according to their respective values $r^{\ag{in}}(P)$ which
we use as a proxy for their sizes. Formally, we define $\delta_{small,0}:=1$
and for each level $\ell\ge1$ we define $\delta_{large,\ell}:=\dl{\delta_{c,{(\ell-1)}}}$
and $\delta_{small,\ell}:=\ds\delta_{c,{(\ell-1)}}$. %

For each level $\ell$ we define
\begin{itemize}
 \item $L_{\ell}$ to be all objects
$P\in\P$ with $r^{\ag{in}}(P)\in(\delta_{large,\ell},\delta_{small,{(\ell-1)}}]$;
intuitively, they are ``large'' for level $\ell$,
\item $M_{\ell}$ to be all objects $P\in\P$ with $r^{in}(P)\in(\delta_{small,\ell},\delta_{large,\ell}]$.
\end{itemize}

\begin{figure}
    \centering
    \includegraphics[width=\textwidth]{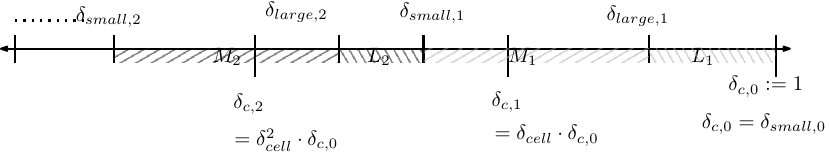}
    \caption{Grouping of objects based on \(\delta_{cell},\delta_{small},\delta_{large}\) and the corresponding Large and Medium sets. The line is not to scale.}
    \label{fig:large-medium-discretized}
\end{figure}

Note that our grid and the sets $L_{\ell}$ and $M_{\ell}$ depend
on the (initial) choice of $\dl, \ds$ and $\dc$. (See Figure~\ref{fig:large-medium-discretized} for an illustration.) In the next lemma, we show via
a shifting argument that there are choices for these values such that the
total area of the objects in $\bigcup_{\ell}M_{\ell}$ is very small.
This will allow us later to pack them separately via a simple greedy
algorithm. In particular, these choices are from a set of $O_\eps(1)$ candidate values $D$, and thus
we will be able to guess them later easily.

\begin{lemma}\label{lem:medium-items}Let $f\ge 1$. There is a global set $D$ with $|D|\le O_{\eps,f}(1)$
such that for any set of $f$-fat convex objects $\P$ that can be packed
in a knapsack $K=[0,1]\times[0,1]$, there are values $\dl,\ds,\dc>0$
that are all contained in $D$
such that for the resulting hierarchical grid and the corresponding
sets $\left\{ L_{\ell},M_{\ell}\right\} _{\ell}$ we have that the
total area of all objects in $\bigcup_{\ell}M_{\ell}$ is bounded
by $\eps$ (and thus corresponding in-radius is less than $\eps$).
\end{lemma}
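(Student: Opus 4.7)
The plan is to use a shifting (averaging) argument to find, from a small universal candidate set, parameters under which the medium items carry little area. Let $T:=\lceil 1/\eps\rceil$. First I would fix a single constant $\dc\in(0,1)$ with $1/\dc\in\N$ that is small enough so that Lemma~\ref{lem:fatdisc}'s discretization constraint $\dc\le \eps\,\dl/(2\sqrt{2}\,f)$ holds for \emph{every} value of $\dl$ appearing in the candidate family defined below. Concretely, one may start from a small base $\dc_0$ with $1/\dc_0\in\N$ and set $\dc:=\dc_0^{T}$.

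For each shift $t\in\{0,1,\dots,T-1\}$, define the candidate pair $(\dl_t,\ds_t):=(\dc^{t/T},\dc^{(t+1)/T})$; note $\ds_t=\dc\cdot\dl_t$, so the compatibility $\ds\ge\dc\cdot\dl$ needed for consecutive levels to fit together is automatic. Under choice $t$, the medium class at level $\ell$ becomes
\[
M_\ell^{(t)}=\bigl\{P\in\P : r^{\mathrm{in}}(P)\in\bigl(\dc^{\ell-1+(t+1)/T},\,\dc^{\ell-1+t/T}\bigr]\bigr\}.
\]
The combinatorial heart of the argument is the observation that, as $(\ell,t)$ ranges over positive integers $\ell$ and $t\in\{0,\dots,T-1\}$, the exponent $\ell-1+t/T$ takes each value of the form $k/T$ with $k\in\N\cup\{0\}$ exactly once. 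Consequently the corresponding half-open intervals tile $(0,1]$ disjointly, and every $P\in\P$ belongs to $M_\ell^{(t)}$ for exactly one pair $(\ell,t)$.

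It follows that the sum over $t\in\{0,\dots,T-1\}$ of the total area of medium objects equals the total area of $\P$, which is at most the area of the knapsack, i.e.\ $1$. By averaging, some $t^\star\in\{0,\dots,T-1\}$ satisfies $\sum_\ell\mathrm{area}(M_\ell^{(t^\star)})\le 1/T\le\eps$. Taking $D:=\{\dc\}\cup\{\dc^{k/T} : 0\le k\le T\}$ gives $|D|=T+2=O_\eps(1)$, and the triple $\dc,\dl_{t^\star},\ds_{t^\star}$ lies in $D$. For a general constant dimension $d$ the argument is unchanged, with ``area'' replaced by ``volume'' (still bounded by $1$).

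The main subtlety is the interplay of constraints: $\dc$ must be chosen small enough \emph{uniformly} over the whole shift family, since we do not know in advance which $\dl_{t^\star}$ the averaging will select. The worst case is the smallest candidate $\dl_t=\dc^{(T-1)/T}$, for which the constraint reduces to $\dc^{1/T}\le \eps/(2\sqrt 2\,f)$; this is arranged by the choice $\dc=\dc_0^T$ above. Other than this bookkeeping, the proof is a clean log-scale shifting argument.
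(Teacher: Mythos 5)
Your averaging argument is fine as far as it goes: with $\dl_t=\dc^{t/T}$ and $\ds_t=\dc^{(t+1)/T}$ the medium intervals $M_\ell^{(t)}=\bigl(\dc^{\ell-1+(t+1)/T},\dc^{\ell-1+t/T}\bigr]$ do tile $(0,1]$ as $(\ell,t)$ ranges, every object lands in exactly one of them, and some shift $t^\star$ gives total medium area at most $1/T\le\eps$. But there are two problems, one small and one serious.

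The small one: your parenthetical ``$\ds_t=\dc\cdot\dl_t$'' is false; you actually have $\ds_t=\dc^{1/T}\cdot\dl_t=\dc_0\cdot\dl_t$. The compatibility $\dl_t\,\dc<\ds_t$ still holds since $\dc<\dc^{1/T}$, so the $L_\ell$ intervals are nonempty, but the stated identity is wrong.

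The serious one: your candidate family violates the constraint $\ds\ll\dc$ that the hierarchical grid and Lemma~\ref{lem:fatdiscrete} require, and which is the whole point of choosing these parameters. Concretely, $\ds_t/\dc=\dc^{(t+1)/T-1}\ge 1$ for every $t\in\{0,\dots,T-1\}$, with equality only at $t=T-1$; so $\ds_t\ge\dc$ always. But an object of level $\ell$ has $r^{\mathrm{in}}\le\delta_{\mathrm{small},\ell-1}=\ds\,\dc^{\ell-2}$ and must fit inside a single level-$(\ell-1)$ cell of side $\dc^{\ell-1}$, which already forces $2f\ds\le\dc$; and the shifting step inside the proof of Lemma~\ref{lem:fatdiscrete} explicitly uses $8\ds/\dc\le\eps^2/2$, i.e.\ $\ds\le\eps^2\dc/16$. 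You did notice you need $\dc\le\eps\dl/(2\sqrt2 f)$ from Lemma~\ref{lem:fatdisc} and arranged it by taking $\dc=\dc_0^T$, but you never check the companion constraint between $\ds$ and $\dc$, and your construction cannot satisfy it. The paper avoids this entirely by varying $\dl$ geometrically while keeping the two \emph{ratios} fixed at small constants, $\dc/\dl=\gamma$ and $\ds/\dc=\beta$, so that $\ds\le\beta\dc$ and $\dc\le\gamma\dl$ hold automatically for every candidate; the shifting is then done over the value of $\dl$ (equivalently $k$), and $\dc,\ds$ shift along with it. To repair your proof you would need to fix the ratios rather than $\dc$ itself — for instance set $\ds_t:=\beta\dc_t$, $\dc_t:=\gamma\dl_t$ for a geometric family of $\dl_t$'s — which is essentially the paper's route.
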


\begin{proof}
This follows from a shifting argument similar to Lemma \ref{lem:shifting}. 
Say, we need $\ds \le \dc \le \dl$ such that $\ds \le \beta \dc, \dc \le \gamma \dl$, for some constants $\beta, \gamma \in (0,1]$. 

Define $\rho_0 := 1$, 
and $\rhoi{k} := \beta \gamma \cdot \rho_{k-1}$. For each integer $k \in [2/\eps]$ 
\arir{to make sure $\dl, \dc$ is sufficiently small such that medium items have small radius and can fit in the augmented region of $\eps$ width. Another simpler option might be just to show the area of medium to be $\eps^2$ instead of $\eps$, by using more classes.
We can also start with $\rho_0=\eps^2$ but then $\delta_{c,\ell} \neq \dc \delta_{c,\ell-1}$ for all $\ell$.}
we define the following sets of items as shown in \Cref{fig:split}:
\begin{align*}
\mathcal{C}_k = \{C_i \in \mathcal{P} \mid r^{in}(C_i) \in \bigcup_{j=1}^{\infty} (\rho_{k},\rho_{k-1}]\cdot (\beta \gamma)^{(2j/\eps)}\}.
\end{align*}

Then, by averaging argument there is a $k \in \{\frac{1}{\eps}+1, \frac{1}{\eps}+2, \ldots, \frac{2}{\eps}\}$  such that the area of $\mathcal{C}_k$ is at most $\eps$.
Define $\dl:=\rho_{k-1}, \dc:=\gamma \dl, \ds:=\beta \dc=\beta \gamma \dl=\beta \gamma \rho_{k-1}=\rho_k$. 
Define $D:= \{{(\beta \gamma)}^{2/\eps}, {(\beta \gamma)}^{2/\eps -1 }\gamma, {(\beta \gamma)}^{(2/\eps)-1}, \dots,  (\beta \gamma)^{1/\eps+1},  (\beta \gamma)^{1/\eps} \gamma, (\beta \gamma)^{1/\eps})\}$.
Clearly, any chosen $\ds, \dc, \dl$ belong to $D$.
Also, as $k \in \{\frac{1}{\eps}+1, \frac{1}{\eps}+2, \cdots, \frac{2}{\eps}\}$, we have $\dc,\dl \le {(\beta \gamma)}^{1/\eps}$. 

Later we will see that choosing \ag{$\beta=\eps^2/16, \gamma=\eps/(72f)$} suffices for our purpose (see proof of \Cref{lem:fatdiscrete}). 
This also ensures that any object in $\cup_{\ell} M_\ell$ has outradius (and thus inradius)  at most $\eps$. 
\end{proof}
\begin{figure}
	\centering \includegraphics[width=\textwidth]{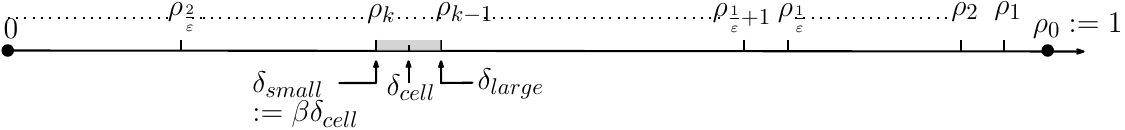}
	\caption{Splitting of sets $\mathcal{C}_k$ on the real line (not to scale) as used in \Cref{lem:medium-items}.}
	\label{fig:split}
\end{figure}

We generalize now our notion of discretized packings. Intuitively,
like before, we require that there is a partition of the grid cells
such that for each object $P\in\P$ there is a set of grid cells $\G_{P}$
that contain $P$ and that are disjoint from the grid cells $\G_{P'}$
for each object $P'\in\P\setminus\{P\}$. Formally, we define that
our packing of $\mathcal{P}$ is \emph{discretized} if
\begin{itemize}
\item for each level $\ell$ and for each object $P\in\mathcal{P}\cap L_{\ell}$
there is a set of $O(1/\dc^{2})$ grid cells
$\mathcal{G}_{P}$
of level $\ell$ such that $P$ is contained in $\mathcal{G}_{P}$,
and there is a single grid cell of level $\ell-1$ that contains all
grid cells in $\mathcal{G}_{P}$, and
\item for any two objects $P,P'\in\mathcal{P}$ (not necessarily of the
same level) and for any two grid cells $C\in\mathcal{G}_{P}$ and
$C'\in\mathcal{G}_{P}'$ their relative interiors are disjoint.
\end{itemize}
We show that by increasing the size of our knapsack by a factor of
$1+\eps$, there is a discretized packing for all objects in {$\bigcup_{\ell}L_{\ell}$}.
As mentioned above, we will pack the objects in {$\bigcup_{\ell}M_{\ell}$}
separately later. %

\begin{lemma}
	\label{lem:fatdiscrete} Let each $f\ge1$ and $\eps>0$.
There is a global set $D$ with $|D| \le O_{\eps,f}(1)$ such that for
any set of $f$-fat objects $\mathcal{P}$ that can be placed non-overlappingly
inside a knapsack $K=[0,1]\times[0,1]$, there is a choice for the
grid with parameters $\dl,\ds,\dc>0$ such that all these values are
contained in $D$ and there is a discretized packing for $\mathcal{P}\cap (\bigcup_{\ell}L_{\ell})$
inside an (augmented) knapsack \ari{$[0,1+O(\eps)]\times[0,1+O(\eps)]$}. \end{lemma}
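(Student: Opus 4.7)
The plan is to fix the grid parameters via Lemma \ref{lem:medium-items} and then to discretize the input level-by-level via a top-down recursive application of Lemma \ref{lem:fatdisc}.

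First, I invoke Lemma \ref{lem:medium-items} to choose $\dl,\ds,\dc$ from the global set $D$ such that the total area of $\bigcup_\ell M_\ell$ is at most $\eps$. Enlarging $D$ by an additional $O_\eps(1)$ candidates, I may further assume the separation $2f\ds<\dc$; this guarantees that every object of level $\ell+1$ or finer has diameter strictly less than $\delta_{c,\ell}$, and hence is contained in a single level-$\ell$ grid cell. Let $L$ denote the largest $\ell$ with $L_\ell\ne\emptyset$, which is finite because the input has only $n$ objects.

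Next, I discretize level by level from $\ell=1$ up to $\ell=L$. Inductively, suppose that after processing levels $1,\dots,\ell-1$ there is a discretized packing of $\bigcup_{\ell'<\ell}L_{\ell'}$ in an enlarged knapsack of side at most $1+(\ell-1)\eps/L$, and that all remaining objects lie inside some collection of \emph{available} (i.e., unassigned) level-$(\ell-1)$ cells. For each such cell $C$, rescale $C$ to $[0,1]^2$; in the rescaled coordinates every object of $L_\ell\cap C$ has inradius at least $\dl$ and is therefore $\dl$-large. Apply Lemma \ref{lem:fatdisc} inside $C$ with expansion parameter $\eps/L$: this produces a discretized packing of $L_\ell\cap C$ into grid cells of side $\delta_{c,\ell}$ inside a $(1+\eps/L)$-expanded copy of $C$. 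Each object $P\in L_\ell\cap C$ is assigned the $O(1/\dc^2)$ level-$\ell$ cells it intersects, and the recursion continues on the remaining non-assigned level-$\ell$ cells of $C$ (which contain all yet-unprocessed objects inside $C$).

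The hard part is to verify simultaneously that (i) the per-level expansions aggregate to at most $O(\eps)$, and that (ii) the finer-level objects remain packable after the level-$\ell$ discretization. For (i), at level $\ell$ each of the $(1/\delta_{c,\ell-1})^2$ level-$(\ell-1)$ cells expands by $(\eps/L)\delta_{c,\ell-1}$ per direction, which contributes at most $\eps/L$ to the side length of the knapsack; summing over the $L$ levels yields the desired total expansion of $\eps$. For (ii), the choice $2f\ds<\dc$ forces every finer-level object inside $C$ to have diameter less than one level-$\ell$ cell, so it lies inside a unique level-$\ell$ cell of $C$; interpreting the expansion of $C$ equivalently as a shrinking of each $P\in L_\ell$ toward its incircle center (as in the proof of Lemma \ref{lem:fatdisc}), the finer objects do not need to move and still lie outside the cells $\mathcal{G}_P$ assigned to $L_\ell$ objects. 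Disjointness across levels holds because every $\mathcal{G}_P$ is contained in the unique level-$(\ell-1)$ cell that originally contained $P$, and the recursion descends only into the non-assigned cells. Iterating this construction for $\ell=1,\dots,L$ produces the claimed discretized packing of $\mathcal{P}\cap\bigcup_\ell L_\ell$ inside a knapsack of side $1+O(\eps)$.
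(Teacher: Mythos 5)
Your proposal diverges from the paper's proof in a way that creates genuine gaps; the level-by-level recursive application of \Cref{lem:fatdisc} does not substitute for the paper's shifting-plus-repacking argument.

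First, the claim that a bound $2f\ds<\dc$ forces every object of a finer level to be contained in a single level-$\ell$ grid cell is false. Having diameter strictly less than $\delta_{c,\ell}$ only bounds the object's extent; a finer-level object may still straddle a grid line of level $\ell$ (indeed, a positive-area family of placements does). The paper explicitly handles this with a random shift of the grid, showing that the expected total area of objects cut by grid lines is small (via the constraint $\ds\le\eps^2\dc/16$), and then \emph{repacks} those cut objects into the augmented strip. Your proof has no analogue of this step.

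Second, the claim that after shrinking $P\in L_\ell$ the finer objects ``do not need to move and still lie outside the cells $\mathcal{G}_P$'' does not hold. The set $\mathcal{G}_P$ consists of all level-$\ell$ cells intersecting the shrunken $P$, and a boundary cell of $\mathcal{G}_P$ may extend beyond $P$ far enough to contain a finer object that abuts $P$. \Cref{lem:fatdisc} only guarantees that two objects of the \emph{same} level, both shrunk, do not share a cell; it gives no such guarantee between a shrunk level-$\ell$ object and an un-shrunk finer object. This is precisely why the paper introduces ``problematic'' cells: it moves every finer object intersecting such a cell into the augmentation region, and bounds the total area of these moved objects via a perimeter-versus-area argument that uses fatness (the number of cells meeting $\partial P$ is small relative to the number of cells inside $P$).

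Third, choosing the per-level expansion parameter $\eps/L$ and then applying \Cref{lem:fatdisc} with that parameter forces $\dc\le\Theta\bigl(\tfrac{\eps}{fL}\bigr)\dl$. The number of occupied levels $L$ is input-dependent (it can grow like $\log n / \log(1/\dc)$ under polynomially bounded data), so this makes $\dc$ depend on the instance and contradicts the requirement that $\dl,\ds,\dc$ be drawn from a fixed global set $D$ with $|D|\le O_\eps(1)$. The paper avoids this by shrinking \emph{every} object once by the same factor $1+\eps$: the induced separation between two level-$\ell$ objects is then proportional to $\eps\,\delta_{large,\ell}$, which scales with the cell size $\delta_{c,\ell}$ at the same rate for all levels simultaneously, so a single choice of $\dc$ (independent of $L$) suffices.
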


\begin{proof}
	We start with the given packing of $\mathcal{P}$ in $K$ and do a
	sequence of refinements which leads to our discretized packing for
	$\mathcal{P}\cap\bigcup_{\ell}L_{\ell}$. First, we use the increased
	size of the knapsack to ensure that for each level $\ell$ and any
	two objects $P,P'\in L_{\ell}$, the distance between $P$ and $P'$
	is at least $2\delta_{c,\ell}$. Intuitively, increasing the size
	of the knapsack by a factor of $1+\eps$ is equivalent to shrinking
	the objects in $\mathcal{P}$ by a factor of $1+\eps$. Therefore,
	we can achieve this required minimum distance of $2\delta_{c,\ell}$
	by choosing $\dc$ \ag{appropriately according to the multiple constraints given in the proof of  \Cref{lem:fatdisc}}.

	Next, we would like that for each level $\ell$, each object in $\mathcal{P}\cap L_{\ell}$
	is contained in a grid cell of level $\ell-1$. This might not be
	the case, however, via a shifting argument (giving the grid a random
	shift) we can argue that this is the case for almost all objects in
	$\mathcal{P}$.
 Any object in $\mathcal{P}\cap L_{\ell}$ has diameter at most $2f \delta_{small,\ell-1}$. 
	\ag{Hence, the probability that an object in level $\ell$ intersects a grid line from level $\ell-1$ is at most \ari{$8f\delta_{small,\ell-1}/\delta_{cell,\ell-1}= 8f\ds/\dc$}.
		Let this probability be smaller than $\eps^2/2$, leading to a constraint \ari{$\ds \leq \eps^2\dc/16f$} on the choice of $\dc,\ds$. Then the total area of intersecting objects must also be smaller than $\eps^2$ as the area of all packed objects can be at most the area of \agtwo{the} augmented knapsack.
	}
 
	Thus we can easily pack these intersected objects into extra
	space that we gain via increasing the size of the knapsack (i.e.,
	for a second time).
	
	After this preparation, we process the objects $P\in\mathcal{P}$
	level by level and define their corresponding sets $\mathcal{G}_{P}$,
	starting with the highest level. Consider a level $\ell$. For each
	object $P\in\mathcal{P}$ of level $\ell$ we define $\mathcal{G}_{P}$
	to be the set of all grid cells of level $\ell$ that intersect with
	$P$. Due to our minimum distance between any two objects in $\mathcal{P}$
	of level $\ell$, for any two different objects $P,P'\in\mathcal{P}$
	of level $\ell$ we have that $\mathcal{G}_{P}\cap\mathcal{G}_{P'}=\emptyset$.
	Now it could be that a cell $\texttt{C} \in\mathcal{G}_{P}$ for some $P\in\mathcal{P}$
	intersects not only with $P$, but also with another object $P'\in\mathcal{P}$
	of some deeper level $\ell'>\ell$. We call such a cell $\texttt{C}$ \emph{problematic};
	recall that we wanted the cells in $\mathcal{G}_{P}$ to be used exclusively
	by $P$. Therefore, we move all objects $P'\in\mathcal{P}$ of some
	level $\ell'>\ell$ that intersect a problematic grid \agtwo{cell} in $\mathcal{G}_{P}$.
	We pack them into extra space that we gain due to resource augmentation.
	We do this operation for all levels $\ell$.
	
	In the process above, we move objects that intersect the problematic grid
	cells. We need to argue that the total area of these moved objects
	is small compared to the size of the knapsack and that, therefore,
	we can pack them into additional space that we gain due to our resource
	augmentation. In particular, we need to argue this globally, over
	all levels. The key insight is that if we define a set $\mathcal{G}_{P}$
	for some object $P\in\mathcal{P}$ as above, then each problematic
	cell $\texttt{C}  \in\mathcal{G}_{P}$ must intersect the boundary of $P$ and,
	since $P$ is fat, the number of problematic cells $ \texttt{C} \in\mathcal{G}_{P}$
	is very small compared to the number of cells $\texttt{C}' \in\mathcal{G}_{P}$
	that are contained in $P$ and, thus, for sure \emph{not }problematic.
	
	\arir{Check these calculations}
	\ari{By the classical {\em Barbier's theorem}, we know  the perimeter of a convex set $P$ of level $\ell$ is at most $\pi \cdot \text{diameter}(P) \le 2\pi r^{out} \le 2\pi f r^{in}$.}
	\ag{A curve of length $\delta_{cell, \ell}$ is bound to be contained inside a circle of radius $\delta_{cell, \ell}$. This implies that this curve can intersect at most $9$ grid cells as any circle of radius $\delta_{cell, \ell}$ can be bounded in a $3\times 3$ grid cells of side length $\delta_{cell, \ell}$.}
	\ag{
		Hence, the number of grid cells (of level $\ell$) $N_1$ that the perimeter can intersect is at most $18\pi f r^{in}/\delta_{cell,\ell}$. }
	\ari{
		On the other hand, $P$ completely contains at least grid cells of area $\pi (r^{in}-2\delta_{cell, \ell})^2$, 
		i.e., the number of such gridcells $N_2$ is at least $\frac{\pi (r^{in}-2\delta_{cell, \ell})^2}{(\delta_{cell, \ell})^2}$.
		We need $N_1 \le \eps N_2$.}
	\ag{Equivalently, we want to show, $\delta_{cell, \ell} \le \frac{\eps}{18f} \cdot \frac{(r^{in}-2\delta_{cell, \ell})^2}{r^{in}}$.
		For this we impose the condition that $\dl \ge \frac{72f}{\eps} \dc$.
		Then, $\frac{\eps}{18f} \cdot \frac{(r^{in}-2\delta_{cell, \ell})^2}{r^{in}} \ge \frac{\eps}{18f} \cdot \frac{(r^{in}/2)^2}{r^{in}}\ge  \frac{\eps}{18f} \cdot \frac{\delta_{large, \ell}}{4} \ge \delta_{cell, \ell}$.}

	Using this, we derive a \emph{global }argumentation, stating that
	the total area of \emph{all} problematic grid cells over all objects
	of all levels is at most an $\eps$-fraction of the area of the
	knapsack. Also, if an object $P'$ of some level $\ell'$ intersects
	a problematic grid cell $\texttt{C}$ of some level $\ell<\ell'$, then $P'$
	is very small compared to $\texttt{C}$. Thus, the total area of these objects
	intersecting a problematic grid cell $\texttt{C}$ is essentially the same
	as the area of $\texttt{C}$. Thus, we can pack all these objects into our
	additional space due to resource augmentation.
	
	Finally, we can afford to increase the space of our knapsack such
	that this additional space is even by a constant factor larger than
	the total area of the objects we need to pack into it. Therefore,
	it is easy to find a discretized packing for them in this extra space.
\end{proof}

One can easily extend the above lemma to the case of $O(1)$ number of knapsacks.
\begin{corollary}
\label{corr:fatdiscrete} Let $f\ge1$, $\eps>0$, and $c_m$ be constants.
There is a global set $D$ with $|D|$ to be $ O(1)$ (the constant is independent of $n$ and only depends on $\eps, f, c_m$) such that for
any set of $f$-fat objects $\mathcal{P}$ that can be placed non-overlappingly
inside $c_m$ square knapsacks with side length 1, there is a choice for the
grid with parameters $\dl,\ds,\dc>0$ such that all these values are
contained in $D$ and there is a discretized packing for $\mathcal{P}\cap (\bigcup_{\ell}L_{\ell})$
inside $c_m$ (augmented) knapsacks of side length $1+\eps$ each. 
\end{corollary}

\noindent \textbf{Algorithm.} \ari{Now we describe our algorithm. First, we {\em correctly} guess (i.e., by brute-force enumeration of all possible cases) the values $\dl,\ds,\dc>0$ from set $D$
 due to Lemma~\ref{lem:fatdiscrete}. 
Note that we still do not know $\bigcup_{\ell} L_{\ell}$ or $\bigcup_{\ell} M_{\ell}$, i.e., which objects are there in the optimal packing. 
So, for each level $\ell$ we define
$\tilde{L}_{\ell}$ to be all input objects
$P$ with $r^{\ag{in}}(P)\in(\delta_{large,\ell},\delta_{small,{(\ell-1)}}]$ and  
$\tilde{M}_{\ell}$ to be all input objects $P$ with $r^{in}(P)\in(\delta_{small,\ell},\delta_{large,\ell}]$.}
\arir{Used $\tilde{L}_{\ell}$ instead of ${L}_{\ell}$. As  ${L}_{\ell}$ refers to items in optimal packing. Please check the whole para.}

Then, we compute an optimal
discretized packing via a dynamic program. Intuitively, our DP computes
an optimal subset of $\bigcup_{\ell}\tilde{L}_{\ell}$ for which there is a discretized
packing. 
We introduce a DP-cell 
$\DP[\ell,m]$ for each combination of a level $\ell$ and a value
$m\in\{1,...,n\}$. This cell corresponds to the subproblem of packing
a maximum profit subset of the objects in $\tilde{L}_{\ell},\tilde{L}_{\ell+1},\tilde{L}_{\ell+2},\dots.$
via a discretized packing into at most $m$ grid cells of level $\ell-1$,
i.e., with side length $\delta_{c,(\ell-1)}$ each. Recall that each
object in $\tilde{L}_{\ell}$ is relatively large compared to the grid cells
of level $\ell-1$. Therefore, we can pack only constantly many items
from $\tilde{L}_{\ell}$ into each of these $m$ grid cells of level $\ell-1$.
Hence, there are only constantly many options for how the set $\mathcal{G}_{P}$
of an object $P\in \tilde{L}_{\ell}$ in the optimal solution to our subproblem
can look like. We say that a \emph{configuration} is a
partition of a grid cell of level $\ell-1$ into sets of grid cells of level $\ell$.
Each grid cell of level $\ell-1$ contains
{$O(1/\dc^2)$ many grid cells of level $\ell$. Hence, there are only constantly
many configurations.
We assume two configurations to be \emph{equivalent} if they are identical up to translation by an integral multiple of $\delta_{c,(\ell-1)}$, i.e., by an integral multiple of the edge length of a grid cell of level $\ell-1$.
Denote by $C$ the total number of resulting equivalence classes.
We guess in time $m^{O(C)}\le n^{O(C)}$ how many grid cells
have each of the at most $C$ configurations (up to equivalences). Then, we assign the
items in $\tilde{L}_{\ell}$ into the grid cells according to this guess.
\ari{We can do this by weighted bipartite matching. For each object
$P\in \tilde{L}_{\ell}$, each possible configuration $\mathcal{G}'$, and each set in the partition of
$\mathcal{G}'$, we
can check easily whether $P$ fits into $\mathcal{G}'$. 
In the bipartite graph, one side will contain the objects in $\tilde{L}_{\ell}$ and the other side will contain the sets in the partition of
$\mathcal{G}'$.
If $P$ fits into set $Q$ in the \agtwo{partition} of $\mathcal{G}'$, then there is an edge with edge cost as $\profit(P)$. }
Our guess
yields a certain number $m'$ of empty grid cells of level $\ell+1$
into which we need to pack items in $\tilde{L}_{\ell+1},\tilde{L}_{\ell+2},\dots$.
We assign these items according to the solution in the DP-cell $\DP[\ell+1,\min\{m',n\}]$.
Note that there are at most $n$ items and, hence, we never need more
than $n$ grid cells of level $\ell+1$. Also, since our input data is
polynomially bounded, the number of classes $\tilde{L}_{i}$ is bounded
by~$n^{O(1)}$. Thus, our DP runs in time $n^{O(C)}$.

Additionally, we pack medium objects from the set $\bigcup_{\ell}\tilde{M}_{\ell}$
separately in a strip of the form $[0,1]\times[1,1+O(\eps)]$. We select the most profitable subset of $\bigcup_{\ell}\tilde{M}_{\ell}$
(up to a factor of $1+\eps$) whose total area is bounded by $\eps$
(see Lemma~\ref{lem:medium-items}).
We replace each
of these objects with the smallest square that contains it, which
increases its area only by a constant factor.
We can pack these squares efficiently into a (slightly larger) strip $[0,1]\times[1,1+O(\eps)]$ using the $\mathtt{NFDH}$ algorithm~\cite{NFDH}. This yields the following lemma.

%

\begin{lemma}
\label{lem:med}
In polynomial time we can compute a set $\P'\subseteq \bigcup_{\ell}\tilde{M}_{\ell}$
and a non-overlapping placement of $\P'$ inside $[0,1]\times[1,1+O(\eps)]$
such that $w(\P')$ is at least the profit of any subset of \ari{$\bigcup_{\ell}\tilde{M}_{\ell}$}
whose total area is at most $\eps$.
\end{lemma}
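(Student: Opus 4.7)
My plan is to carry out three steps: solve a one-dimensional knapsack instance to choose $\P'$, circumscribe each chosen object by an axis-aligned square, and pack those squares into the strip with $\mathtt{NFDH}$.

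The selection step amounts to picking a sub-collection of $\bigcup_\ell \tilde{M}_\ell$ of maximum total profit, subject to the constraint that the sum of the areas is at most $\eps$. This is a one-dimensional \textsf{Knapsack} instance in which the ``size'' of an object is its Euclidean area and its ``profit'' is $w(P)$. I would invoke the standard knapsack FPTAS with error parameter $\eps' = \Theta(\eps)$ to obtain, in polynomial time, a feasible set $\P'$ with $w(\P') \ge (1-\eps')\, w(S^\star)$, where $S^\star$ is the profit-optimal subset with area at most $\eps$. The multiplicative $(1-\eps')$ loss is folded into the overall $(1+O(\eps))$ approximation ratio of Theorem~\ref{thm:fat-ra}, so for the purposes of this lemma the selection can be treated as exact.

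For the reduction to squares, I would use the fact that every chosen $P \in \P'$ has $\mathrm{area}(P) \le \eps$; since $P$ is $f$-fat and convex, $\pi\,(r^{\mathrm{in}}(P))^2 \le \mathrm{area}(P)$ and $r^{\mathrm{out}}(P) \le f\, r^{\mathrm{in}}(P)$, so $r^{\mathrm{out}}(P) \le f\sqrt{\mathrm{area}(P)/\pi} = O_f(\sqrt{\eps})$. Replace each $P$ by the axis-aligned square $Q_P$ of side $2\,r^{\mathrm{out}}(P)$ that circumscribes the smallest enclosing circle of $P$. The area blow-up per object is a constant factor $4f^2/\pi$, so
\[
\sum_{P \in \P'} \mathrm{area}(Q_P) \;\le\; \tfrac{4f^2}{\pi}\sum_{P \in \P'} \mathrm{area}(P) \;=\; O_f(\eps),
\]
and every square has side $O_f(\sqrt{\eps})$. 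Any non-overlapping placement of the squares $\{Q_P\}$ is automatically a non-overlapping placement of the corresponding objects $\{P\}$.

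Finally, I would apply $\mathtt{NFDH}$ to $\{Q_P\}$ in a strip of width $1$; the classical shelf-packing analysis bounds the used height by $2\sum_P \mathrm{area}(Q_P) + \max_P \mathrm{side}(Q_P)$. Invoking the lemma at the appropriate scale (the outer algorithm calls it with $\Theta(\eps^2)$ in place of $\eps$) makes this bound $O(\eps)$, yielding the claimed strip $[0,1]\times[1,1+O(\eps)]$. The one delicate point is exactly here: the additive $\mathtt{NFDH}$ error is governed by the largest square side, which is only $O(\sqrt{\eps})$, so the subroutine has to be instantiated at a slightly finer scale than the top-level $\eps$ in order to match the $O(\eps)$ strip height. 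The knapsack FPTAS and the $\mathtt{NFDH}$ analysis are textbook and contribute no genuine difficulty beyond this bookkeeping.
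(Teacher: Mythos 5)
Your overall plan (select by profit, circumscribe by squares, pack with $\mathtt{NFDH}$) is the same as the paper's, but there is a real gap in the middle step, and a smaller one in the selection step.

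The central gap is your bound on the side length of the circumscribing squares. From the area budget alone you only get $r^{\mathrm{out}}(P)=O_f(\sqrt{\eps})$, hence squares of side $O_f(\sqrt{\eps})$, and the $\mathtt{NFDH}$ additive term then forces a strip of height $\Theta(\sqrt{\eps})$, not $O(\eps)$. You noticed this, but the ``invoke the lemma at scale $\Theta(\eps^2)$'' patch does not repair it: in the lemma both the area budget and the strip height are the \emph{same} parameter $\eps$, so replacing $\eps$ by $\eps^2$ shrinks the area budget as well and changes the statement being proved (and the outer algorithm of Lemma~\ref{lem:medium-items} hands you medium items with total area up to $\eps$, not $\eps^2$). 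What the paper actually uses is a structural fact you do not have access to from the area bound alone: every object in $\bigcup_\ell \tilde{M}_\ell$ has inradius at most $\dl$, and by the construction in Lemma~\ref{lem:medium-items} the value $\dl$ is chosen from a set $D$ of candidates all bounded by $(\beta\gamma)^{1/\eps}\ll\eps$. Therefore each medium object already satisfies $r^{\mathrm{out}}\le f\,r^{\mathrm{in}}\le f\eps$, giving squares of side $O_f(\eps)$, and the $\mathtt{NFDH}$ guarantee then yields a strip of height $O_f(\eps)$ directly. Without pulling in this bound on $r^{\mathrm{in}}$ for medium items, your argument does not establish the claimed strip height.

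A secondary issue: the lemma asks for $w(\P')$ to be \emph{at least} the profit of any subset of total area $\le\eps$, i.e., an exact dominance, not a $(1-\eps')$-approximation. An FPTAS only gives the latter. The paper instead sorts by profit density and takes the greedy prefix with area budget $2\eps$; since each medium item has area at most $\eps$, this prefix dominates any subset of area $\le\eps$ exactly. This greedy-with-slack selection is what you should use; it also conveniently means the total area you feed into $\mathtt{NFDH}$ is $2\eps$ rather than $\eps$, which is fine once the side-length bound is $O(\eps)$.
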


\begin{proof}
Let the most profitable subset of $\bigcup_{\ell} \tilde{M}_{\ell}$
(up to a factor of $1+\eps$) whose total area is bounded by $\eps$ be $\OPT_{M}$.
 We arrange all items in $\bigcup_{\ell}\tilde{M}_{\ell}$ in order of decreasing profit density and select (prefix) subset of items with highest profit density such that their area is bounded by $2\eps$. Let this subset of items be $M'$. We argue that $w(M')\ge w(\OPT_M)$ as any items in $\bigcup_{\ell} \tilde{M}_{\ell}$ has area at most $\eps$.
 Therefore, our selected set $M'$ has profit at least $\profit(\bigcup_{\ell}M_{\ell})$
(see Lemma~\ref{lem:medium-items}).
Now we pack  $M'$ 
separately in a strip of the form $[0,1]\times[1,1+O(\eps)]$.
We replace each
of these objects with the smallest square that contains it, which
increases its area only by a constant factor of at most {$2f^2$ since each item is $f$-fat. Note that this square has a side length of at most $2r^{out} \le 2f r^{in} \le 2f\eps$ as all the medium objects have an inradius of at most $\eps$. Now we state a result that gives an approximation algorithm for packing small squares of size $O(\eps$) efficiently.}
\begin{lemma}[\cite{NFDH}]\label{lem:nfdh}
	In a rectangle of size $a \times b$, the $\mathtt{NFDH}$ algorithm guarantees to pack an area of at least $(ab-\mu(a+b))$ with a sufficient number of squares of maximum side length $\mu$.
\end{lemma}

We pack these square objects greedily with the $\mathtt{NFDH}$ algorithm
\cite{NFDH} stated in \Cref{lem:nfdh} into a (slightly larger) strip $[0,1]\times[1,1+O_f(\eps)]$ to arrive at the result we wanted to prove.
\end{proof}
The algorithm is easy to extend to the case of a constant number (say, $c_m$) of knapsacks. For a single knapsack, we call DP[0,1], whereas for multiple knapsacks we need to just call DP[0,$c_m$]. This leads to the following corollary. 

\begin{corollary}
	\label{cor:fat-ra}
	Let \ari{$f\ge 1$}, $\eps>0$, $d\in \N$, and $c_m$ be constants. Given a set of $d$-dimensional $f$-fat convex input objects, there exists a polynomial time algorithm that can pack a subset of them with a total profit of \ag{$w(\OPT)$} into $c_m$ knapsacks with side length $1+\eps$, where \ag{$w(\OPT)$} is the optimal profit that can be packed into $c_m$ knapsacks with side length 1.
\end{corollary}

One can easily extend our algorithm above to any constant dimension $d$.
This completes the proof of \Cref{thm:fat-ra}. 

A consequence is that we obtain a polynomial time $(1+\eps)$-approximation
\emph{without }resource augmentation if all input objects are small,
i.e., if $r^{out}(P)\le\eps$ for each given object $P\in\P$.
Using this property, we can argue that there is a $(1+\eps)$-approximate solution
in which only the area $[0, 1-\Theta(\eps)]\times[0, 1-\Theta(\eps)]$ of the
knapsack is used. Thus, we can use the free space for the resource augmentation
that is required by our algorithm due to \Cref{thm:fat-ra}.

\begin{theorem}
\label{lem:small-poly}
Let $d\in \N$ be a constant.
There is a polynomial time $(1+O(\eps))$-approximation
for the geometric knapsack problem if the set of input objects $\P$
consists of convex fat $d$-dimensional objects such that $r^{out}(P)\le\eps$
for each $P\in\P$.
\end{theorem}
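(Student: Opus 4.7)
The plan is to combine the PTAS under resource augmentation from Theorem~\ref{thm:fat-ra} with a shifting argument that exploits the fact that every input object is small. The key observation is that since each $P\in\P$ has $r^{\mathrm{out}}(P)\le\eps$, the extent of $P$ in every coordinate is at most $2\eps$, so by discarding only an $O(\eps)$-fraction of $w(\OPT)$ we can assume that the optimal packing leaves a strip of width $\Theta(\eps)$ free along each side of the knapsack. This free space then absorbs exactly the resource augmentation required by Theorem~\ref{thm:fat-ra}.

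\emph{Step 1 (structural lemma).} First I would prove that there is $\OPT'\subseteq\OPT$ with $w(\OPT')\ge(1-O(\eps))\,w(\OPT)$ that can be packed inside the smaller knapsack $K^{*}=[0,1-3\eps]^d$. Fix a coordinate $i\in\{1,\ldots,d\}$ and partition $[0,1]$ in that coordinate into $k=\lceil 1/(3\eps)\rceil$ consecutive slabs of width $3\eps$. For each slab $I_j$ let $A_j\subseteq\OPT$ be the set of items whose extent in coordinate $i$ meets $I_j$. Since each object has extent at most $2\eps<3\eps$ in coordinate $i$, each item lies in at most two consecutive slabs, so $\sum_j w(A_j)\le 2\,w(\OPT)$. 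Averaging gives a slab $I_{j^{*}}$ with $w(A_{j^{*}})\le(2/k)\,w(\OPT)=O(\eps)\,w(\OPT)$. Discard $A_{j^{*}}$ and translate every remaining item whose coordinate-$i$ extent lies to the right of $I_{j^{*}}$ leftward by $3\eps$ in coordinate $i$; the gap of width $3\eps$ left by the removed items guarantees that the packing remains non-overlapping, and the resulting packing is contained in $[0,1-3\eps]$ in coordinate $i$. Iterating this procedure over all $d$ coordinates (using that $d$ is a constant) yields the claimed $\OPT'$ packed inside $[0,1-3\eps]^d$, with $w(\OPT')\ge(1-6d\eps)\,w(\OPT)=(1-O(\eps))\,w(\OPT)$.

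\emph{Step 2 (applying Theorem~\ref{thm:fat-ra}).} Rescale the instance by the factor $1/(1-3\eps)$: the target knapsack $K^{*}$ becomes $[0,1]^d$ and the rescaled objects remain convex and $f$-fat (since fatness is scale invariant). Applying the algorithm of Theorem~\ref{thm:fat-ra} to this rescaled instance produces in polynomial time a packing into the augmented knapsack $[0,1+\eps]^d$ whose profit is at least $w(\OPT')\ge(1-O(\eps))\,w(\OPT)$. Scaling back by $(1-3\eps)$, the output packing fits inside $[0,(1-3\eps)(1+\eps)]^d$, and since $(1-3\eps)(1+\eps)=1-2\eps-3\eps^{2}<1$ for every $\eps>0$, this is contained in the original knapsack $[0,1]^d$. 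Thus we obtain in polynomial time a packing of profit at least $(1-O(\eps))\,w(\OPT)$ inside $[0,1]^d$, yielding the desired $(1+O(\eps))$-approximation.

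The main obstacle is Step~1: one has to verify that composing the $d$ successive one-dimensional shifts does not accumulate overlaps or cause a multiplicative (instead of additive) loss of profit. Both go through because a shift in coordinate $i$ moves items only in coordinate $i$ and the resulting gap of width $3\eps$ in that coordinate keeps items separated (hence non-overlapping in $d$-dimensional space), and because the profit loss at each iteration is an additive $O(\eps)\,w(\OPT)$, summing to $O(d\eps)\,w(\OPT)=O(\eps)\,w(\OPT)$ since $d$ is constant. Everything else is a routine invocation of Theorem~\ref{thm:fat-ra} on a rescaled instance, which is where the hard algorithmic work has already been done.
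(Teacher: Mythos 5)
Your proof is correct and takes essentially the same route as the paper: free up a strip of width $\Theta(\eps)$ along each side of the knapsack by an averaging/shifting argument (the paper phrases this as removing randomized strips and then derandomizing), obtain a $(1-O(\eps))$-profit packing into a shrunken knapsack, and then invoke Theorem~\ref{thm:fat-ra} after rescaling so that the resource augmentation it requires is absorbed by the freed space. The only difference is presentational — you make the rescaling step and the per-coordinate slab-selection explicit, whereas the paper states the argument more tersely.
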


\begin{proof}
We begin by assuming that the optimal solution for this problem is $\OPT$. We then remove two randomized strips of width $\eps$.
First, we select an $x,y \in [0,1-\eps]$ uniformly at random. Then we remove all items that intersect the horizontal strip $[0,1]\times [x, \eps]$ and all items that intersect the vertical strip $[x, \eps] \times [0,1]$. 
Each item is then cut with probability at most $2(2 r^{out}+\eps)$. 
In expectation, then we incur a loss of at most $2(2 r^{out}+\eps) \le 6 \eps$ fraction of the profit. 
This can also be derandomized by an averaging argument. 
From this, we get a packing of $(1-O(\eps)) \ag{w(\OPT)}$ profit in a $(1-\eps)\times (1-\eps)$ bin. We now use Theorem~\ref{thm:fat-ra} to get this packing from the set $\P$ in polynomial time. 
\end{proof}

\section{Spheres}
\label{sec:spheres}
In this section we present our $(1+\eps)$-approximation algorithm
for the case of $d$-dimensional spheres.
Let $\C = \{ C_1, C_2, \dots , C_n \}$ be a set of  $n$ number of $d$-dimensional hyperspheres. We denote the radius and profit of each hypersphere \(C_i \in \C\) by \(r_i\) and \(\profit_i\). 
For an object $C_i$ we denote its volume (or area in 2-dimension) to be $\area(C_i)$. 
For a collection of objects $\mathcal{A}$, we define its volume and profit to be $\area(\mathcal{A}):=\sum_{C_i \in A} \area(C_i)$ and $\profit(\mathcal{A}):=\sum_{C_i \in A} \profit(C_i)$, respectively.
We are given a unit knapsack $K := [0,1]^d$.

We first consider the case of circles, i.e., $d=2$.
Let $\OPT$ be an optimal solution and $\C_{\OPT}$ be the circles in $\OPT$.
Let $\eps \in (0,1/2]$ be a constant and assume that $1/\eps\in\mathbb{N}$. First,
we want to classify the input circles into \emph{small} and \emph{large
}circles such that each large circle is much larger than any small circle.
Due to the following lemma, we can do this such that we can ignore
all circles that are neither large nor small by losing only a factor
of $\eps$. 
We will use this standard shifting argument throughout the paper (see Appendix \ref{sec:shifting} for the details of shifting argumentation). 

\begin{lem}
\label{lem:create-gap} Given a constant $\eps\in (0,1)$ and $1/\eps \in \mathbb{N}$, there is a
	set of global constants $\eps^{(0)},...,\eps^{(1/\eps)}\geqslant 0$ such that
	$\eps^{(j)}=(\eps^{(j-1)})^{24}$ for each $j\in\{1,...,{1/\eps}\}$
	and a value $k\in\{1,...,{1/\eps}-1\}$ with the following property: if we define
	$\epsl:=\eps^{(k)}$ and $\epss:=\eps^{(k+1)}$, 
	then sum of profits of all circles $C_i$ in $\OPT$ with radii $\epss \leq r_i \leq \epsl$ is at most $\eps \cdot w(\OPT)$.
\end{lem}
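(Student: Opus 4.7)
The plan is a standard shifting/averaging argument on a geometrically decreasing sequence of radius thresholds. The idea is to identify $\Theta(1/\eps)$ pairwise-disjoint candidate gap intervals of radii, observe that the total $\OPT$-profit contained in their union is at most $w(\OPT)$, and then apply pigeonhole to produce one interval whose contribution is at most $\eps\cdot w(\OPT)$. The specific recursion $\eps^{(j)}=(\eps^{(j-1)})^{24}$ is chosen purely so that the resulting gap $[\epss,\epsl]$ is polynomially wide enough in the ratio $\epss/\epsl$ to serve as the ``small vs.\ large'' separation that the remainder of the algorithm will exploit; this choice plays no role in the proof of the lemma itself.

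Concretely, first I would fix an initial constant $\eps^{(0)}>0$ depending only on $\eps$ (small enough that each $\eps^{(j)}$ stays well below $1/2$, the maximum possible radius of a circle in the unit square) and recursively define the sequence $\eps^{(j)}:=(\eps^{(j-1)})^{24}$ for $j\in\{1,\ldots,1/\eps\}$. For each candidate index $k\in\{1,\ldots,1/\eps-1\}$ I introduce the half-open interval $J_k:=(\eps^{(k+1)},\eps^{(k)}]$ and set
\[
W_k \;:=\; \sum_{\substack{C_i\in\C_{\OPT} \\ r_i\in J_k}} \profit_i.
\]
The $J_k$ are pairwise disjoint (and every possible $r_i$ meets at most one of them, thanks to the half-open convention), so $\sum_{k} W_k \le w(\OPT)$. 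Since the number of summands is $\Theta(1/\eps)$, an averaging argument yields an index $k^\star$ with $W_{k^\star}\le \eps\cdot w(\OPT)$; taking $\epsl:=\eps^{(k^\star)}$ and $\epss:=\eps^{(k^\star+1)}$ gives the desired constants. (Any fixed constant loss in the pigeonhole bound can be absorbed by redefining the initial $\eps$ up to a constant factor, or equivalently by working with the full set of $1/\eps$ intervals obtained by also using the index $k=0$ and then shifting indices.)

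Algorithmically, since the whole sequence $\eps^{(0)},\ldots,\eps^{(1/\eps)}$ is fixed a priori and depends only on $\eps$, the PTAS can simply enumerate all $O(1/\eps)$ candidate values of $k$, run the rest of the algorithm once per choice, and output the best packing found; this costs only a $\mathrm{poly}(1/\eps)$ factor in the running time and removes the need to know $\OPT$ in advance. There is essentially no hard step: the proof is pure pigeonhole. The only points requiring a moment of care are (i) the half-open convention, so each circle contributes to exactly one $W_k$, and (ii) the verification that the huge multiplicative gap $\epss\le \epsl^{24}$ matches the "sufficiently small" condition later sections impose when they treat small circles as negligible next to large ones.
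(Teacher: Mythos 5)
Your proposal is correct and matches the paper's proof, which simply instantiates its general shifting lemma (Lemma~\ref{lem:shifting}) with the recursion $h(x)=x^{24}$; both arguments are the same pigeonhole over $\Theta(1/\eps)$ disjoint radius intervals $(\eps^{(k+1)},\eps^{(k)}]$. Your remarks about the half-open convention, absorbing the constant from having $1/\eps-1$ rather than $1/\eps$ indices, and enumerating all $O(1/\eps)$ candidates algorithmically are all consistent with how the paper uses the lemma.
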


\begin{proof}
Define $\rho_0=\eps$, and $\rhoi{k} := (\rho_{k-1})^{24}$ for $k \in [1/\eps]$. For each integer $k \in [1/\eps]$,  we define the following sets of items:
$\mathcal{C}_k = \{C_i \in \mathcal{C} \mid r_{i} \in (\rho_{k},\rho_{k-1}]\}$.
Then by pigeonhole principle, one of the sets $\mathcal{C}_{\tau} \cap \OPT$ will contain items of weight at most $\eps w(\OPT)$.
Define $\epsl=\rho_{\tau-1}$ and $\epss=\rho_{\tau}$. The set of $\rho_i$'s defines the set of global constants with cardinality $1/\eps$. 
\end{proof}

We guess the value $k\in\{0,...,1/\eps-1\}$ due to Lemma~\ref{lem:create-gap}.
We define that a circle $C_{i}\in\C$ is \emph{large }if $r_{i}>\epsl$
and \emph{small} if $r_{i}\leqslant\epss$. Also, note that $\epss = \epsl^{24}$. 

\subsection{Guessing large circles} \label{subsec:Large_Guess}

We observe that in $\OPT$ there can be only a constant number of
large circles since each large circle covers a constant fraction of the
available area in the knapsack. 
\begin{prop}
\label{big_sol} Any feasible solution
	can contain at most $(1/\eps)^{O_k(1)}$ large circles. 
	\end{prop}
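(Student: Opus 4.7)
The plan is to bound the number of large circles by a simple volume (area) argument. Each large circle $C_i$ has radius $r_i > \epsl$, and therefore area at least $\pi \epsl^2$. Since any feasible packing lives inside the unit knapsack $K=[0,1]^2$, the total area occupied by all packed objects is at most $\area(K)=1$. Hence the number of large circles in any feasible solution is at most $1/(\pi \epsl^2) \le \epsl^{-2}$.

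The remaining task is to translate the bound $\epsl^{-2}$ into the claimed form $(1/\eps)^{O_k(1)}$. By \Cref{lem:create-gap}, we have $\epsl = \eps^{(k)}$ for some $k \in \{1,\dots,1/\eps-1\}$, where the sequence $\eps^{(0)},\eps^{(1)},\dots$ satisfies $\eps^{(j)} = (\eps^{(j-1)})^{24}$. Starting from the top-level constant $\eps^{(0)}$ (which is a polynomial in $\eps$), iterating $k$ times gives $\epsl = (\eps^{(0)})^{24^{k}}$, so $1/\epsl \le (1/\eps)^{c\cdot 24^{k}}$ for some absolute constant $c$. Thus the number of large circles is bounded by $(1/\eps)^{2c\cdot 24^{k}} = (1/\eps)^{O_k(1)}$, as required.

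There is no real obstacle here: the only thing to be careful about is the dependence on $k$, which is absorbed into the $O_k(1)$ notation in the exponent. Note that the argument uses nothing about the packing other than non-overlap inside the unit knapsack, so it works uniformly for \emph{any} feasible solution (in particular for $\OPT$). The same argument generalises verbatim to $d$-dimensional hyperspheres: each large sphere has volume at least $c_d\, \epsl^{d}$ for a dimension-dependent constant $c_d$, yielding a bound of $(1/\eps)^{O_{k,d}(1)}$ in the general case.
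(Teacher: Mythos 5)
Your proof is correct and takes essentially the same route as the paper: a simple area argument showing that each large circle occupies area at least $\pi\epsl^2$ in a unit-area knapsack, giving a bound of $1/(\pi\epsl^2)$, which is $(1/\eps)^{O_k(1)}$ since $\epsl = \eps^{24^k}$ by \Cref{lem:create-gap}. The paper's proof is a one-liner with exactly this computation; your version just spells out the translation into $(1/\eps)^{O_k(1)}$ a bit more explicitly.
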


\begin{proof}
	Each large circle has area $\pi \epsl^2$. Thus there can be at most $1/(\pi \epsl^2)= 1/(\pi ({\eps^{{24}^k})^2})$ large circles. 
\end{proof}

We guess a feasible solution of the large circles in $\OPT$ that satisfy the packing constraints in time $n^{(1/\eps)^{O_k(1)}}$,
denote them by $\C_{L}^{*}$. In related problems, like the two-dimensional
knapsack problem with squares or rectangles, one can easily guess
the correct placement of the guessed large such objects  
(assuming rational input data).
For packing circles, it is not clear that if there is a packing in which the
centers of the circles are placed at rational coordinates.
{However, we can still model the circle packing problem as a semi-algebraic set defined via polynomial inequalities, similar to Miyazawa et al.~\cite{MiyazawaPSSW16}.}

We first guess for each circle $C_{i}\in\C_{L}^{*}$
an \emph{estimate }for its placement in $\OPT$. Denote by $\hat{x}_{i}^{(1)},\hat{x}_{i}^{(2)}\in[0,1]$
the coordinates of the center of $C_{i}$ in $\OPT$. We guess values
$\tilde{x}_{i}^{(1)},\tilde{x}_{i}^{(2)}\in\{0,\frac{\eps}{n},\frac{2\eps}{n},...,1\}$
such that $\hat{x}_{i}^{(1)}\in[\tilde{x}_{i}^{(1)},\tilde{x}_{i}^{(1)}+\frac{\eps}{n})$
and $\hat{x}_{i}^{(2)}\in[\tilde{x}_{i}^{(2)},\tilde{x}_{i}^{(2)}+\frac{\eps}{n})$.
Note that there are only $O(n^{2}/\eps^{2})$ possibilities for
each $C_{i}\in\C_{L}^{*}$, and hence only $n^{(1/\eps)^{O_k(1)}}$
possibilities overall for all circles $C_{i}\in\C_{L}^{*}$.

Given these guessed values $\tilde{x}_{i}^{(1)},\tilde{x}_{i}^{(2)}$
for each circle $C_{i}\in\C_{L}^{*}$, we verify that our guess was
correct or not, i.e., confirm that there exists, indeed a corresponding placement
for each circle $C_{i}\in\C_{L}^{*}$ such that the circles in $\C_{L}^{*}$
do not overlap. Therefore, we define a system of quadratic inequalities
 that \agtwo{describes} the problem of finding such a placement.
We require that this placement is consistent with
our guesses $\tilde{x}_{i}^{(1)},\tilde{x}_{i}^{(2)}$ for each $C_{i}\in\C_{L}^{*}$.
\begin{alignat}{2}
\max\{\tilde{x}_{i}^{(1)},r_{i}\}\leqslant x_{i}^{(1)} & \leqslant\min\left\{ \tilde{x}_{i}^{(1)}+\frac{\eps}{n},1-r_{i}\right\}  & \,\,\,\,\, & \forall C_{i}\in\C_{L}^{*}\nonumber \\
\max\{\tilde{x}_{i}^{(2)},r_{i}\}\leqslant x_{i}^{(2)} & \leqslant\min\left\{ \tilde{x}_{i}^{(2)}+\frac{\eps}{n},1-r_{i}\right\}  &  & \forall C_{i}\in\C_{L}^{*}\label{eq:equations}\\
(x_{i}^{(1)}-x_{j}^{(1)})^{2}+(x_{i}^{(2)}-x_{j}^{(2)})^{2} & \geqslant(r_{i}+r_{j})^{2} &  & \forall C_{i},C_{j}\in\C_{L}^{*}\nonumber \\
x_{i}^{(1)},x_{i}^{(2)} & \geqslant0 &  & \forall C_{i}\in\C_{L}^{*}\nonumber
\end{alignat}

Let $|\C_{L}^{*}|=: t$. Then, the above system has $2t$ variables and $k:=O(t^2)$ constraints. 
It is not clear how to compute a solution to this system in polynomial
time. It is not even clear whether it has a solution in which each variable
has a rational value. However, in polynomial time, we can \emph{decide}
whether it has a solution (without computing the solution itself) using an algorithm from \cite{GrigorevVV88}.

Note that the set of solutions satisfying system (\ref{eq:equations}) is a semi-algebraic set in the field of real numbers. 
Thus, whether a given set of circles can be packed or not (the decision problem) reduces to a decision problem of whether
this semi-algebraic set is nonempty or not. 
Here, each constraint $i \in [k]$ in (\ref{eq:equations}) can be written as a function $f_i (x_{1}^{(1)},x_{1}^{(2)}, \dots, x_{t}^{(1)},x_{t}^{(2)})\geqslant 0$ where each $f_i \in \mathbb{Q}[x_{1}^{(1)},x_{1}^{(2)}, \dots, x_{t}^{(1)},x_{t}^{(2)}]$ is a polynomial with rational coefficients of degree at most two.
Therefore, deciding the circle packing problem is equivalent to deciding whether the following formula is satisfiable or not: $F:= (\exists x_{1}^{(1)})(\exists x_{1}^{(2)}) \dots (\exists x_{t}^{(1)}) (\exists x_{t}^{(2)}) \land_{i=0}^{k} f_i(x_{1}^{(1)},x_{1}^{(2)}, \dots, x_{t}^{(1)},x_{t}^{(2)}) \geqslant 0$. 
To solve this decision problem, we use the following result.

\begin{theorem}[\cite{GrigorevVV88}]
	\label{thm:GrigorevVV88}
	Let $f_1, f_2, \dots, f_k \in \mathbb{Q}[x_{1}^{(1)},x_{1}^{(2)}, \dots, x_{t}^{(1)},x_{t}^{(2)}]$ be polynomials with absolute value of any coefficient to be represented by $M$ bits and maximum degree $\Delta$.
	There is an algorithm that decides whether the formula $F:= (\exists x_{1}^{(1)})(\exists x_{1}^{(2)}) \dots (\exists x_{t}^{(1)}) (\exists x_{t}^{(2)}) \land_{i=0}^{k} f_i(x_1, y_1, \dots, x_n, y_n) \geqslant 0$ is true, with a running time of $M^{O(1)}(k\Delta)^{O(t^2)}$.
	
	If it is true, the algorithm also returns polynomials $f, g_1, h_1, \dots, g_t, h_t \in \mathbb{Q}[x]$ with coefficients of bit size at most $M^{O(1)} (k \Delta)^{O(t)}$ and maximum degree $k^{O(t)}$, such that for a root $x$ of $f(x)$, the assignment $x_{1}^{(1)}=g_1(x), x_{1}^{(2)}=h_1(x), \dots, x_{t}^{(1)}=g_n(x), x_{t}^{(2)}=h_n(x)$
	satisfies the
	formula $F$.
	
	Moreover, for any rational $\alpha>0$, it returns values $\bar{x}_{1}^{(1)},\bar{x}_{1}^{(2)} \dots, \bar{x}_{t}^{(1)},\bar{x}_{t}^{(2)} \in \mathbb{Q}$ such that $| \bar{x}_{i}^{(1)}- {x}_{i}^{(1)} | \leqslant \alpha$ and  $| \bar{x}_{i}^{(2)}- {x}_{i}^{(2)} | \leqslant \alpha$, for $1 \leqslant i \leqslant t$, in time at most $(\log (1/\alpha) M)^{O(1)} (k \Delta)^{O(t^2)}$.
\end{theorem}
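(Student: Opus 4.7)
The plan is to prove this via the \emph{critical point method} from real algebraic geometry, which is the standard route to singly-exponential time algorithms for the existential theory of the reals. I reduce the decision problem to computing a finite set of \emph{sample points} that meets every connected component of the semi-algebraic set $S = \{\bar{x}\in\mathbb{R}^{2t} : f_i(\bar{x}) \geq 0 \text{ for all } i\}$; then $F$ holds if and only if this set is nonempty.

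To produce sample points, I would first perturb each inequality to $f_i \geq -\varepsilon_i$ using a descending sequence of infinitesimals $\varepsilon_1 \gg \varepsilon_2 \gg \cdots$ and add a ball constraint $\sum_{i,j} (x_i^{(j)})^2 \leq 1/\eta$ for another infinitesimal $\eta$. Working over the real closure of the ordered field of Puiseux series, the perturbed set becomes bounded and its smooth boundary strata have connected components that specialize back to those of $S$ under the limit map. For every active subset of the constraints, the critical points of a generic linear projection restricted to the resulting smooth variety are the common zeros of a polynomial system (the active $f_i$'s together with Jacobian-minor vanishing conditions). Effective B{\'e}zout bounds each such system by $\Delta^{O(t)}$ solutions, and there are at most $\binom{k}{2t}$ active subsets, so at most $(k\Delta)^{O(t)}$ sample points arise; every component of $S$ contains the limit of one of them, which decides $F$.

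Next, to package the solutions in the univariate form claimed by the theorem, I would apply a generic linear change of variables so that a single coordinate is a primitive element of the zero-dimensional critical variety, and iteratively eliminate the remaining $2t-1$ variables via resultants. This produces a single $f \in \mathbb{Q}[x]$ of degree $k^{O(t)}$ together with parameterizations $g_j, h_j \in \mathbb{Q}[x]$; each root of $f$ encodes one sample point via $x_i^{(j)} = g_j(x)$, $h_j(x)$. The running time $M^{O(1)}(k\Delta)^{O(t^2)}$ arises because each of the $O(t)$ resultant eliminations multiplies polynomial degrees by $\Delta^{O(t)}$, and arithmetic on polynomials of this degree with coefficients of bit-size $M^{O(1)}(k\Delta)^{O(t)}$ costs $(k\Delta)^{O(t^2)}$ overall. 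For the final approximation statement, I would isolate and refine the real roots of $f$ to precision $\alpha$ via Sturm sequences or Uspensky's algorithm; using Mignotte-type root-separation bounds this costs $(\log(1/\alpha) \cdot M)^{O(1)}$ times the base cost, and evaluating $g_j, h_j$ on the rational approximation returns the $\bar{x}_i^{(j)}$.

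The main obstacle is the bit-complexity accounting. A naive chain of resultants doubles the degree and the coefficient height at each elimination step, producing a doubly-exponential blowup---precisely the pathology of cylindrical algebraic decomposition. Avoiding it requires the \emph{simultaneous} analysis of all active subsets via one uniform infinitesimal perturbation (rather than enumerating sign conditions), a careful choice of primitive element whose minimal polynomial has polynomially bounded height, and bounded-depth resultant trees instead of sequential elimination. The exponent $t^2$ in $(k\Delta)^{O(t^2)}$ is essentially the unavoidable price of this bookkeeping for any approach that stays singly-exponential; improving it beyond a constant factor is a major open problem and not needed here.
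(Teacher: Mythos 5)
This theorem is cited from \cite{GrigorevVV88} and is used by the paper as a black box; the paper contains no proof of it, so there is nothing internal to compare your argument against. What you have written is a blind attempt to re-derive a foundational result of singly-exponential real algebraic geometry, which is a much larger undertaking than anything this paper carries out.

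As a high-level sketch, your route is the right one: the critical point method with infinitesimal deformations and a bounding ball is the modern textbook path (Basu--Pollack--Roy) to $\exists\mathbb{R}$ decision in $(k\Delta)^{O(t)}$ sample points and $(k\Delta)^{O(t^2)}$ bit operations, and the rational univariate representation via a primitive element is how one gets the $f, g_j, h_j \in \mathbb{Q}[x]$ packaging. The original 1988 Grigor'ev--Vorobjov technique differs in its internal machinery from the BPR formulation you describe (they do not use the same infinitesimal/Puiseux framework), but both live in the same circle of ideas and give the same complexity class. However, your proposal does not actually close the argument: the paragraph accounting for running time first claims the $(k\Delta)^{O(t^2)}$ bound follows from ``each of the $O(t)$ resultant eliminations multiplies polynomial degrees by $\Delta^{O(t)}$,'' and then your own obstacle paragraph concedes that this naive chain is exactly the doubly-exponential pathology you must avoid. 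The replacement you gesture at (uniform infinitesimal perturbation, a good primitive element, bounded-depth resultant trees) is precisely the nontrivial content of the cited theorem, and it is left as a to-do rather than proved. Likewise, the specialization/limit argument (that every connected component of $S$ receives the limit of a critical point of the perturbed bounded variety) is asserted but the required properness and semicontinuity facts are not supplied. So the plan is sound and faithful to the literature, but it is an outline of a known difficult proof, not a proof; since the paper simply invokes the citation, treating this as a black box, as the paper does, is the appropriate move.
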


We crucially use here that our system has only constantly many variables
and constraints,~i.e., in (\ref{eq:equations}), we have that $\Delta, k, t$ are constants and that $M$ is polynomially bounded in $n$.
From \Cref{thm:GrigorevVV88}, we see that
in polynomial time we can decide whether
(\ref{eq:equations}) has a solution.

If the system (\ref{eq:equations}) does
not have a solution, then we reject this guessed combination of $\C_{L}^{*}$
and values $\tilde{x}_{i}^{(1)},\tilde{x}_{i}^{(2)}$ for each circle
$C_{i}\in\C_{L}^{*}$. We assume in the following that it has a solution.
Observe that the guessed values $\tilde{x}_{i}^{(1)},\tilde{x}_{i}^{(2)}$
yield an estimate for $\hat{x}_{i}^{(1)},\hat{x}_{i}^{(2)}$ up to
a (polynomially small) error of $\eps/n$.

Now we will improve this even to an \emph{exponentially} small error of $1/2^{n/\eps}$.

\subsubsection{Improving the precision of the large spheres}
\label{subsec:Improve_Large} Recall that for each large circle $C_{i}\in\C_{L}^{*}$
we guessed its center in the optimal packing up to a polynomial error
of $\frac{\eps}{n}$. We improve this to only an exponential error
of at most $\frac{1}{2^{n/\eps}}$. To do this, we apply Theorem~\ref{thm:GrigorevVV88}
with $\alpha:=\Theta(\frac{1}{2^{n/\eps}})$. This yields more precise
estimates $\bar{x}_{i}^{(1)},\bar{x}_{i}^{(2)}$ for each $C_{i}\in\C_{L}^{*}$.
There is an important subtlety though: for our guessed coordinates
$\tilde{x}_{i}^{(1)},\tilde{x}_{i}^{(2)}$ we can assume that they
differ from the coordinates of $\OPT$ by at most our polynomial error
of $\frac{\eps}{n}$. For the new estimates $\bar{x}_{i}^{(1)},\bar{x}_{i}^{(2)}$
we can \emph{not }guarantee this: our subroutine from Theorem~\ref{thm:GrigorevVV88}
possibly returns a solution that is (close to) feasible for the
large circles, but not (close to) a solution that is feasible for
the large and for the small circles. Because of this, we guessed the estimates
$\tilde{x}_{i}^{(1)},\tilde{x}_{i}^{(2)}$ for each $C_{i}\in\C_{L}^{*}$,
so that we can assume that these estimates really correspond to $\OPT$
and not just to some arbitrary solution to \eqref{eq:equations}.

If it were true that there is always a $(1+\eps)$-approximate solution
in which the center of each circle has rational coordinates that can
be encoded with a polynomially bounded number of bits, then we could
choose $\alpha$ appropriately to compute it. More precisely, we could
compute a range for each coordinate that contains only one rational
number with a bounded number of bits, and we could compute this number
afterward.

\subsection{Placing small circles}\label{subsec:placing_small}

We want to select small circles from $\C$ and place them inside the
knapsack, so that they do not overlap with each other or
with the circles in $\C_{L}^{*}$. To this end, we define $\epsc := \epsl^{12}$ (i.e., $\epss=\epsl^{12} \epsc = \epsc^2)$ to subdivide the knapsack
into a grid with $1/\epsc^2$ square grid cells of side length $\epsc$ . Our choice of parameters ensures that each small circle is small compared to each grid
cell and each large circle is big compared to each grid cell. Formally, for each $\ell,\ell'\in\{0,1, \dots ,\frac{1}{\epsc}-1\}$
we define a grid cell $G_{\ell,\ell'}:=[\ell\cdot\epsc,(\ell+1)\cdot\epsc)\times[\ell'\cdot\epsc,(\ell'+1)\cdot\epsc)$.
We define the set of all grid cells by $\G:=\{G_{\ell,\ell'}:\ell,\ell'\in\{0,1, \dots ,1/\epsc-1\}\}$.

We say that a placement of a circle $C_{i}\in\C_{L}^{*}$ is \emph{legal} if its
center is placed at a point $(x_{i}^{(1)},x_{i}^{(2)})$ such that
$\max\{\tilde{x}_{i}^{(s)},r_{i}\}\leqslant x_{i}^{(s)}\leqslant\min\left\{ \tilde{x}_{i}^{(s)}+\frac{\eps}{n},1-r_{i}\right\} $
for each $s\in\{1,2\}$. We show that there is a structured packing with near-optimal profit in which each small circle is contained in a grid cell
that does not intersect with any large circle in $\C_{L}^{*}$ in any legal packing of them. This will allow us to decouple the remaining problem
for the small circles from the large circles, even though we do not know the exact placement for the latter.
Moreover, in each grid cell the small circles use only a reduced area of size $(1-\eps) \epsc \times(1-\eps) \epsc$.
Let $\C_{S}^{*}$ denote the small circles in $\OPT$.

\begin{lemma}\label{lem:structured-packing}
	In polynomial time, we can compute a set of grid cells $\G_w$ such that no grid cell in $\G_w$ intersects with any
	circle $C_i \in\C_{L}^{*}$ for any legal placement of $C_i$. Moreover, there is a set of small circles $\C_S \subseteq \C_{S}^{*}$ such that $w(\C_{S}) \geqslant (1-\eps)w(\C_{S}^{*})$ and
	the circles in $\C_S$ can be packed non-overlappingly inside $|\G_w|$ grid cells of size $(1-\eps) \epsc \times(1-\eps) \epsc$ each.
\end{lemma}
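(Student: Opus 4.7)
The plan is to construct $\G_w$ directly from the large circles' uncertainty windows, and to prove the existence of $\C_S$ by partitioning $\OPT$'s small circles into a \emph{good} subset kept at their $\OPT$ positions and a \emph{bad} subset bounded via a shifting argument combined with a reservoir-repacking argument.

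For the construction of $\G_w$: for each $C_i \in \C_L^*$ I let $T_i := D_i \oplus B_i$ be the Minkowski sum of the radius-$r_i$ disk $D_i$ with the $(\eps/n)\times(\eps/n)$ positional-uncertainty square $B_i$ around $\tilde{x}_i$; this is exactly the set of points that $C_i$ can cover under some legal placement. Setting $\G_w := \{G \in \G : G \cap T_i = \emptyset \text{ for all } C_i \in \C_L^*\}$ immediately gives the first claim of the lemma, and since each intersection check is a constant-time distance comparison and $|\C_L^*|\cdot|\G| = (1/\eps)^{O_k(1)}/\epsc^2$, we compute $\G_w$ in polynomial time.

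For the existence of $\C_S$: fix $\OPT$'s actual placement of the large circles and let $\G_\OPT$ be the cells disjoint from every $C_i$ under this placement, so $\G_w \subseteq \G_\OPT$. I call a small circle in $\C_S^*$ \emph{good} if it is fully contained in the inner $(1-\eps)\epsc \times (1-\eps)\epsc$ subregion of some $G \in \G_w$, and \emph{bad} otherwise. Good circles already sit disjointly inside reduced white cells at their $\OPT$ positions. Bad circles are of two types: (a) those within $\epss + \eps\epsc/2$ of some grid line, and (b) those fully contained in a cell of $\G_\OPT \setminus \G_w$. Type (a) is handled by a shifting argument over a uniform grid offset in $[0,\epsc)^2$: each small circle (radius $\leq \epss = \epsc^2$) lies within $\epss + \eps\epsc/2$ of a grid line in either coordinate with probability $O(\eps)$, so the expected weight of type-(a) circles is $O(\eps)\cdot w(\C_S^*)$, and we fix a grid offset attaining this deterministically.

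For type (b): the cells of $\G_\OPT \setminus \G_w$ all intersect the thin annulus $T_i \setminus C_i$ of width $O(\eps/n)$ for some $i$, so they are covered by $\sum_i O(r_i/\epsc)$ cells of total area $O((1/\eps)^{O_k(1)} \cdot \epsc)$, which is polynomially small since $\epsc = \epsl^{12}$. The main obstacle will be converting this area bound into a weight bound, since $\OPT$ could in principle concentrate high weight inside gray cells. I would resolve this by invoking the \emph{guaranteed empty corner} property highlighted in the introduction: each knapsack corner contains a constant-area region that no large circle can cover in any feasible placement, and this region therefore lies entirely in $\G_w$. Since the gray-cell area is much smaller than this corner reservoir, all type-(b) small circles can be repacked into $O(1)$ reduced white cells of the reservoir via a density-based argument (for instance, bounding each small circle by its enclosing square and applying $\mathtt{NFDH}$, losing only a constant factor in density). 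Taking $\C_S$ to be the good circles together with these repacked type-(b) circles then yields $w(\C_S) \geq (1-\eps)\cdot w(\C_S^*)$ with all circles placed non-overlappingly inside $|\G_w|$ reduced grid cells, as required.
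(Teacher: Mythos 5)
Your construction of $\G_w$ as the cells avoiding every $T_i = D_i \oplus B_i$ is correct and matches what the paper does; the polynomial-time claim is fine. However, your existence argument for $\C_S$ has three substantive gaps, and it misses the one trick the paper actually uses to convert area bounds into weight bounds.

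First, the classification of bad circles is incomplete. You define type~(b) as circles fully inside a cell of $\G_\OPT \setminus \G_w$, i.e.\ cells disjoint from $\OPT$'s large circles yet not white. But a small circle of $\OPT$ can sit fully inside a gray cell that \emph{partially} overlaps a large circle in $\OPT$'s placement; that cell is in $\G \setminus \G_\OPT$, so the circle is neither good, nor type~(a) (it may be far from all grid lines), nor type~(b). The paper avoids this by classifying directly into $\C_{cut}$, circles in gray cells, and circles in white cells, where ``gray'' is defined by the uncertainty window and not by $\OPT$'s actual placement.

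Second, your shifting argument for type~(a) moves the \emph{entire} grid, but $\G_w$, the white/gray/black partition, and the corner regions $S_w$ all depend on the grid, and the algorithm fixes the grid before computing $\G_w$. A grid offset good for $\OPT$'s small circles can change which cells are white, so the $\C_S$ you produce is certified against a different $\G_w$ than the one the algorithm uses. The paper sidesteps this entirely: it fixes the grid, bounds $\area(\C_{cut}) \le 8\epss/\epsc \le \eps\epsl^2/64$ by a deterministic strip count (Lemma~\ref{lem:small-circles-grid-cells}), and then handles the shrinkage to $(1-\eps)\epsc$ by an independent \emph{per-cell} strip-removal argument (Lemma~\ref{lem:grid-resource-augmentation-1}) that does not touch the grid.

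Third, and most importantly, your corner-reservoir repacking of type~(b) circles ignores that the reservoir cells are white cells that already contain good small circles of $\OPT$. Repacking into them displaces those circles, and you never bound the resulting weight loss; the geometric fact that the corner region is white only bounds \emph{area}, not profit. This is exactly the ``area-to-weight'' obstacle you flag, and the reservoir property alone does not resolve it. The paper's resolution is to delete the small circles in the $\eps$-fraction of white cells of \emph{smallest total profit}: this costs at most $\eps\cdot w(\C_S^*)$ by averaging, frees area $\ge \eps\area(\G_w) \ge \eps\epsl^2/4$, and this freed area dominates $\area(\G_g) + \area(\C_{cut}) \le \eps\epsl^2/5 + \eps\epsl^2/64$. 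That least-profitable-cells step is the missing ingredient in your proposal; without it the area bounds never translate into a $(1-\eps)$ profit guarantee.
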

We will prove Lemma~\ref{lem:structured-packing} later in Section~\ref{sec:structured-packing}. Using it, we compute an approximation to the set $\C_S$ via \ari{\Cref{cor:fat-ra}.}

We pack the computed circles into our grid cells $\G_w$, and denote them by $\C'_S$. In particular, they do not intersect any of the
large circles in $\C_{L}^{*}$ in any legal placement of them. Our solution (corresponding to the considered guesses) consists of $\C_{L}^{*}\cup \C'_S$. Recall that we can guarantee that these circles can be packed non-overlappingly inside the knapsack. Also, for the circles in
$\C'_S$ we computed their placement \emph{exactly} and for the circles in $\C_{L}^{*}$ we computed their placement up to our polynomially small error of $\eps/n$. As mentioned in Section~\ref{subsec:Improve_Large}, we can then reduce this error to an exponentially small error.

\subsection{Structural packing for small circles}\label{sec:structured-packing}
In this section, we prove Lemma~\ref{lem:structured-packing}.
First, we show that intuitively almost \agtwo{every} small circle in $\C_{S}^{*}$ is contained inside some grid cell.
Formally, we show that the total area of all other small circles in $\C_{S}^{*}$ is small.
For any set $\mathcal{S}$ of circles or grid cells, we define $a(\mathcal{S})$ to be the total area of the elements in $\mathcal{S}$.

\begin{lem}
	\label{lem:small-circles-grid-cells}
	Let $\mathcal{C}_{cut}\subseteq \C_{S}^{*}$ be the set of all small circles in $\C_{S}^{*}$ that intersect more than one grid cell. We have that $\area(\mathcal{C}_{cut}) \leqslant 8 \epss/\epsc \leqslant \eps \epsl^2/64$.
\end{lem}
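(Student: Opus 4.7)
The plan is to reduce the claim to two simple observations: that each circle in $\mathcal{C}_{cut}$ must cross at least one interior grid line, and that for any fixed grid line the total area of small circles crossing it is very small. A circle is contained in a single cell $G_{\ell,\ell'}$ exactly when it avoids all of the interior grid lines, so each $C_i\in\mathcal{C}_{cut}$ intersects at least one line from the family of $1/\epsc-1$ horizontal lines $\{y=j\epsc\}_{j=1}^{1/\epsc-1}$ or from the corresponding $1/\epsc-1$ vertical lines, giving at most $2(1/\epsc-1)\le 2/\epsc$ relevant lines.

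Next I would bound, for a fixed grid line $L$, the total area of circles in $\mathcal{C}_S^*$ that meet $L$. If $C_i$ meets $L$ then its center lies within distance $r_i\le\epss$ of $L$, so $C_i$ is entirely contained in the strip of width $4\epss$ centered on $L$ (the center is at distance at most $\epss$ from $L$, and the circle extends a further $r_i\le\epss$ on each side). Intersected with the unit knapsack, this strip has area at most $4\epss\cdot 1=4\epss$. Since the circles in $\mathcal{C}_S^*$ are packed pairwise non-overlappingly, the total area of circles meeting $L$ is at most $4\epss$.

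Summing this per-line bound over the at most $2/\epsc$ interior grid lines (any double-counting only weakens the upper bound, since a cut circle that crosses several lines just contributes to each strip) gives
\[
\area(\mathcal{C}_{cut})\;\le\;\frac{2}{\epsc}\cdot 4\epss\;=\;\frac{8\epss}{\epsc}.
\]
For the second inequality I would simply substitute the parameter choice $\epsc=\epsl^{12}$, $\epss=\epsc^2=\epsl^{24}$, which yields $8\epss/\epsc=8\epsl^{12}$; since $\epsl=\eps^{(k)}$ is one of the global constants from \Cref{lem:create-gap} and can be taken small enough that $8\epsl^{10}\le\eps/64$, multiplying by $\epsl^2$ gives the claimed $\eps\epsl^2/64$. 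No part of this looks delicate; the one step worth double-checking is the strip-containment claim, which is just the triangle inequality together with $r_i\le\epss$.
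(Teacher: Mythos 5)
Your argument is correct and matches the paper's proof essentially line for line: you bound the area by covering the circles of $\mathcal{C}_{cut}$ with strips of width $4\epss$ around each of the at most $2/\epsc$ interior grid lines, then substitute $\epsc=\epsl^{12}$ and $\epss=\epsl^{24}$ to get $8\epsl^{12}$, and finally use $\epsl\le\eps\le 1/2$ for the last inequality (the paper cites exactly this fact). The only cosmetic difference is that you spell out the triangle-inequality step justifying the strip width, which the paper states without derivation.
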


\begin{proof}
Each small circle has diameter at most $2\epss$. 
	For each horizontal (resp. vertical) gridline $\ell$, each $C \in \mathcal{C}_{cut}$ must lie in a strip of height (resp. width) $4\epss$ and width (resp. height) 1. Totally there are $2/\epsc$ gridlines (see Figure~\ref{fig:cut_cell_illustration}).
    
    \begin{figure}[h!]
	\begin{center}
		\includegraphics[width=9cm]{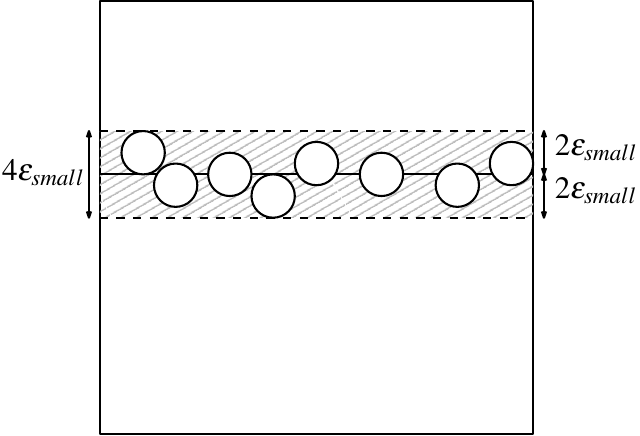}
		\caption{Obtaining the strip of height \(4\eps_{small}.\)}
		\label{fig:cut_cell_illustration}
	\end{center}
    \end{figure}
 
	Thus $\area(\mathcal{C}_{cut}) \leqslant 8 \epss/\epsc = 8 \epsl^{12}  \leqslant  \eps \epsl^2/64$. The last inequality follows from the fact that $\epsl \leqslant \eps \leqslant 1/2$.
\end{proof}

We will repack the circles in $\mathcal{C}_{cut}$ later such that each of them is contained inside one single grid cell.
Thus, for each small circle $C_{i}\in\C_{\OPT} \setminus \C_{cut}$ there is a grid cell $G_{\ell,\ell'}$ for
some $\ell,\ell' \in\{0,1,...,1/\epsc-1\}$ such that $C_{i}$ is contained in $G_{\ell,\ell'}$ in $\OPT$.
When we select and place small circles, we must be careful
that they do not intersect any large circles from $\C_{L}^{*}$. One
difficulty for this is that we do not know the precise coordinates
of the large circles. Therefore, we place small circles only into grid
cells that do not overlap with any large circle from $\C_{L}^{*}$ in
any legal placement of them.
Formally, we partition the cells
in $\G$ into three types: white, gray\label{key}, and black cells (see \Cref{fig:grid_cell_partition}).
\begin{figure}[h]
		\begin{center}
		\includegraphics[width=8cm]{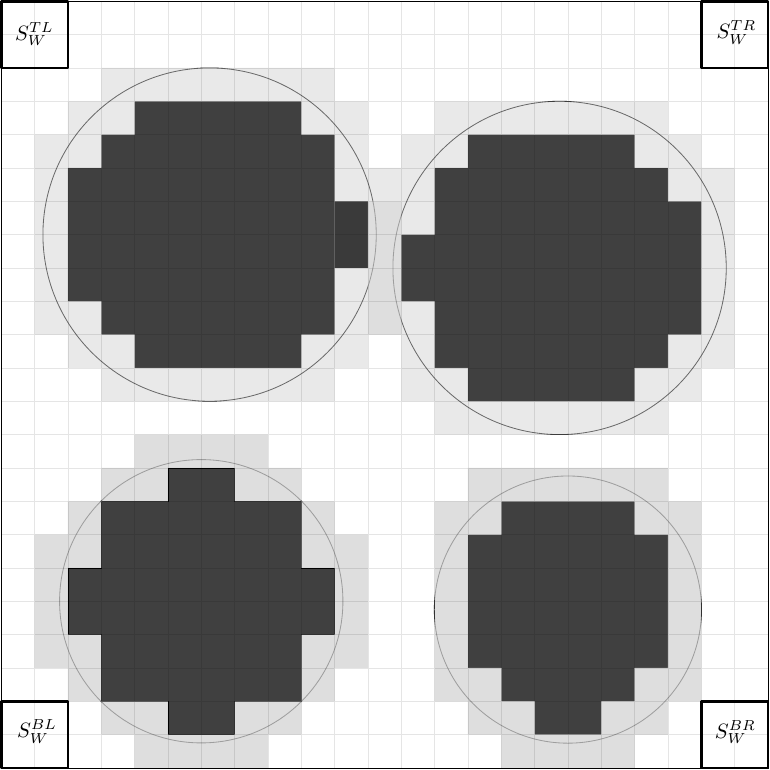}
		\caption{Partitioning grid cells into white, black, and gray cells. Later corner regions $S_W^{BL}, S_W^{TL}, S_W^{BR}, S_W^{TR}$ are used to to pack items in gray cells. }
		\label{fig:grid_cell_partition}
	\end{center}
\end{figure}
\begin{defn} Let $G_{\ell,\ell'}\in\G$ for some $\ell,\ell'\in\{0,1,...,1/\epsc-1\}$.
	The cell $G_{\ell,\ell'}$ is
	\begin{itemize}
		\item \emph{white} if $G_{\ell,\ell'}$ does not intersect with any circle
		$C_{i}\in\C_{L}^{*}$ for any legal placement of $C_{i}$,
		\item \emph{black} if $G_{\ell,\ell'}$ is contained in some circle $C_{i}\in\C_{L}^{*}$
		for any legal placement of $C_{i}$,
		\item \emph{gray }if $G_{\ell,\ell'}$ is neither white nor black.
	\end{itemize}

\end{defn} The gray cells are problematic for us since a gray cell
might be (partially) covered by a large circle in $\C_{L}^{*}$ but
we do not know by how much (and which part of the cell). Therefore,
we do not place any small circles into gray cells. However, $\OPT$
might place small circles into these cells (and obtain the profit of
these circles). On the other hand, we can show that the number of gray
cells is very small, only a small fraction of all grid
cells can be gray. In order to do this, we use \agtwo{the fact} that the values $\tilde{x}_{i}^{(1)},\tilde{x}_{i}^{(2)}$
for each circle $C_{i}\in\C_{L}^{*}$ estimate the placement of each
large circle relatively accurately, and that the grid cells are relatively
small. This allows us to prove that almost all
cells are black or white. Also, we can compute all gray
cells efficiently. Let $\G_{g}\subseteq\G$ denote
the set of all gray grid cells in $\G$.

\begin{lem}
	\label{lem:gray-circles}
	The total area of gray cells $\area(\G_{g})$ is at most $\eps \epsl^2/5$. We can compute $\G_g$ in polynomial time.
\end{lem}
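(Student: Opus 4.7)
The plan is to show that gray cells can only appear in a thin neighborhood around the boundary of some large circle (taken over all legal placements of that circle), and to bound the area of this neighborhood.

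First, I would characterize gray cells geometrically. A grid cell $G_{\ell,\ell'}$ is gray exactly when there exists some $C_i \in \C_L^*$ and two legal placements of the large circles under which $G_{\ell,\ell'}$ is in different states (intersected vs.\ not, or intersected but not contained vs.\ contained). Since the admissible set of centers of $C_i$ is a box of side $\eps/n$, as we range over legal placements, the boundary of $C_i$ sweeps out an annular region of width at most $\sqrt{2}\,\eps/n$ around a circle of radius $r_i$. A cell $G_{\ell,\ell'}$ can be gray on account of $C_i$ only if it meets this swept annulus (otherwise the cell is either always in or always out of $C_i$, hence black or white).

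Next, I would bound the total area of grid cells that meet the swept annulus of a single $C_i$. The annulus has perimeter length at most $2\pi r_i \le 2\pi$ and width $O(\eps/n)$, so any cell of side $\epsc$ that meets it lies within the Minkowski expansion of the annulus by $\sqrt{2}\epsc$. A standard covering estimate yields a total cell area of at most $O(r_i)\cdot(\epsc + \eps/n) = O(\epsc)$, since $\eps/n \le \epsc$ for $n$ large. Summing over all $|\C_L^*| \le (1/\eps)^{O_k(1)}$ large circles (by Proposition \ref{big_sol}) gives
\[
\area(\G_g) \;\le\; (1/\eps)^{O_k(1)} \cdot O(\epsc).
\]
Recalling that $\epsc = \epsl^{12}$ and $\epsl = \eps^{(k)}$ is at most an $O_k(1)$-th power of $\eps$, we have $\epsl^{12} \ll \eps \cdot \epsl^2/(5\cdot (1/\eps)^{O_k(1)})$, so the bound $\area(\G_g) \le \eps\,\epsl^2/5$ follows.

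For the algorithmic part, I would compute $\G_g$ by iterating over the $O(1/\epsc^2)$ grid cells and the $(1/\eps)^{O_k(1)}$ large circles. For each pair $(G_{\ell,\ell'}, C_i)$, I would determine the minimum and maximum over legal centers $(x_i^{(1)}, x_i^{(2)})$ of the distance to $G_{\ell,\ell'}$, and of the maximum distance to its corners; these are elementary optimizations over a box (the admissible center region) and yield, in $O(1)$ time per pair, whether $G_{\ell,\ell'}$ is always disjoint from $C_i$, always contained in $C_i$, or neither. The cell is white iff the first condition holds for every $C_i$, black iff the second holds for some $C_i$, and otherwise gray. The total running time is polynomial, completing the proof.

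The main obstacle is the covering step: arguing that the grid cells touching the swept annulus have total area only $O(L_i\,\epsc)$ rather than naively $O(L_i \cdot \epsc)$ per circle times polynomially many circles blowing up the bound. The key observation that rescues us is that the swept annulus width $O(\eps/n)$ is negligible compared to $\epsc$, so the cell count is essentially driven by $L_i/\epsc$ cells each of area $\epsc^2$, giving $O(\epsc)$ per circle, and the very rapid shrinking of $\epsl$ compared to $\eps$ (by a factor of $24^k$ in the exponent) gives enough slack to absorb the $(1/\eps)^{O_k(1)}$ multiplicative loss from summing over large circles.
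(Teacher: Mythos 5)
Your argument is correct, but it takes a genuinely different route from the paper. The paper proves the bound by a \emph{ratio/charging argument}: for each large circle $C_L$ it shows $\area(\G_{g,C_L})/\area(\G_{b,C_L}) \le 48\epsc/\epsl$, and then uses the fact that the black cells over all large circles are disjoint subsets of the knapsack (so $\sum_{C}\area(\G_{b,C}) \le 1$) to deduce $\area(\G_g) \le 48\epsc/\epsl$. You instead bound each per-circle gray area \emph{absolutely} by $O(r_i\epsc) = O(\epsc)$ and then apply a naive union bound over the $(1/\eps)^{O_k(1)}$ large circles, relying on the exponent slack coming from $\epsc = \epsl^{12}$ and $\epsl = \eps^{24^k}$ to absorb the polynomial multiplicative loss. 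Both are valid. Your approach avoids having to notice that the black cells partition disjoint area, but it leans more heavily on the specific choice of $\epsc$ as a high power of $\epsl$ and on the particular bound on $|\C_L^*|$; the paper's ratio argument is tighter and more robust (it works for any $\epsc$ mildly smaller than $\epsl$, independent of the number of circles). A cleaner intermediate version of your argument would be to sum $O(r_i\epsc)$ over circles using $\sum_i r_i \le 1/(\pi\epsl)$ (from $\sum_i \pi r_i^2 \le 1$ and $r_i\ge\epsl$), which recovers the paper's $O(\epsc/\epsl)$ bound directly without the extra $(1/\eps)^{O_k(1)}$ factor.

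One small imprecision worth flagging: your statement that a cell is gray ``exactly when'' it switches state under two legal placements of some $C_i$ is not quite an equivalence — the correct and sufficient direction for your union bound is only that \emph{if} a cell is gray \emph{then} it meets the swept annulus of some $C_i$. This holds because gray implies not white (so $G$ intersects some $C_i$ for some placement), and gray implies not black (so $G$ is not contained in $C_i$ for some placement); interpolating between these placements gives a legal placement where $G$ meets the boundary circle of $C_i$. Your parenthetical ``otherwise the cell is either always in or always out of $C_i$, hence black or white'' glosses over the fact that ``always out of $C_i$'' alone does not make a cell white (another circle could be responsible). With this clarification, the argument goes through, and the algorithmic part is fine as stated.
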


\begin{figure}[ht]
	\begin{center}
		\includegraphics[width=8cm]{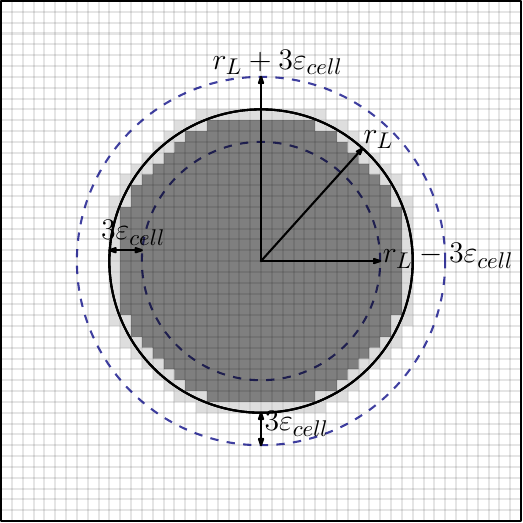}
		\caption{Three types of cells for a particular circle}
		\label{fig:gray_cell_partition_types}
	\end{center}
\end{figure}
\begin{proof}
	Consider a large circle $C_L$ with radius $r_L \geqslant \epsl$. Let $\G_{g, C_L}, \G_{b, C_L}$ be the set of gray and black grid cells that overlap with $C_L$.
	We will first show that $\G_{g, C_L}$ is much smaller compared to  $\G_{b, C_L}$.
	Note that for $C_L$,  we only have an estimate $\tilde{x}_{C}^{(1)},\tilde{x}_{C}^{(2)}$ of the coordinate of its center up to
	a \emph{polynomially} small error of $\eps/n$. 
	
	First, we claim that all the cells intersected by a circle centered at  $(\tilde{x}_{C}^{(1)},\tilde{x}_{C}^{(2)})$ of radius $r_L- 3\epsc$ are in $\G_{b, C_L}$. 
	For contradiction, assume there is a gray cell $S_g$ intersected by a circle $C$ centered at  $(\tilde{x}_{C}^{(1)},\tilde{x}_{C}^{(2)})$ of radius $r_L- 3\epsc$.
	Then, the distance of any point in $S_g$ from the center of the circle is at most $r_L- 3\epsc +\sqrt{2}\epsc < r_L-\eps/n$ for large enough $n$. Thus  $S_g$  is completely contained within $C$.
	
	Similarly, we can show that the cells that are outside of a circle centered at  $(\tilde{x}_{C}^{(1)},\tilde{x}_{C}^{2})$ of radius $r_L+ 3\epsc$ are not in $\G_{b, C_L}$.
	
	Hence, $\area(\G_{b, C_L}) \geqslant \pi (r_L- 3\epsc)^2$ and $\area(\G_{g, C_L}) \leqslant \pi (r_L+3\epsc)^2 - \pi(r_L-3\epsc)^2=12 \pi \epsc r_L$.
	Hence, 
	\[
	\frac{\area(\G_{g, C_L})}{\area(\G_{b, C_L})} \leqslant \frac{12 \pi \epsc r_L}{\pi (r_L- 3\epsc)^2} \leqslant \frac{12 \epsc r_L}{(r_L/2)^2} \leqslant 48\frac{\epsc}{\epsl} = 48 \epsl^{11} \leqslant \eps \epsl^2/5
	\]
	The second inequality follows from the fact that $\epsl \leqslant \eps \leqslant 1/2$ and $r_L \geqslant \epsl$.  The last inequality uses the fact that $\eps \leqslant 1/2$.
	Therefore, $\sum_{C \in \C_{L}^{*}} {\area(\G_{g, C_L})} \leqslant (\eps \epsl^2/5) \cdot \sum_{C \in \C_{L}^{*}}  {\area(\G_{b, C_L})} \leqslant \eps \epsl^2/5$.
\end{proof}

See Figure~\ref{fig:gray_cell_partition_types} for an illustration. 

Unfortunately, it is not sufficient for us that there are
only a few gray cells. It might be that almost all cells are either
gray or black and, hence, we need to place most of the selected small
circles into gray cells (in order to obtain an $(1+\eps)$-approximate
solution).

However, we can show that this is not the case. We prove that the
number of white cells (which we can safely use for small circles) is
at least by a factor $1/\eps$ larger than the number of gray
cells. To show this, we exploit the geometry of the circles. In each
corner of the knapsack, there are cells that cannot intersect with
any large circle, simply because the grid cells are small compared to
the large circles and because of the shape of the large circles (see the corner regions in \Cref{fig:grid_cell_partition}).
Hence, these grid cells are white. Let $\G_{w}\subseteq\G$ denote
the set of all white grid cells in $\G$. 

\begin{lem}
	\label{lem:white-circles}
	The total area of white grid cells $\area(\G_{w})$ is at least $\epsl^2/4$. We can compute $\G_w$ in polynomial time.
\end{lem}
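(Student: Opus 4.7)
The plan is to identify, in each of the four corners of the knapsack, a region that cannot be covered by any large circle under any legal placement, and then argue that this region is tiled by white grid cells up to a negligible boundary. The geometric heart of the argument is that a legal placement of a large circle forces its center to be at distance at least $r_i \geq \epsl$ from every edge of the knapsack, so the corner itself and a nontrivial neighborhood of it are geometrically unreachable.

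First I would focus on the bottom-left corner. Any legal placement of $C_i\in\C_L^*$ places its center $(c_1,c_2)$ with $c_1,c_2\geq r_i$, so the admissible centers lie in the quadrant $\{c_1,c_2\geq r_i\}$. For a point $(x,y)$ with $0\leq x,y<r_i$, the closest admissible center is $(r_i,r_i)$, so $(x,y)$ lies in some admissible disk of radius $r_i$ iff $(x-r_i)^2+(y-r_i)^2\leq r_i^2$. Viewed as a quadratic in $r_i$, this forces $r_i \in [(x+y)-\sqrt{2xy},\,(x+y)+\sqrt{2xy}]$. Hence $(x,y)$ is safe whenever $(x+y)+\sqrt{2xy} < \epsl$. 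Squaring, the boundary curve becomes $(x-\epsl)^2+(y-\epsl)^2=\epsl^2$, so the safe region is exactly $[0,\epsl]^2$ minus the disk of radius $\epsl$ centered at $(\epsl,\epsl)$, an area of $\epsl^2(1-\pi/4)$.

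By symmetry the same analysis holds at each of the four corners, giving total safe area $4\epsl^2(1-\pi/4)=(4-\pi)\epsl^2$. A grid cell fully contained in the safe region is white by definition. Cells straddling the boundary lie within the $\epsc$-neighborhood of four quarter-arcs of total length $O(\epsl)$, so they number $O(\epsl/\epsc)$ per corner and contribute total area $O(\epsl\cdot\epsc)=O(\epsl^{13})$ using $\epsc=\epsl^{12}$. For sufficiently small $\epsl$ this yields $\area(\G_w)\geq (4-\pi)\epsl^2 - O(\epsl^{13}) \geq \epsl^2/4$.

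For the algorithmic part, I would enumerate all $1/\epsc^2$ grid cells and, for each cell paired with each of the $(1/\eps)^{O_k(1)}$ circles $C_i\in\C_L^*$, check whether the axis-aligned rectangle of legal centers of $C_i$ has minimum distance strictly greater than $r_i$ from the cell. Minimum distance between two axis-aligned rectangles takes constant time, so the full procedure is polynomial and a cell is declared white exactly when the test succeeds for every large circle. The main subtlety in the analysis is confirming that admitting radii up to $1/2$ does not shrink the safe region: for every point $(x,y)$ close to a corner the upper root $(x+y)+\sqrt{2xy}$ stays well below $1/2$, so the constraint $r_i\leq 1/2$ is non-binding and the single-circle characterization of the safe region is tight.
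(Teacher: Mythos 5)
Your proof is correct, and the high-level idea---identify a corner region that no large circle can reach under any legal placement---is the same as the paper's. The interesting difference is in how the unreachable region is identified. The paper uses a short translation/contradiction argument to show that a fixed square of side $\eps_{\mathrm{large}}/4$ at each corner is untouched, giving exactly the stated $\eps_{\mathrm{large}}^2/4$. You instead characterize the unreachable region \emph{exactly}: you project a candidate corner point onto the set of feasible circle centers, derive the quadratic $r_i^2 - 2(x+y)r_i + (x^2+y^2) \le 0$, and conclude that $(x,y)$ is safe iff $(x+y)+\sqrt{2xy} < \eps_{\mathrm{large}}$, i.e.\ iff $(x,y)$ lies in $[0,\eps_{\mathrm{large}}]^2$ outside the disk of radius $\eps_{\mathrm{large}}$ centered at $(\eps_{\mathrm{large}},\eps_{\mathrm{large}})$. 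This gives the sharper area $(4-\pi)\eps_{\mathrm{large}}^2 \approx 0.86\,\eps_{\mathrm{large}}^2$, comfortably exceeding the target after subtracting the $O(\eps_{\mathrm{large}}\cdot\eps_{\mathrm{cell}})$ area of boundary-straddling cells---a detail the paper glosses over (its square $S_w$ need not align with the grid). Your argument is thus both tighter and more rigorous about grid alignment, at the price of slightly heavier algebra; the paper's version is shorter and suffices for the constant in the statement. Your algorithmic observation that whiteness reduces to a rectangle-to-rectangle minimum-distance test is also a clean, correct way to phrase the computation. The only sentence I would soften is the word ``exactly'' in describing the safe region: you only need it as a lower bound, and there do exist unreachable points outside $[0,\eps_{\mathrm{large}}]^2$, so ``at least'' would be more accurate, but this does not affect the proof.
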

\begin{figure}[ht]
	\begin{center}
		\includegraphics[width=8cm]{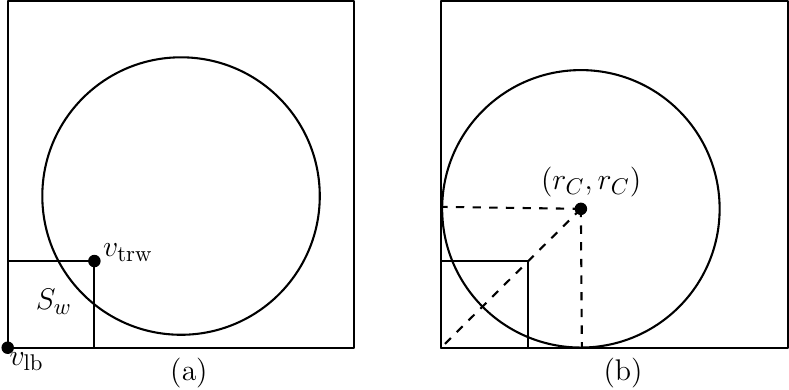}
		\caption{(a) The square \(S_w\) and the large circle \(C\), (b) Orientation of circle \(C\) after translation}
		\label{fig:corner_cell}
	\end{center}
\end{figure}

\begin{proof}
	In \Cref{fig:corner_cell}, consider a corner (w.l.o.g., left bottom corner) $v_{lb}$ of the knapsack. Let  $S_w$ be a square region of side length $\epsl/2$ such that $S_w$ is contained in the knapsack and shares $v_{lb}$.
	We will show that no large circles intersect $S_w$ and thus all grid cells in $S_w$ are white.
	
	For contradiction, assume that some large circle $C$ intersects $S_w$. As $r_C \geqslant \epsl$, if $C$ intersects either the top or right edge of $S_w$ it must contain the top right corner of $S_w$ at $(\epsl/2, \epsl/2)$.  Thus, $(\epsl/2, \epsl/2)=: v_{trw} \in interior(C)$. 
	Then consider $C$ in the knapsack and we can move $C$ towards $v_{lb}$  such that it touches two boundary edges of the knapsack, i.e.,  the center of $C$ is now at $(r_C, r_C)$. Note that, after this translation, $C$ should continue to intersect $S_w$, i.e.,  $(\epsl/2, \epsl/2) \in interior(C)$. 
	However, distance of $v_{trw}$ from $(r_C, r_C)$ is $\sqrt{(r_C-\frac{\epsl}{2})^2 + (r_C-\frac{\epsl}{2})^2} \geqslant \sqrt{2(\epsl-\frac{\epsl}{2})^2} \ge \epsl$.
	Thus,  $v_{trw} \notin interior(C)$. This is a contradiction. 
	
	We argue this analogously for all the corners of the knapsack to complete the proof. 
\end{proof}

Using Lemmas \ref{lem:small-circles-grid-cells}, \ref{lem:gray-circles},
and \ref{lem:white-circles}, we show that there is a $(1+\eps)$-approximate
solution in which each small circle is contained in some white cell;
in particular, no small circle is placed inside a gray cell. To prove
this, we delete all small circles in the $O(\eps |\G_w|)$ white grid
cells with the smallest total profit among all white cells and place all circles from gray cells and
all circles from $\C_{cut}$ into those.

\begin{lem}
	\label{lem:small-circles-white-grid-cells-1}
	There is a set $\C'_S \subseteq \C^*_S$ of small circles with $p(\C'_S)\geqslant (1-O(\eps)) p(\C^*_S)$ such that there is a packing for $\C'_S$ using the grid cells in $\G_w$ only.
	\end{lem}

\begin{proof}
	We delete all circles in the $2\eps$ fraction of white cells with the smallest total
	weight among all white cells, incurring an $2\eps$ fraction loss in the weight.
	This makes $\eps \epsl^2/2$ space available for packing circles in the gray cells and $\C_{cut}$.
	Now, the total volume of circles in the gray cells and $\C_{cut}$ is at most  $\eps \epsl^2/5 + \eps \epsl^2/64 \leqslant \eps \epsl^2/4$. As these circles are small, the available space is enough to pack them even using the \texttt{NFDH} algorithm. 

	Recall that the radius of each
	small circle is at most $\epss$, which is at least by a factor
	$1/\eps$ smaller than the size of each grid cell (which is $\epsc \times\epsc$).
	Therefore, with a shifting argument, we can show that from each grid
	cell $G_{\ell,\ell'}\in\G$ we can delete small circles from $\C^*_S$ contained in
	$G_{\ell,\ell'}$ whose total profit is only an $O(\eps)$-fraction
	of the overall profit of those circles, such
	that the remaining circles in $G_{\ell,\ell'}$ fit even inside a smaller
	grid cell of size $(1-\eps)\epsc \times(1-\eps)\epsc$.
	If we apply this operation to each grid cell in $\G_{w}$, the remaining
	circles fit non-overlappingly into $|\G_{w}|$ (small) knapsacks of
	size $(1-\eps)\epsc \times(1-\eps)\epsc$
	each.
\end{proof}

We complete the proof of Lemma~\ref{lem:structured-packing} by applying the following lemma to $\C_{S}:= \C''_{S}$
which shows that we can sacrifice a factor of $1+O(\eps)$ \agtwo{to be} able to use resource augmentation when we pack the small circles.

\begin{lem}
	\label{lem:grid-resource-augmentation-1}
	There is a set of small circles $\C''_{S}\subseteq\C'_{S}$ such that $w(\C''_{S})\ge(1-\eps)w(\C'_{S})$ and it is possible to place the circles in $\C''_{\sml}$ non-overlappingly inside $|\G_{w}|$ square knapsacks of size $(1-\eps)\epsc \times(1-\eps)\epsc$ each.
	\end{lem}

\begin{proof}
	Consider a square knapsack in $\G_{w}$ of side length $\epsc$.
        Consider a random strip $[x, \min\{\epsc, x+\eps\epsc\}]\times [0,\epsc] \cup [0, \max\{0, x+\eps\epsc -\epsc\}] \times [0,\epsc]$ of width $\eps\epsc$ where $x$ is chosen uniformly at random in $[0,\epsc]$. Then the probability of a small circle of radius $r$ intersected by the strip is at most $(2r+\eps\epsc)/\epsc$.
	By removing all circles intersected by a random strip of height (resp. width) $\epsc$ and width (resp. height) $\eps\cdot\epsc$ in both directions, we incur a loss of at most $2(2 r/\epsc + \eps)$ fraction of optimal profit. As $r\le \epss$, we incur a loss of at most $O(\eps)$ fraction of profit as $\epss = \epsc^2 = \epsl^{24}$.
\end{proof}

\subsection{Higher dimensions}

Our techniques from the previous section extend directly to the problem of packing
hyperspheres in any (constant) dimension $d$ which yields our main theorem for the setting of packing
(hyper-)spheres. See Appendix for the detailed proof. 

\begin{theorem}\label{thm:d-dim-ptas} Let $d\in\N$ be a fixed constant. For the geometric knapsack problem with
	$d$-dimensional hyperspheres, there is a polynomial time algorithm
	that computes a set of hyperspheres $\tilde{\C}$ with $p(\tilde{\C})\geqslant(1-\eps)\OPT$
	that can be placed non-overlappingly inside the knapsack. For all
	but $O_{\eps}(1)$ hyperspheres in $\tilde{\C}$ we compute the
	precise coordinate of the corresponding packing; for the other $O_{\eps}(1)$
	circles we compute an estimate of the packing with an additive error
	of at most $\frac{1}{2^{n/\eps}}$ in each dimension. \end{theorem}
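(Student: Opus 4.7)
The plan is to lift the two-dimensional argument of Sections \ref{subsec:Large_Guess}--\ref{sec:structured-packing} to arbitrary constant dimension $d$, essentially symbol-by-symbol, and then to upgrade the accuracy of the coordinates of the guessed large hyperspheres from polynomially small to exponentially small. First I would run the shifting argument (the $d$-dimensional analogue of Lemma~\ref{lem:create-gap}) to obtain values $\epsl$ and $\epss=\epsl^{12}$ such that hyperspheres with radius in $(\epss,\epsl]$ carry at most an $\eps$-fraction of $w(\OPT)$; a hypersphere is \emph{large} if $r_i>\epsl$ and \emph{small} if $r_i\le\epss$. A volume argument analogous to Proposition~\ref{big_sol} gives at most $(1/\eps)^{O_k(1)}$ large hyperspheres, so we can enumerate them and also enumerate an estimate $(\tilde{x}_i^{(1)},\dots,\tilde{x}_i^{(d)})$ of each center up to an additive error of $\eps/n$ in each coordinate.

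Next I would write the $d$-dimensional analogue of the system (\ref{eq:equations}): box constraints $\max\{\tilde{x}_i^{(s)},r_i\}\le x_i^{(s)}\le\min\{\tilde{x}_i^{(s)}+\eps/n,1-r_i\}$ for each $s\in\{1,\dots,d\}$, together with pairwise separation $\sum_{s=1}^d(x_i^{(s)}-x_j^{(s)})^2\ge(r_i+r_j)^2$. This remains a semi-algebraic decision problem in $O_k(1)$ variables of degree two, so Theorem~\ref{thm:GrigorevVV88} still decides feasibility in polynomial time and certifies that a legal placement exists. For the small hyperspheres I would impose a $d$-dimensional grid of side length $\epsc=\epsl^{12}$ and classify cells as white, gray, or black exactly as in Section~\ref{sec:structured-packing}. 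The volume of ``cut'' small hyperspheres is at most $O(\epss/\epsc)$, and the volume of gray cells is bounded by $O(\eps\cdot\epsl^d)$ (it is still controlled by the surface area of the large hyperspheres times the grid resolution). The crucial geometric input, the white-cell lower bound, also survives: in each of the $2^d$ corners of the knapsack, no large hypersphere can cover the $d$-dimensional ``spherical cap complement'' of side length $\Theta(\epsl)$, which contributes a constant fraction of $\epsl^d$ of white volume and dominates the gray volume by a factor $1/\eps$.

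With these ingredients in place, I would then apply Theorem~\ref{thm:fat-ra}, which is already stated for any constant dimension $d$ and any constant fatness (hyperspheres are $1$-fat), to pack a $(1+\eps)$-approximate subset of the small hyperspheres into $|\G_w|$ shrunk knapsacks of side length $(1-\eps)\epsc$. Combined with $\C_L^*$, this gives a set $\tilde{\C}$ whose profit is at least $(1-\eps)w(\OPT)$ and which provably admits a non-overlapping packing inside $[0,1]^d$. The remaining step is to replace the $\eps/n$ error on the centers of the $O_\eps(1)$ large hyperspheres by an error of $1/2^{n/\eps}$. For this I would invoke the last part of Theorem~\ref{thm:GrigorevVV88} with $\alpha:=1/2^{n/\eps}$: since the number of variables and constraints is $O_\eps(1)$ and the coefficients of the system have bit-length polynomial in $n$, the running time is $(\log(1/\alpha)\cdot M)^{O(1)}(k\Delta)^{O(t^2)}=\mathrm{poly}(n)$, and the returned rationals $\bar{x}_i^{(s)}$ satisfy $|\bar{x}_i^{(s)}-x_i^{(s)}|\le\alpha$ in every coordinate, as required.

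The main obstacle I anticipate is purely a careful bookkeeping issue rather than a conceptual one: re-proving the corner/white-volume lower bound in dimension $d$, because one has to argue that the union of caps ``cut off'' at each of the $2^d$ corners by \emph{any} legally placed large hypersphere still covers a $\Omega_d(\epsl^d)$ volume, and that this quantity strictly dominates the $O(\eps\epsl^d)$ volume of gray cells for $\epsc=\epsl^{12}$ and $\eps$ sufficiently small. Given our parameter choice this follows by a direct volume computation analogous to Lemma~\ref{lem:white-circles}, where the polynomial slack between $\epsl$ and $\epsc$ absorbs the dimension-dependent constants. Everything else (shifting, gap creation, the Grigoriev--Vorobjov step, and the resource-augmentation subroutine) either is already stated in arbitrary dimension or generalizes without modification.
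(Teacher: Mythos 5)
Your route is the same as the paper's: re-run the classification, the Grigoriev--Vorobjov feasibility check, the white/gray/black grid partition, and the resource-augmentation subroutine (Theorem~\ref{thm:fat-ra}), then invoke the last part of Theorem~\ref{thm:GrigorevVV88} with $\alpha = 1/2^{n/\eps}$ to sharpen the coordinates of the $O_\eps(1)$ large spheres. The overall structure is correct and matches the paper.

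One concrete bookkeeping error, though: you keep the two-dimensional exponents when you should scale them with $d$. You write $\epss=\epsl^{12}$ and $\epsc=\epsl^{12}$ (already internally inconsistent, since the scheme needs $\epss=\epsc^2\ll\epsc$), but even if that is fixed to the 2D values $\epss=\epsl^{24}$, $\epsc=\epsl^{12}$, the white-vs-gray comparison fails for general constant $d$. The gray volume is controlled by the ratio $O_d(\epsc/\epsl)$, and the white volume contributed by the corners is $\Theta_d(\epsl^d)$, so you need roughly $\epsc\le\eps\,\epsl^{d+1}/2^{O(d)}$. With $\epsc=\epsl^{12}$ that reduces to $\epsl^{11-d}\le\eps/2^{O(d)}$, which is false once $d\ge 11$. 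The paper therefore sets $\eps^{(j)}=(\eps^{(j-1)})^{12d}$ in the shifting argument, so that $\epss=\epsl^{12d}$ and $\epsc=\epsl^{6d}=\sqrt{\epss}$, and then the gray/black ratio becomes $O_d(\epsl^{6d-1})\le\eps(\epsl/2)^d$ for all constant $d$. Your phrase ``the polynomial slack between $\epsl$ and $\epsc$ absorbs the dimension-dependent constants'' has the right intuition, but it only becomes true once the gap is made $d$-dependent; please replace the fixed exponents by the $d$-scaled ones.

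Everything else you describe — at most $O_\eps(1)$ large hyperspheres by a volume bound, the semi-algebraic system in $O_\eps(1)$ variables of degree two, the $2^d$ corner regions of side $\Theta(\epsl)$ yielding $\Omega_d(\epsl^d)$ white volume, the strip/shifting removal of cut hyperspheres, and the refinement of the large-sphere coordinates to additive error $1/2^{n/\eps}$ in $\mathrm{poly}(n)$ time — tracks the paper's appendix proof (Lemmas~\ref{lem:small-hypersphere-grid-cells}, \ref{lem:gray-circles-d}, \ref{lem:white-circles-d}, \ref{lem:small-circles-white-grid-cells-d}, \ref{lem:grid-resource-augmentation-d}, \ref{lem:structured-packing-d}).
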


\section{Polygons}
\label{sec:polygon}

{
In this section, we adapt our techniques from the previous sections to obtain a \textsf{PTAS} for the case that each input object is a fat and convex
polygon with a constant number of sides (edges) whose lengths differ by at most a constant factor. Moreover,  
each angle between any two adjacent edges is at least $\pi/2$. Note that, these classes of objects generalize regular polygons with greater than 4 sides. 
}

{Formally, we assume that we are given a set
$\P = \{ P_1, P_2, \dots , P_n \}$ of $n$ polygons that are \emph{$(f,\alpha,q, t)$-well-behaved}, i.e., for each
polygon~$P_{i}\in\P$ we assume that}
\begin{itemize}
	\item $P_i$ is fat, i.e., $r_{i}^{\mathrm{out}}/r_{i}^{\mathrm{in}}\le f$
	for some (global) constant $f \ge 1$, where
	$r_{i}^{\mathrm{out}}$ and $r_{i}^{\mathrm{in}}$ is
	the radius of the smallest circle containing $P_{i}$ and the radius
	of the largest circle contained in $P$, respectively,
	\item the angle between any two consecutive edges of $P_i$ is at least $\pi/2+\alpha$,
	for some (global) constant $\alpha>0$,
	\item $P_i$ has at most $q$ {edges} for some (global) constant $q$
 \item $P_i$ is $t$-regular, i.e., the lengths of any two of its edges
	differ at most by a factor of $t$.
\end{itemize}

For instance, regular pentagons are $(2,\pi/10,5,1)$-\emph{well-behaved}.
For each polygon \(P_i \in \P\) we denote by \(\profit_i\) its profit, and for a set of polygons
$\mathcal{P}' \subseteq \mathcal{P}$ we denote by $\profit(\mathcal{P}'):=\sum_{P_i \in \mathcal{P'}} w_i$ their total profit.
For any {object} $C$ we define its area to be $\area(C)$, and
for any collection of objects $\mathcal{A}$ we denote their {total} area by $\area(\mathcal{A}):=\sum_{P_i \in A} \area(P_i)$.
{We want to pack a subset of $\P$ non-overlappingly into the
unit knapsack $K := [0,1]^2$. We do not allow rotations in our packing.
}

Let $\eps>0$. We require $\eps < g(f,\alpha,q,t)$ for a function $g$, which will be defined later.
In fact, we will choose $\eps < g(f,\alpha,q,t) := \min \{\frac{1}{8f}, \frac{\pi^2 sin^2(\alpha)}{q^2t^2(2+80f)}\}$.
 In contrast to
the case with hyperspheres, we show that we can compute each coordinate of our
packing exactly. We classify each polygon $P_{i}\in\P$ as large or small according
to the respective value $r_{i}^{\mathrm{in}}$.
For this, we define values $\epsl$ and $\epss$.

\begin{lem}\label{lem:create-gap-polygons}
Given a constant $\eps\in(0,1)$, there is a
	set of global constants $\eps^{(0)},\ldots,\eps^{(1/\eps)}\geqslant 0$ such that
	$\eps^{0}=\eps$, and $\eps^{(j)}=(\eps^{(j-1)})^{20}$ for each $j\in\{1,\ldots,{1/\eps}-1\}$
	and a value $k\in\{0,\ldots,{1/\eps}-1\}$ with the following properties. If we define
	$\epsl:=\eps^{(k)}$ and $\epss:=\eps^{(k+1)}$, then by
	losing a factor of
 $1+\eps$ in our approximation
	guarantee, we can assume that each polygon $P_{i}\in\P$ satisfies
	that $r_{i}^{\mathrm{in}} \le \epss$ or $r_{i}^{\mathrm{in}} > \epsl$.

	\end{lem}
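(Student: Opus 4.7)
The plan is to mimic the standard shifting/pigeonhole argument used in Lemma~\ref{lem:create-gap}, now applied to the inradii of the polygons and with the step function $h$ replacing the $(\cdot)^{24}$ contraction. Concretely, I would fix $\eps^{(0)}$ to be a suitable global constant (for instance, the upper bound $g(f,\alpha,q,t)$ on $\eps$, or simply a small constant depending only on $f,\alpha,q,t$), and then recursively define $\eps^{(j)} := h(\eps^{(j-1)})$ for each $j\in\{1,\dots,1/\eps\}$. Because $h$ is decreasing, this produces a strictly decreasing sequence $\eps^{(0)}>\eps^{(1)}>\dots>\eps^{(1/\eps)}\ge 0$ of global constants, and the required inequality $\epss\le h(\epsl)$ will hold \emph{by construction}, for whichever index $k$ we end up picking, since $\eps^{(k+1)}=h(\eps^{(k)})$.

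Next, for each $j\in\{0,1,\dots,1/\eps-1\}$ I would define the band
\begin{equation*}
  B_j \;:=\; \bigl\{P_i\in\OPT \;:\; r_i^{\mathrm{in}} \in (\eps^{(j+1)},\,\eps^{(j)}]\bigr\}.
\end{equation*}
These $1/\eps$ bands are pairwise disjoint subsets of $\OPT$, so $\sum_{j=0}^{1/\eps-1} w(B_j)\le w(\OPT)$. By an averaging (pigeonhole) argument, at least one index $k\in\{0,\dots,1/\eps-1\}$ satisfies $w(B_k)\le \eps\cdot w(\OPT)$. I would then set $\epsl:=\eps^{(k)}$ and $\epss:=\eps^{(k+1)}$, and discard the polygons of $B_k$ from $\OPT$; the residual solution has profit at least $(1-\eps)\,w(\OPT) \ge (1+\eps)^{-1}w(\OPT)$, which is the claimed loss.

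It remains to check the structural conclusion: every polygon $P_i$ surviving the deletion satisfies $r_i^{\mathrm{in}}\le \epss$ or $r_i^{\mathrm{in}}>\epsl$. Indeed, if $r_i^{\mathrm{in}}>\eps^{(k)}=\epsl$ we are done; otherwise $r_i^{\mathrm{in}}\le \eps^{(k)}$, and since $P_i\notin B_k$ we must have $r_i^{\mathrm{in}}\le \eps^{(k+1)}=\epss$, as required. (Polygons with $r_i^{\mathrm{in}}>\eps^{(0)}$ are trivially large in the sense of the lemma, so no polygon is excluded by our choice of the starting value $\eps^{(0)}$.) Since the algorithm does not know the ``correct'' index~$k$ in advance, it enumerates all $1/\eps$ possibilities and keeps the best feasible solution produced; this increases the running time only by a constant factor.

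The argument is essentially routine, and the only minor subtlety—hardly an obstacle—is to make sure the global constant $\eps^{(0)}$ and the function $h$ chosen here are compatible with the downstream structural lemmas (so that $\epss\le h(\epsl)$ is actually the ``right'' condition for the later arguments about gridding and placing large polygons). Since $h$ is fixed by those later arguments and $\eps^{(0)}$ can be chosen arbitrarily small once $f,\alpha,q,t$ are fixed, this compatibility is immediate.
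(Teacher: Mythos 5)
Your proposal is correct and follows essentially the same route as the paper: the paper's one‑line proof invokes the generic shifting lemma (Lemma~\ref{lem:shifting}), which is exactly the pigeonhole‑over‑$1/\eps$‑bands argument you write out in full. The only cosmetic difference is that the paper instantiates the shifting lemma with the concrete contraction $\rho_k=(\rho_{k-1})^{24}$ (mirroring the sphere case), whereas you keep $h$ abstract as the lemma statement does; both readings are compatible with the downstream use of the lemma, so nothing is missing.
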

\begin{proof}
Define $\rho_0=\eps$, and $\rhoi{k} := (\rho_{k-1})^{20}$ for $k \in [1/\eps]$. For each integer $k \in [1/\eps]$,  we define the following sets of items:
$\mathcal{C}_k = \{C_i \in \mathcal{C} \mid r_{i}^{\mathrm{in}}  \in (\rho_{k},\rho_{k-1}]\}$.
Then by pigeonhole principle, one of the sets $\mathcal{C}_{\tau} \cap \OPT$ will contain items of weight at most $\eps w(\OPT)$.
Define $\epsl=\rho_{\tau-1}$ and $\epss=\rho_{\tau}$. The set of $\rho_i$'s defines the set of global constants. Note that this also ensures $\epsl\le \eps$.
\end{proof}

We guess the value $k\in\{0,\ldots,1/\eps-1\}$ due to Lemma~\ref{lem:create-gap-polygons}
and define that a polygon $P\in\P$ is \emph{large }if $r_{i}^{\mathrm{in}}\ge \eps^{(k-1)}=\epsl$
and \emph{small }if $r_{i}^{\mathrm{in}}<\eps^{(k)}=\epss$. We discard
all input polygons that are neither large nor small.
Similar to the case of hyperspheres, we guess the large polygons in
$\OPT$. Since they are fat, there can be only a constant number of
them.

\begin{prop} 
	\label{prop: finite_circle}
	Any feasible solution can contain at most $(1/\eps)^{O(1)}$
	large polygons. \end{prop}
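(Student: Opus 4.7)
The plan is to use a simple volume/area argument analogous to the proof of Proposition~\ref{big_sol} for circles. By definition, a large polygon $P_i$ satisfies $r_i^{\mathrm{in}} \ge \epsl$, so it contains its inscribed disk, which has area at least $\pi \epsl^2$. Since any feasible solution packs its polygons non-overlappingly inside the unit knapsack $K=[0,1]^2$, the inscribed disks are pairwise disjoint and all lie in $K$. Comparing areas yields
\begin{equation*}
|\{\text{large polygons in the solution}\}| \cdot \pi \epsl^2 \;\le\; \area(K) \;=\; 1,
\end{equation*}
so the number of large polygons is at most $\lfloor 1/(\pi \epsl^2)\rfloor$.

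It then remains to express this bound in the form $(1/\eps)^{O(1)}$ as claimed. By Lemma~\ref{lem:create-gap-polygons}, $\epsl = \eps^{(k)}$ for some $k \in \{0,\ldots,1/\eps - 1\}$, where the sequence $\eps^{(0)},\eps^{(1)},\ldots$ is determined by the iteration $\eps^{(j)} = h(\eps^{(j-1)})$ for a fixed function $h$. Hence $1/\epsl$ is a constant depending only on $\eps$ (through at most $1/\eps$ iterations of $h$), and the bound $1/(\pi \epsl^2)$ collapses to $(1/\eps)^{O(1)}$ in the paper's convention, where the $O(1)$ absorbs the dependence on the iterated function $h$ (exactly as the sibling $O_k(1)$ notation did in Proposition~\ref{big_sol}). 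There is essentially no obstacle here: fatness is not needed for this count, since the inscribed disk already gives the area lower bound; the fatness and edge-angle conditions will instead be invoked later, when we analyze the interaction of large polygons with the grid.
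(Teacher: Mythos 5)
Your proof is correct and matches the paper's approach: the paper does not give an explicit proof of this proposition, but its proof of the analogous Proposition~\ref{big_sol} for circles uses exactly the same area-comparison idea, and your observation that each large polygon contains an inscribed disk of area at least $\pi\epsl^2$ is the right one-line adaptation. You also correctly note that fatness is not needed here and that the $(1/\eps)^{O(1)}$ bound is really "a constant depending only on $\eps$," in parallel to the $O_k(1)$ phrasing used for circles.
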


We now calculate the placement of the large polygons using a linear program. Define $\P_{L}^*$ to be the set of large polygons in $\OPT$.
Note that, since they are convex polygons instead of circles or hyperspheres, we can compute an \emph{exact }placement of the polygons with \emph{rational }coordinates.
For this, we use the following approach, which was also noted by Abrahamsen et al. in~\cite{abrahamsen2020framework}. Consider
the placement of the polygons $\P_{L}^{*}$ in $\OPT$. Each
side $e$ of each polygon $P_{i}\in\P_{L}^{*}$ is contained in a
line $\{x :a_{e}x=b_{e}\}$ for some vector $a_{e}$
and a scalar $b_{e}$. For each corner vertex $v$ of each polygon $P_{j}\in\P_{L}^{*}$,
we have that $a_{e}x\ge b_{e}$ or $a_{e}x\le b_{e}$ (or both); we
guess which of these cases applies. Let $v_{i}$ be a special vertex for each polygon $P_i$ defined as a vertex with the least value of $x_{i}^{(1)}$. Then, the coordinates
$({x}_{i}^{(1)},{x}_{i}^{(2)})$ of the special vertex $v_{i}$
of each polygon $P_{i}\in\P_{L}^{*}$ satisfy a system
of linear inequalities defined as follows.
There are three types of inequalities:
\begin{itemize}
	\item \emph{Positivity constraints}: $x_{i}^{(1)},x_{i}^{(2)} \ge0, \forall P_{i}\in\P_{L}^{*}\nonumber$
	
	\item \emph{Packing constraints}: $\forall P_{i},P_{j} \in \P_{L}^{*}$ any vertex $v$ in polygon $P_{j} \neq P_{i}$ cannot lie inside $P_{i}$. A vertex $v$ lies inside polygon $P_{i}$ if it satisfies the inequalities described above.
	
	\item \emph{Container constraints}:  $\forall P_{i}\in\P_{L}^{*}\nonumber$, $0 \leq x_{i}^{(1)} \leq a_i$ and $b_i \leq x_{i}^{(2)} \leq 1-c_i$, where $a_i, b_i$, and $c_i$ can be calculated exactly in constant time for a given polygon. They represent the maximum value of $x_{i,v}^{(1)}-x_{i}^{(1)}$, maximum value of $x_{i}^{(2)}-x_{i,v}^{(2)}$, and maximum value of $x_{i,v}^{(2)}-x_{i}^{(2)}$, where $x_{i,v}^{(1)},x_{i,v}^{(2)}$ vary over all vertices $v$ of polygon $P_{i}$.
	
\end{itemize}

From Proposition \ref{prop: finite_circle}, we know that there can only be at most $(1/\eps)^{O(1)}$
large polygons in $\OPT$. We take all possible subsets $\P_L ^*$ of this size and smaller from the set $\P$, which are polynomially many.
We compute a feasible solution of packing these subsets $P_L ^*$ to it which is easy since it has only
$O_{\eps, k, f, \alpha, t, q}(1)$ variables and constraints for each subset for polynomially many subsets, by using the ellipsoid method.

Let $\P_{S}$ be the set of all small polygons. 
Now for the correctly guessed large subset $P_L ^*$, we compute a near-optimal packing of the small polygons $P_S^* \subseteq \P_{S}$. Our goal is to pack the small polygons in the bin with only a loss of $\eps$-fraction of profit, corresponding to the guessed $P_L^*$. We then return the solution $P_L^* \cup P_S^*$ which has maximum weight over all guessed values of $P_L^*$ initially and claim that this packing is near-optimal.

In order to pack small polygons, we need an analogous version of Lemma \ref{lem:structured-packing}.
We define grid cells again similarly such that $\epsc$ is much smaller compared to $\epsl$ and $\epss$ is much smaller compared to $\epsc$.
Let us choose $\epsc= \epsl^{5}$ and $\epss=\epsc^{4}$. Recall that $\epss = \epsl^{20}$.
{Intuitively, since our input polygons are well-behaved, we can prove that a certain amount of space is not used by the large polygons, similar to Lemma~\ref{lem:white-circles}.} To ensure this, we require that $\eps$ is sufficiently small, which in particular also yields a bound on $\epsc$. Using this, we show that there are many grid cells that are disjoint from any large polygon (similarly as the white grid cells in Section~\ref{sec:spheres}).
W.l.o.g.~we assume now on that $\P_{L}^{*}$ be the set of large polygons in $\OPT$ and $\P_{S}^{*}$ be the set of all small polygons in $\OPT$.

\begin{lemma}\label{lem:structured-packing-polygon}
	There is a function $g:\mathbb{R}^4_{\geqslant 0} \rightarrow \mathbb{R}_{\geqslant 0}$ such that if all given polygons are $(f,\alpha,q, t)$-well-behaved and $\eps < g(f,\alpha,q,t)$ then
	\begin{itemize}
	 \item in polynomial time we can compute a set of grid cells $\G_w$ such that no grid cell in $\G_w$ intersects with any
	polygon $P_i \in\P_{L}^{*}$,
	\item there is a set of small polygons $P_S \subseteq \P_{S}^*$ such that $w(P_{S}) \geqslant (1-O(\eps))w(\P_{S}^{*})$ for the optimal packing of large polygons, and
	\item the polygons in $P_S$ can be packed non-overlappingly inside $|\G_w|$ grid cells that have size
	\mbox{$(1-\eps) \epsc \times (1-\eps) \epsc$} each.
	\end{itemize}
\end{lemma}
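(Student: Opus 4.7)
My plan is to mimic the proof of Lemma~\ref{lem:structured-packing} in Section~\ref{sec:structured-packing}, exploiting that, unlike in the hypersphere setting, the linear program of Section~\ref{sec:polygon} gives us an \emph{exact} rational placement of every polygon in $\P_{L}^{*}$. Consequently, there is no need for a ``gray'' class of cells: each grid cell is either disjoint from $\bigcup_{P\in\P_{L}^{*}}P$ or meets it, and this dichotomy can be decided exactly in polynomial time by checking each of the constantly many polygons of $\P_{L}^{*}$ against each of the $O(1/\epsc^{2})$ cells. I define $\G_w$ to be the set of grid cells that do not intersect any polygon of $\P_{L}^{*}$. Fixing the function $g(f,\alpha,q,t)$ will amount to requiring that $\epsc$, and hence $\epss$ via Lemma~\ref{lem:create-gap-polygons}, be small enough relative to $\epsl$ so that all cell-level arguments go through.

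The only genuinely new step, relative to Section~\ref{sec:structured-packing}, is the polygon analog of Lemma~\ref{lem:white-circles}, stating that the total area of $\G_w$ is $\Omega_{\alpha,f,t}(\epsl^{2})$. For this I would argue that near every corner of the unit knapsack there is a region of size $\Theta_{\alpha,f,t}(\epsl)\times\Theta_{\alpha,f,t}(\epsl)$ in which a constant fraction of the area is \emph{not} covered by any polygon of $\P_{L}^{*}$. The crucial input is convexity together with the angle condition that every interior angle strictly exceeds $\pi/2+\alpha$, which prevents a polygon from flushly occupying a right-angled corner of the knapsack: it cannot meet the two axis-aligned knapsack edges with a single vertex, nor wrap around the corner, so an uncovered region of inradius $\Omega_{\alpha,f}(\epsl)$ is forced. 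The edge-length condition, bounding the ratio of the longest to shortest edge of any large polygon by $t$, rules out the degenerate case in which one very long edge lies along a knapsack side and pushes the gap so deep into the knapsack that it disappears into a neighboring grid structure. Choosing $\epsc \ll \epsl$ then embeds $\Omega(\epsl^{2}/\epsc^{2})$ whole grid cells inside these corner gaps, giving the desired bound on $|\G_w|$.

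The remaining steps follow Lemmas~\ref{lem:small-circles-grid-cells}, \ref{lem:small-circles-white-grid-cells-1}, and \ref{lem:grid-resource-augmentation-1} almost verbatim. First, since each small polygon has $r^{\mathrm{out}}\le f\epss$, the total area of those small polygons in $\P_{S}^{*}$ that cross more than one grid cell is bounded by $O(f\epss/\epsc)$, which, by choosing $h$ in Lemma~\ref{lem:create-gap-polygons} sufficiently strongly decreasing, is at most $\eps$ times the area of $\G_w$; losing a $(1+\eps)$ factor in profit I may therefore assume every selected small polygon is contained in a single grid cell. Second, I relocate each small polygon currently inside a cell outside $\G_w$ by discarding the $O(\eps|\G_w|)$ cells of $\G_w$ of smallest total small-polygon profit and repacking the displaced polygons there; the lower bound on $|\G_w|$ from the previous paragraph, together with the fact that the non-$\G_w$ cells have area $O(1)$, guarantees that only an $O(\eps)$ fraction of $w(\P_{S}^{*})$ is lost. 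Finally, a shrinking argument identical to that of Lemma~\ref{lem:grid-resource-augmentation-1} repacks the resulting set $\P_S$ non-overlappingly inside $|\G_w|$ subknapsacks of side length $(1-\eps)\epsc$ each, as desired.

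I expect the main obstacle to be the quantitative corner-gap estimate in the second paragraph: for circles the rotational symmetry made it essentially immediate, whereas for polygons one must case-analyze separately whether the polygon approaches the corner through a vertex or through the interior of an edge, and combine the angle bound and the edge-length bound to rule out configurations in which the gap collapses. The precise definition of $g(f,\alpha,q,t)$ will then be read off from the worst case of this analysis, together with the additional constraint that $\epss/\epsc$ be small enough to make the cross-grid-line area bound used in the third paragraph smaller than $\eps$.
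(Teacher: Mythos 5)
Your plan reproduces the skeleton of the paper's proof (cut-polygon bound, corner-gap lower bound on the white cells, re-packing by discarding the lowest-profit $O(\eps|\G_w|)$ white cells, then the resource-augmentation shrink), and your observation that the exact rational placement from the LP removes the \emph{guessing-uncertainty} motivation for the white/gray/black trichotomy is correct. However, you then conclude that ``there is no need for a gray class of cells'' and drop the corresponding area bound entirely, and that is a genuine gap. Even with exact placements, a grid cell that is \emph{partially} covered by some $P_i\in\P_{L}^{*}$ (meets it but is not contained in it) can still house small polygons of $\OPT$ in its uncovered portion. You exclude all such cells from $\G_w$, so those small polygons must be relocated, and for your relocation step to work you must show that their total area is $O(\eps)\cdot\area(\G_w)$. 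Your justification---``the fact that the non-$\G_w$ cells have area $O(1)$''---is the trivial bound given by the whole knapsack and does not suffice: you need the area of the \emph{partially-covered} cells to be tiny compared to $\area(\G_w)=\Omega(\epsl^2)$, which in turn forces a constraint like $\epsc\ll\eps\epsl^2$. The paper supplies exactly this missing ingredient in Lemma~\ref{lem:gray-circles-polygon}: by convexity, the perimeter of each large polygon has length $O(f\cdot r^{in})$, so the strip of grid cells it crosses has area $O(f\epsc\, r^{in})$, which is charged against the black-cell area $\Omega((r^{in})^2)$ to get $\area(\G_g)\le O(f\epsc/\epsl)\le f\eps\epsl^2/3$. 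You should reinstate this step; everything else in your outline (the corner-gap argument from the angle and edge-length conditions, the $\mathcal{P}_{cut}$ bound, the shifting/shrinking) matches the paper's route.
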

We will prove Lemma~\ref{lem:structured-packing-polygon} {with similar techniques as we used in the proof of Lemma~\ref{lem:structured-packing}} in Section \ref{susec:smallpoly}.
In order to select and place the small polygons, we use the algorithm due to \Cref{cor:fat-ra}. 

{Let $P_S^*$ denote our computed solution for the small polygons.
We return the solution $\tilde P = P_L^* \cup P_S^*$ which has maximum weight over all initially guessed combinations for the polygons in $P_L^*$ and their approximate coordinates.}
This proves the following theorem. 

\begin{theorem} \label{thm:polygon} For any constants $f,q\ge1$ and $t,\alpha>0$ there is a
	\textsf{PTAS} for the geometric knapsack problem for \emph{$(f,\alpha,q, t)$-well-behaved} polygons.\end{theorem}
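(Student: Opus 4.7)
The plan is to combine the three main ingredients already developed in this section: the gap-creation lemma, the LP-based exact placement of the large polygons, and the structured-packing lemma for the small polygons together with the resource-augmentation PTAS from Theorem~\ref{thm:fat-ra}. First I would enumerate the $O(1/\eps)$ choices of $k$ promised by Lemma~\ref{lem:create-gap-polygons} so that, at the cost of a $(1+\eps)$ factor, every input polygon is either large ($r_i^{\mathrm{in}} > \epsl$) or small ($r_i^{\mathrm{in}} \le \epss$), discarding the medium ones. For each such $k$ I would enumerate all candidate sets $\P_L^{*} \subseteq \P$ of at most $(1/\eps)^{O(1)}$ large polygons, which is polynomial by Proposition~\ref{prop: finite_circle}, as well as, for every ordered pair of polygons in $\P_L^{*}$ and every edge of one of them, the bit that guesses on which halfplane of the supporting line each vertex of the other polygon lies.

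Given such a guess, I would check feasibility of the resulting linear system in the coordinates of the special vertices $v_i$ (positivity, pairwise non-overlap, containment in the knapsack). Since the system has $O_{\eps,f,\alpha,q,t}(1)$ variables and constraints, the ellipsoid method either certifies infeasibility and lets me reject the guess, or returns an exact rational placement of $\P_L^{*}$ inside the knapsack; so, unlike the circle case, every coordinate of every large polygon is computed exactly. For every surviving guess, I would apply Lemma~\ref{lem:structured-packing-polygon} to the placed $\P_L^{*}$ to obtain the family $\G_w$ of white grid cells disjoint from every large polygon, together with the existence of a set $\P_S \subseteq \P_S^{*}$ of profit at least $(1-O(\eps))\, w(\P_S^{*})$ that fits into $|\G_w|$ shrunken knapsacks of side $(1-\eps)\epsc$. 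I would then invoke Theorem~\ref{thm:fat-ra} with resource-augmentation factor $1+\Theta(\eps)$ on each of these shrunken knapsacks; the augmentation is absorbed by the slack between $(1-\eps)\epsc$ and $\epsc$, so the returned packing fits back into the original white cell without meeting any large polygon. Taking the union over all white cells yields a small-polygon solution $\P_S^{\mathrm{alg}}$ of total profit at least $(1-O(\eps))\, w(\P_S^{*})$.

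Finally, I would output the most profitable $\P_L^{*} \cup \P_S^{\mathrm{alg}}$ encountered across all guesses; its profit is at least $(1-O(\eps))\bigl(w(\P_L^{*}) + w(\P_S^{*})\bigr) \ge (1-O(\eps))\, w(\OPT)$, and rescaling $\eps$ by a constant yields the claimed $(1+\eps)$-approximation in polynomial time. The main obstacle underlying the whole argument is Lemma~\ref{lem:structured-packing-polygon}: the ``empty-corner'' argument used for circles must be adapted to polygons, and this is precisely where the strict angle bound $\pi/2+\alpha$, the bounded number of edges $q$, and the bounded edge-length ratio $t$ are exploited to guarantee that a constant fraction of the knapsack area is inaccessible to any legally placed large polygon, which in turn produces enough white grid cells to host a near-optimal collection of small polygons. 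Every other step is either a polynomial-sized enumeration, a constant-size LP, or a direct invocation of the machinery already developed in the paper.
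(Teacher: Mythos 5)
Your proposal matches the paper's proof essentially step for step: enumerating $k$ via Lemma~\ref{lem:create-gap-polygons}, enumerating candidate large-polygon sets of constant size together with the halfplane-side bits that linearize the non-overlap constraints, solving the resulting constant-size LP exactly (which is why, unlike the circle case, rational coordinates are available), and then packing small polygons into the white cells via Lemma~\ref{lem:structured-packing-polygon} and Theorem~\ref{thm:fat-ra}, absorbing the resource augmentation in the $(1-\eps)\epsc$-versus-$\epsc$ slack. You also correctly pinpoint Lemma~\ref{lem:structured-packing-polygon} (the polygon analogue of the empty-corner argument) as the place where the well-behavedness parameters $\alpha, q, t$ are actually used, so the argument is complete and faithful to the paper.
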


\subsection{Proof of \Cref{lem:structured-packing-polygon}}
\label{susec:smallpoly}
In this subsection we prove \Cref{lem:structured-packing-polygon}. First, we remove all small polygons in $\P_{S}^{*}$ that intersect more than one grid cell. The following lemma implies that they fit into very few grid cells.
\begin{lem}
	\label{lem: cut-polygon}
	Let $\mathcal{P}_{cut}\subseteq \P_{S}^{*}$ be the set of all small polygons in $\P_{S}^{*}$ that intersect more than one grid cell. We have that $\area(\mathcal{P}_{cut}) \le 8 f \epss/\epsc$. 
 Also, these items can be packed separately into $2/\epsc$ cells.
\end{lem}

\begin{proof}
	For each horizontal (resp. vertical) gridline $\ell$, each $P \in \mathcal{P}_{cut}$ must lie in a strip of height (resp. width) $4 r^{out} \le 4f r^{in} \le 4f \epss$ and width (resp. height) 1. 

\begin{figure}
	\begin{center}
		\includegraphics[width=14cm]{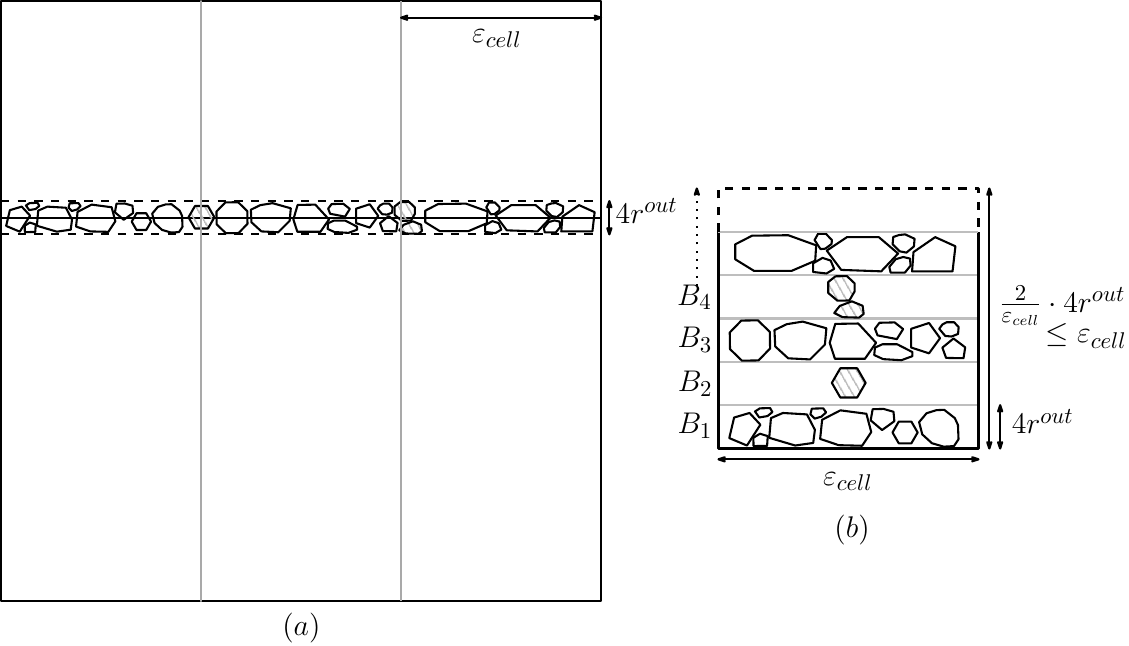}
		\caption{The items in \(\mathcal{P}_{cut}\) that are cut by a grid line can be packed into $2/\epsc$ boxes of height $4r^{out}\le 4f\epss$ and width \(\varepsilon_{cell}.\) All such boxes can be stacked and packed into a grid cell of side length $\epsc$.}
		\label{fig:gray_cell_partition}
	\end{center}
\end{figure}
 
All such items in a particular strip can be packed into $2/\epsc$ boxes each with width (resp. height) $\epsc$ and height (resp. width) $4f \epss$. These boxes can be stacked and packed into one grid cell, requiring $4f \epss \cdot \frac{2}{\epsc} \le \epsc$. 
This follows from our assumption $\epss=\epsc^{4}$ and $\epsc^{2} \le \eps \le \frac{1}{8f}$. 
See Figure \ref{fig:gray_cell_partition}.
 In total, there are $2/\epsc$ gridlines.
Thus, $\area(\mathcal{P}_{cut}) \le 8 f\epss/\epsc$ and all these items can be packed separately into a total of $2/\epsc$ cells.
\end{proof}

We will repack the polygons from $\mathcal{P}_{cut}$ later. Now we again partition the cells into gray, white, and black as before.

\begin{defn} Let $G_{\ell,\ell'}\in\G$ for some $\ell,\ell'\in\{0,1,\ldots,1/\epsc-1\}$.
	The cell $G_{\ell,\ell'}$ is
	\begin{itemize}
		\item \emph{white} if $G_{\ell,\ell'}$ does not intersect with any polygon
		$P_{i}\in\C_{L}^{*}$,
		\item \emph{black} if $G_{\ell,\ell'}$ is contained in some polygon $P_{i}\in\P_{L}^{*}$,
		\item \emph{gray }if $G_{\ell,\ell'}$ is neither white nor black.
	\end{itemize}
\end{defn} 
Let $\G_w, \G_b,\G_g$ be the set of white, black, and gray cells, respectively. Like before, the area of the gray grid cells is intuitively lost. However, we show that there are only few gray grid cells and that their area is small.

\begin{lem} \label{lem:gray-circles-polygon}
There are at most $\frac{80 f}{\epsl \cdot \epsc}$ gray grid cells and their total area is at most~$80 f \epsc/ \epsl$.
 We can compute $\G_g$ in polynomial time.
\end{lem} 
\begin{proof}
	The perimeter of a single large convex polygon $P_i$ is upper bounded by $2 \pi r^{out} \le 2 f \pi r^{in}$. Let us divide the perimeter of $P_i$ into at most $\lceil\frac{2 f \pi r^{in}}{\epsc}\rceil$ pieces of length $\epsc$.
	One such piece intersects at most $9$ grid cells, so the total area of grey cells created by $P_i$ is upper bounded by $9 \epsc^2 \lceil\frac{2 f \pi r^{in}}{\epsc}\rceil$, which is roughly  $\le 20f\pi r^{in}\epsc$. 
	
	From the proof of \Cref{lem:gray-circles} we know that all cells inside the circle with a radius of $r^{in}- 3\epsc$ are black. Hence we get that the fraction \[
	\frac{\area(\G_{g, P_L})}{\area(\G_{b, P_L})} \leq \frac{20 f \epsl \epsc}{(\epsl - 3 \epsc)^2} \leq \frac{80 f \epsc}{\epsl}.
	\]
	The second inequality follows from the fact that $\epsc \leq \frac{1}{6} \epsl$ for any $\eps \leq 1/2$.  
	The total area of all black cells is at most one, so we get that the total area of all gray cells is at most $\frac{80 f \epsc}{\epsl}$.
 And the total number of gray cells is at most $\frac{80 f}{\epsl \cdot \epsc}$.

\end{proof}

Similarly, as before, we show that the number of white cells is much larger than the number of gray cells.

\begin{lem} \label{lem:white-polygons}
	The total area of white grid cells $\area(\G_{w})$ is at least 
 $\left ( \frac{\pi^2 \sin^2(\alpha)}{q^2 t^2} \right )\cdot \epsl^2$. 
 We can compute $\G_w$ in polynomial time.
\end{lem}
\begin{proof}
First, note that since the polygon is $t$-regular and has at most $q$ sides, 
	$2\pi r^{in} \le q s_{\max} \le q t s_{\min}$, where $s_{\max}, s_{\min}$ denote the maximum and minimum length of the sides of a large polygon.
	Now since $r^{in} \ge \epsl$ for large polygons, we have $s_{\min} \ge \frac{2\pi \epsl}{q t}=:\ell^*$.

 Consider a corner (w.l.o.g., left bottom corner) $v_{lb}$ of the knapsack. Let  $S_w$ be an axis-aligned square region of side length $\ell^* \sin (\alpha) /2$ such that $S_w$ is contained in the knapsack and shares $v_{lb}$. $S_w$ lies inside the bin for small enough values of $\epsl$ and in turn $\eps$. We can show that no large polygons intersect $S_w$ and thus all grid cells in $S_w$ are white.

For contradiction, assume some large polygon $P$ intersects $S_w$. 
 There are two cases. 

\begin{figure}
	\begin{center}
		\includegraphics[width=12cm]{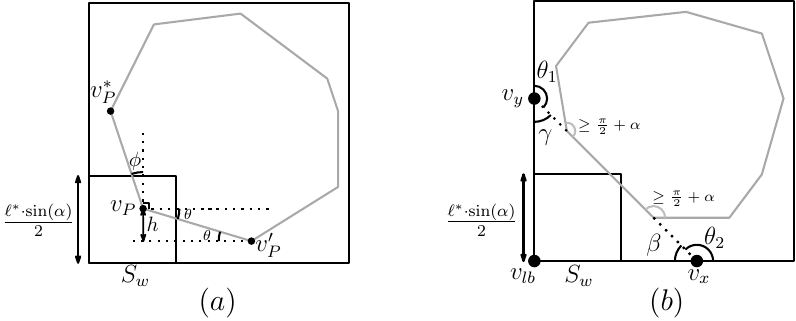}
		\caption{Cases when a polygon $P$ intersects $S_W$: (a) $S_W$ contains a vertex $v_P$ of $P$, (b) $S_W$ contains no vertices of $P$.}
		\label{fig:polygon_white_cell}
	\end{center}
\end{figure}

 In the first case, $P$ has a vertex $v_P=(x_1,y_1)$ inside $S_w$, so $y_1 <\ell^* \sin(\alpha)/2$. See Figure~\ref{fig:polygon_white_cell}(a) for an illustration. Let $v'_P, v^*_P$ be the adjacent vertices of $v_P$. Let $\theta$ (resp. $\phi$) be the acute angle that $v_P v'_P$ (resp. $v_Pv^*_P$) makes with $x$-axis (resp. $y$-axis). As $\angle v^*_P v_P v'_P \ge \pi/2+\alpha$, either $\theta \ge \alpha/2$ or $\phi \ge \alpha/2$. W.l.o.g, assume $\theta \ge \alpha/2$ (it is analogous for   $\phi \ge \alpha/2$).
Let $v'_P=(x_2, y_2)$ and $h=y_1-y_2$. Then we claim that $y_1 \ge h \ge \ell^* \sin(\theta) \ge \ell^* \sin(\alpha/2) \ge \ell^* \sin(\alpha)/2$, leading to a contradiction. 

In the other case, if some polygon edge $e$ cuts through $S_w$ then we can extend edge $e$ till the boundary and say it touches $x$-axis at $v_x$ and $Y$-axis at $v_y$. See Figure \ref{fig:polygon_white_cell}(b) for an illustration.  
Let $\beta:=\angle v_y v_x v_{lb}$  and $\gamma:=\angle v_x v_y v_{lb}$.
Let  $\theta_1, \theta_2$ be the obtuse angles made by $e$ with $y$-axis and $x$-axis resp., i.e., $\theta_1=\pi-\gamma$ and $\theta_2=\pi-\beta$. Then $\theta_1\ge \pi/2+\alpha$ and $\theta_2\ge \pi/2+\alpha$. 
Hence, $\beta = \theta_1-\pi/2 \ge \alpha$. Similarly, $\gamma \ge \alpha$. Thus, $\min\{\overline{v_{lb}v_x},\overline{v_{lb}v_y}\}\ge \ell^* \sin (\alpha)$. 
However, then $S_w$ cannot intersect $e$, leading to a contradiction. 
	
	We can now argue this analogously for all the corners of the knapsack and the lemma is proved. 
\end{proof}

Since there are much more white grid cells than gray grid cells, we can show that we can pack almost all (small) polygons from $\P^*_S$ into the white grid cells. Note that  $\P^*_S$ includes the polygons from $\P_{cut}$ that we removed earlier.

\begin{lem} \label{lem:small-circles-white-grid-cells}There
	is a set $\P'_S \subseteq \P^*_S$ of small polygons with
	$\profit(\P'_S)\ge (1-O(\eps)) \profit(\P^*_S)$
	such that there is a packing for $\P'_S$ using the grid cells in $\G_w$ only.
\end{lem}
\begin{proof}
	We delete all polygons in the 
$\eps$ fraction of white cells with the smallest total
	weight among all white cells, incurring an $\eps$ fraction loss in the weight. 
 This makes $\left ( \frac{\eps \pi^2 \sin^2 \alpha}{q^2 t^2} \right )\cdot (\epsl/\epsc)^2$
 number of cells available for packing polygons in the gray cells and $\P_{cut}$. 
	Now, the total number of cells to pack polygons in the gray cells and $\P_{cut}$ is at most 
$2/\epsc+ \frac{80f}{\epsl \epsc}$.
 So, we require the condition $2/\epsc+ \frac{80f}{\epsl \epsc} \le \left ( \frac{\eps \pi^2 \sin^2 \alpha}{q^2 t^2} \right )\cdot (\epsl/\epsc)^2$, or equivalently we need $2 \epsl^3+\epsl^2 \cdot 80f \le \eps (\frac{\pi^2 \sin^2 \alpha}{q^2 t^2})$. 
 For this to hold, we fix $\eps$ such that $\eps \le \frac{\pi^2 \sin^2(\alpha)}{q^2t^2(2+80f)}$.
 Then, $2 \epsl^3+\epsl^2 \cdot 80f \le 2 \epsl^2+\epsl^2 \cdot 80f \le \eps^2(2+80f) \le \eps \cdot \eps(2+80f) \le \eps (\frac{\pi^2 \sin^2 \alpha}{q^2 t^2})$.
\end{proof}

Finally, we sacrifice a factor of $1-O(\eps)$ in the profit from the small polygons in $\P'_S$ such that we can ensure that
the remaining small polygons fit even into smaller (white) grid cells of size $(1-\eps)\epsc \times(1-\eps)\epsc$ each.

\begin{lem} \label{lem:grid-resource-augmentation}There is a set
	of small polygons $\P''_{S}\subseteq\P'_{S}$ such that $w(\P''_{S})\ge(1-O(\eps))w(\P'_{S})$
	and it is possible to place the polygons in $\P''_{\sml}$ non-overlappingly
	inside $|\G_{w}|$ square knapsacks of size $(1-\eps)\epsc \times(1-\eps)\epsc$
	each. \end{lem}
\begin{proof}
	Consider a square knapsack in $\G_{w}$ of side length $\epsc$.
        Similar to Lemma \ref{lem:grid-resource-augmentation-1}, the probability that a small polygon is intersected by a random strip of width $\eps \epsc$ is at most $(\eps\epsc+2f\epss)/\epsc \le 3\eps$.
	Remove all polygons intersected by a random strip of width $\eps$ for both directions, we incur at most a $(1-O(\eps))$-factor loss in the weight.
\end{proof}
This completes the proof of Lemma~\ref{lem:structured-packing-polygon} with $\P_{S}:= \P''_{S}$.

\section{Conclusion}
\label{sec:conc} 
We almost settle the approximability of the geometric knapsack problem in the setting of 
packing spheres into a hypercube knapsack. However, it remains an open problem whether rational coordinates always suffice in an optimal packing.
If not, it would be an interesting question to determine the 
best approximation ratio one can obtain if we allow only rational coordinates for the centers of the circles (while the optimal packing has no such restrictions).
It would be also interesting to obtain a \textsf{PTAS} for the case of $d$-dimensional fat convex objects.
Another interesting but difficult open question is whether the case of convex but not necessarily fat input objects in the plane admits a \textsf{PTAS}. The best \agtwo{known} result for this setting is only an $O(1)$-approximation in quasi-polynomial time (assuming polynomially bounded integral input data \cite{MerinoW20}).  Already for the special case of axis-parallel rectangles, it is open whether a \textsf{PTAS} exists. 

\bibliography{references}

\appendix


\section{Shifting step}\label{sec:shifting}
\begin{lem}
	\label{lem:shifting}
	Let a given packing for a knapsack packing problem with input set \(\C\) of items be $\mathcal{P}$ and \(r\) be a total order on these objects and \(w_i\) be the profit associated with the \(i\)-th object. Let $\mathcal{C}_{\mathcal{P}}$ be the set of the items selected by $\mathcal{P}$ and we have $w(\mathcal{P}) = \sum_{C_{i}\in \mathcal{C}_{\mathcal{P}}} w_{i}$. Define $\rhoi{k} = h(\eps, k)$ where $h: \mathbb{R} \rightarrow \mathbb{R}$ is a given decreasing function. For each integer \(k\geqslant 1\), define the following sets of items:
	\(
	\mathcal{C}_k = \{C_i \in \mathcal{C} \mid r_{i} \in (\rho_{k},\rho_{k-1}]\}.
	\)
	Then there exists a \(\tau \leqslant \frac{1}{\eps}\) such that \(w(\C_{\tau}\cap\C_{\mathcal{P}})\leqslant\eps w(\mathcal{P})\).
\end{lem}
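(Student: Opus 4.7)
The plan is to apply a standard pigeonhole (averaging) argument over the $1/\eps$ candidate classes $\mathcal{C}_1, \dots, \mathcal{C}_{1/\eps}$. The key observation is that the intervals $(\rho_k, \rho_{k-1}]$ used to define the classes are pairwise disjoint, so the sets $\mathcal{C}_k$ are pairwise disjoint subsets of $\mathcal{C}$.

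First I would note that since the $\mathcal{C}_k$ are disjoint, the sets $\mathcal{C}_k \cap \mathcal{C}_{\mathcal{P}}$ for $k \in \{1, \dots, 1/\eps\}$ are also pairwise disjoint subsets of $\mathcal{C}_{\mathcal{P}}$. Hence
\[
\sum_{k=1}^{1/\eps} w(\mathcal{C}_k \cap \mathcal{C}_{\mathcal{P}}) \;\le\; \sum_{C_i \in \mathcal{C}_{\mathcal{P}}} w_i \;=\; w(\mathcal{P}).
\]
Next, by averaging, there must exist some index $\tau \in \{1, \dots, 1/\eps\}$ for which
\[
w(\mathcal{C}_\tau \cap \mathcal{C}_{\mathcal{P}}) \;\le\; \frac{w(\mathcal{P})}{1/\eps} \;=\; \eps \cdot w(\mathcal{P}),
\]
which is exactly the conclusion of the lemma.

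There is no real obstacle here; the statement is essentially a clean pigeonhole. The only subtlety is verifying disjointness of the classes, which follows directly from the fact that $\rho_k = h(\eps, k)$ gives a sequence of cut-off radii producing non-overlapping half-open intervals $(\rho_k, \rho_{k-1}]$. This partitioning is the standard setup used throughout the paper (e.g.\ in Lemmas~\ref{lem:create-gap} and \ref{lem:create-gap-polygons}) to identify a ``gap'' size range whose removal costs only an $\eps$-fraction of the profit.
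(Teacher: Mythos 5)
Your proof is correct and takes essentially the same approach as the paper: both rely on the disjointness of the classes $\mathcal{C}_1,\dots,\mathcal{C}_{1/\eps}$ followed by an averaging/pigeonhole argument to find a class $\mathcal{C}_\tau$ whose selected weight is at most an $\eps$-fraction of $w(\mathcal{P})$. Your version is actually cleaner, as it focuses directly on the existence claim and omits the paper's extraneous remarks about enumeration and the large/medium/small partition, which are downstream usage rather than part of the lemma itself.
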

\begin{proof}
We will partition the input set of $\mathcal{C}$ into large, medium, and small items using a partition function $\rho_k$ which we will design retrospectively after considering all the properties it needs to satisfy. The methodology is as follows, we partition the input $\mathcal{C}$ into several sets \(\mathcal{C}_k\)'s, and then select an appropriate index \(j \in [1/\eps]\) such that the sum of weights of all items from $\mathcal{P}$ in \( \mathcal{C}_j \leqslant \eps\cdot w(\mathcal{P})\) for a given constant $\eps>0$. We will declare this to be the medium items set $\mathcal{C}_M$ and discard it from $\mathcal{C}$.

Let $\eps>0$ be a positive real constant. We can assume that $\mathcal{P}$ has more than $\frac{1}{\eps}$ items. Otherwise, it is trivial.
Now, we define $\rho_0=1$, and $\rhoi{k} := h(\eps, k) \cdot \rho_{k-1}$ where $h: \mathbb{R} \rightarrow \mathbb{R}$ is the given decreasing function. For each integer $k \geqslant 1$ we define the following sets of items:
\begin{align*}
\mathcal{C}_k = \{C_i \in \mathcal{C} \mid r_{i} \in (\rho_{k},\rho_{k-1}]\}.
\end{align*}

Let $\tau$ be the smallest index such that the total weight, $w(\mathcal{C}_{\tau})$, of the items in $S_{\tau}$ is at most $\eps w(\mathcal{P})$. Observe that $\tau \leqslant \frac{1}{\eps}$ due to the pigeonhole principle. Partition the set $\mathcal{C}$ of items into three groups: the $\textit{large}$ items $\mathcal{C}_L = \{C_{i} \in \mathcal{C} \mid r_{i}>\rho_{\tau-1}\}$, the $\textit{medium}$ items $\mathcal{C}_M = \{C_{i} \in \mathcal{C} \mid \rho_{\tau} < r_{i} \leqslant \rho_{\tau-1}\}$ and the $\textit{small}$ items $\mathcal{C}_S = \{C_{i} \in \mathcal{C} \mid r_{i} \leqslant \rho_{\tau}\}$. Similarly, define sets $\mathcal{C}_L^* = \mathcal{C}_L \cap \mathcal{P}$, $\mathcal{C}_M^* = \mathcal{C}_M \cap \mathcal{P}$, and $\mathcal{C}_S^* = \mathcal{C}_S \cap \mathcal{P}$. Note that $\mathcal{C}_M^* = \mathcal{C}_{\tau}$, so the total weight of the medium items is at most $\eps\cdot w(\mathcal{P})$.
\end{proof}

\section{\textsf{PTAS} for $d$-dimensional hyperspheres}
In this section we prove \Cref{thm:d-dim-ptas}.
	In order to extend our techniques to \(d\)-dimension, we start with first creating partitions over the \(d\)-dimensional hyperspheres, similar to Lemma \ref{lem:create-gap},  by setting \(\eps^{j}=(\eps^{j-1})^{12d}\) and assume that medium hyperspheres contribute negligible profit to $\OPT$.
	
	Further, we notice that each large hypersphere has a volume of at least \( \dfrac{\pi^{\frac{d}{2}}}{\Gamma\left(\frac{d}{2} + 1\right)} \cdot \epsl^d
	\). Thus, there can be at most  \(\left(\frac{\pi^{\frac{d}{2}}}{\Gamma\left(\frac{d}{2} + 1\right)} \cdot \epsl^d\right)^{-1}\)  
 large hyperspheres in any feasible solution. Using the techniques as in \Cref{subsec:Large_Guess}, we can guess and verify the coordinates of the large hyperspheres in \(\mathcal{C}_{L}^*\) in polynomial time. Also observe that the guessed values $\tilde{x}_{i}^{(1)},\tilde{x}_{i}^{(2)},\dots,\tilde{x}_{i}^{(d)}$
	yield an estimate for $\hat{x}_{i}^{(1)},\hat{x}_{i}^{(2)},\dots,\hat{x}_{i}^{(d)}$ up to a (polynomially small) error of $\eps/n$.
	
	In order to find a feasible packing of the small hyperspheres we will replicate techniques from \Cref{subsec:placing_small}.  Define \(\epsc:=\epsl^{6d}\)(i.e, \(\epss=\epsl^{12d}=\epsc^2\)). To subdivide the knapsack into \(1/\epsc^d\) $d$-dimensional grid cells of length \(\epsc\). We denote a grid cell by \(G_{\ell^{(1)},\dots,\ell^{(d)}}:=[\ell^{(1)}\cdot\epsc,(\ell^{(1)}+1)\cdot\epsc)\times\dots\times[\ell^{(d)}\cdot\epsc,(\ell^{(d)}+1)\cdot\epsc)\). We define the set of grid cells by \(\G:=\{G_{\ell^{(1)},\dots,\ell^{(d)}}:\ell^{(1)},\dots,\ell^{(d)}\in\{0,1,\dots,1/\epsc-1\}\}\). To proceed further, we first prove structural properties for the packing of the hyperspheres in the grids. For any set \(\mathcal{S}\) of circles or grid cells, we define \(\area(\mathcal{S})\) to be the total volume of the elements in \(\mathcal{S}\).
	\begin{lem}\label{lem:small-hypersphere-grid-cells}
		Let \(\mathcal{C}_{cut}\subseteq\C_{S}^{*}\) be the set of all small hyperspheres in \(\C_{S}^{*}\) that intersect more than one grid cell. We have that \(\area(\mathcal{C}_{cut})\leqslant4d\epss/\epsc\leqslant d\eps\epsl^d/2^{5d-3}\)
	\end{lem}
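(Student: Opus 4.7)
The plan is to follow the two-dimensional proof of \Cref{lem:small-circles-grid-cells} verbatim, but generalized from grid lines to axis-aligned grid hyperplanes. A small hypersphere belongs to $\mathcal{C}_{cut}$ if and only if it meets at least one grid hyperplane of the $d$-dimensional grid, so I will upper bound $\area(\mathcal{C}_{cut})$ by summing, over all such hyperplanes $H$, the volume of small hyperspheres that intersect $H$.

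First I would fix a coordinate direction $i\in\{1,\dots,d\}$ and a hyperplane $H$ of the grid perpendicular to the $i$-th axis (there are at most $1/\epsc-1\le 1/\epsc$ of these per direction, hence at most $d/\epsc$ in total). Any hypersphere of radius at most $\epss$ that meets $H$ has its center at $i$-th-coordinate distance at most $\epss$ from $H$, and therefore the entire hypersphere lies in the slab obtained by thickening $H$ by $2\epss$ on each side. Intersecting this slab with the knapsack $[0,1]^d$ gives a set of volume at most $4\epss\cdot 1^{d-1}=4\epss$. Since the selected hyperspheres are pairwise disjoint, the total volume of small hyperspheres meeting $H$ is at most $4\epss$.

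Summing over the at most $d/\epsc$ grid hyperplanes yields
\[
\area(\mathcal{C}_{cut})\ \le\ \frac{d}{\epsc}\cdot 4\epss\ =\ \frac{4d\,\epss}{\epsc},
\]
which is the first inequality in the statement. For the second inequality I substitute $\epss=\epsl^{12d}$ and $\epsc=\epsl^{6d}$, giving $4d\,\epss/\epsc=4d\,\epsl^{6d}=4d\,\epsl^d\cdot\epsl^{5d}$. Using $\epsl\le\eps\le 1/2$ I bound
\[
\epsl^{5d}\ =\ \epsl\cdot\epsl^{5d-1}\ \le\ \eps\cdot(1/2)^{5d-1}\ =\ \frac{\eps}{2^{5d-1}},
\]
so $4d\,\epsl^{6d}\le 4d\,\epsl^d\cdot\eps/2^{5d-1}=d\,\eps\,\epsl^d/2^{5d-3}$, as required.

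I do not anticipate any real obstacle: the only non-routine point is convincing oneself that the slab argument is tight in arbitrary dimension (the slab width is determined by the radius bound, not by $d$), and that counting axis-aligned grid hyperplanes contributes the factor $d$ rather than something worse. Both are immediate generalizations of the $d=2$ argument, so the proof should be essentially identical to that of \Cref{lem:small-circles-grid-cells}, with the planar strips replaced by $(d{-}1)$-dimensional slabs.
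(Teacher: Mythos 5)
Your proof is correct and follows the paper's argument essentially verbatim: the same $4\epss$-thick slab around each of the $d/\epsc$ grid hyperplanes, the same volume bound from disjointness of the packed spheres, and the same substitutions $\epss=\epsl^{12d}$, $\epsc=\epsl^{6d}$ followed by using $\epsl\le\eps\le 1/2$ to pass to the final bound.
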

	\begin{proof}
		For each of the \(i\)-th dimension for \(1\leqslant i \leqslant d\), and for each of the gridline \(\ell\) in this dimension, each \(C\in\C_{cut}\) must lie in a strip with the \(i\)-th dimension being \(4\epss\) and all other dimensions being 1. Totally, there are \(d/\epsc\) gridlines. Thus, \(\area(\C_{cut})\leqslant4d\epss/\epsc=4d\epsl^{6d}\leqslant d\eps\epsl^d/2^{5d-3}\). The last inequality follows from the fact \(\epsl\leqslant \eps\leqslant 1/2\).
	\end{proof}
	We partition \(\G\) into \emph{white}, \emph{black}, and \emph{gray} cells with respect to the position of the large hyperspheres similarly as the case of circles. We give bounds similar to \Cref{lem:gray-circles} and \Cref{lem:white-circles} in the case of $d$-dimensions for \emph{gray} and \emph{white} cells.
	\begin{lem}\label{lem:gray-circles-d}
		The total volume of gray cells \(\area(\G_{g})\) is at most \(\dfrac{(d+1)(3^d-1)\eps\epsl^d}{2^{5d-3}}\). We can compute \(\G_g\) in polynomial time.
	\end{lem}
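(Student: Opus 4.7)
The plan is to mimic the two-dimensional argument of Lemma~\ref{lem:gray-circles} in a dimension-by-dimension manner, replacing planar quantities by their $d$-dimensional analogues. Fix any large hypersphere $C_L\in\C_L^*$ of radius $r_L\ge\epsl$, let $(\tilde{x}_{C_L}^{(1)},\ldots,\tilde{x}_{C_L}^{(d)})$ be its guessed center (whose Euclidean distance from the true center is at most $\sqrt{d}\,\eps/n$), and let $\G_{g,C_L},\G_{b,C_L}\subseteq\G$ be the gray and black cells that overlap $C_L$. Since a $d$-dimensional grid cell has diameter $\sqrt{d}\,\epsc$, every cell whose farthest point is within distance $r_L-(\sqrt{d}+1)\epsc$ of the guessed center is contained in $C_L$ under every legal placement and is therefore black, while every cell whose closest point is farther than $r_L+(\sqrt{d}+1)\epsc$ cannot intersect $C_L$ under any legal placement. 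Hence $\G_{g,C_L}$ is confined to a spherical shell of thickness $O(\epsc)$ around radius $r_L$.

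The next step is the volumetric comparison. Writing $V_d:=\pi^{d/2}/\Gamma(d/2+1)$, I would use the binomial identity
\[
(r_L+(\sqrt{d}+1)\epsc)^d-(r_L-(\sqrt{d}+1)\epsc)^d \;=\; 2\sum_{\substack{k=1\\k\text{ odd}}}^d \binom{d}{k}r_L^{d-k}\bigl((\sqrt{d}+1)\epsc\bigr)^k,
\]
whose leading term is $2d(\sqrt{d}+1)\,r_L^{d-1}\epsc$, while all higher-order terms are absorbed using $\epsc\ll r_L$. This yields $\area(\G_{g,C_L})\le V_d\cdot O(d\,\epsc\,r_L^{d-1})$, and on the other hand $\area(\G_{b,C_L})\ge V_d(r_L/2)^d$ for the chosen scales. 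The ratio is $O(d\,\epsc/r_L)\le O(d\,\epsc/\epsl)$, and substituting $\epsc=\epsl^{6d}$ makes this exponentially small in $\epsl$.

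Now I would globalise the estimate. Since the black cells $\G_{b,C_L}$ are contained in the pairwise disjoint hyperspheres $C_L\in\C_L^*$, we have $\sum_{C_L}\area(\G_{b,C_L})\le 1$, so summing the per-sphere ratio bound gives
\[
\area(\G_g)\;\le\;\sum_{C_L\in\C_L^*}\area(\G_{g,C_L})\;\le\;O(d\,\epsc/\epsl)\cdot\sum_{C_L\in\C_L^*}\area(\G_{b,C_L})\;\le\;O(d\,\epsc/\epsl).
\]
The target bound $\tfrac{(d+1)(3^d-1)\eps\epsl^d}{2^{5d-3}}$ then follows by carefully tracking constants: the $(3^d-1)$ factor arises naturally when one prefers a combinatorial (cell-neighbourhood) counting — each boundary gray cell lies in the $3^d$-neighbourhood of some interior black cell and is distinct from it — rather than a purely volumetric bound, giving at most $3^d-1$ gray cells adjacent to each extremal layer of black cells. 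Polynomial-time computability of $\G_g$ is immediate: for each of the $O(1/\epsc^d)$ cells and each of the $(1/\eps)^{O_k(1)}$ guessed large hyperspheres, we check in constant time whether the cell is contained in, disjoint from, or partially overlapping the sphere for every placement consistent with the guessed window, and classify accordingly.

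The main obstacle is matching the precise constants in the stated bound — in particular deriving the $(3^d-1)$ factor, which requires the grid-neighbourhood argument rather than the cleaner shell-volume computation, and carefully balancing the $(\sqrt{d}+1)\epsc$ slack against the $\eps/n$ guessing error so that the black/gray classification is indeed exclusive. Once these constants are fixed, the rest of the argument is a direct generalisation of the two-dimensional proof.
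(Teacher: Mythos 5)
Your high-level plan matches the paper exactly: fix a large hypersphere, observe that its gray cells are confined to a thin spherical annulus around radius $r_L$, compare the annulus volume to the volume of the black cells it guarantees, get a per-sphere ratio $O(\epsc/\epsl)$, and globalise using the disjointness of the large hyperspheres (so $\sum_{C_L}\area(\G_{b,C_L})\le 1$). This is precisely the structure of the paper's proof.

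However, you have misdiagnosed the source of the $(3^d-1)$ constant, and this is the one place where your argument is incomplete. You write that the shell-volume computation would only give you $O(d\,\epsc/r_L)$ and that the $(3^d-1)$ factor ``requires the grid-neighbourhood argument rather than the cleaner shell-volume computation.'' That is not how the paper obtains it, and it is not needed. The paper uses slack $(d+1)\epsc$ (you used $(\sqrt{d}+1)\epsc$, which also works) and then bounds the annulus volume by
\[
(r_L+(d+1)\epsc)^d-(r_L-(d+1)\epsc)^d \;\le\; 2(d+1)\bigl((3/2)^d-(1/2)^d\bigr)\epsc\, r_L^{d-1},
\]
valid because $(d+1)\epsc\le r_L/2$ so the binomial terms of order $a^k$ with $k\ge 3$ can be absorbed. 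The factor $(3/2)^d-(1/2)^d=(3^d-1)/2^d$ is a clean algebraic identity, and when you divide by the black-cell lower bound $(r_L-(d+1)\epsc)^d\ge (r_L/2)^d=r_L^d/2^d$, the $2^d$ cancels, leaving exactly $2(d+1)(3^d-1)\epsc/r_L$. So the $(3^d-1)$ is a purely volumetric artifact of the annulus bound with slack half the radius; no $3^d$-neighbourhood cell-counting is involved. Your admission that you cannot derive this constant is therefore a genuine gap in the proposal, but it is a gap created by pursuing the wrong mechanism: if you push your own shell-volume estimate through with the slack capped at $r_L/2$ and carry the $2^d$ from $\area(\G_{b,C_L})\ge V_d(r_L/2)^d$ into the ratio, the $(3^d-1)$ drops out automatically. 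Everything else in your write-up (the black/gray classification by farthest/closest cell point versus the guessed center, absorbing the $\eps/n$ guessing error into the $\epsc$-sized slack, and the polynomial-time computability of $\G_g$) is correct and matches the paper.
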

	\begin{proof}
		Consider a large hypersphere \(C_L\) with radius \(r_L\geqslant\epsl\). Let \(\G_{g, C_L},\G_{b, C_L}\) be the set of gray and black grid cells that overlap with \(C_L\). Note that, for \(C_L\) we only have an estimate \(\tilde{x}_{C}^{(1)},\dots,\tilde{x}_{C}^{(d)}\) of the coordinate of its center up to a \(\emph{polynomially}\) small error of \(\eps/n\).
		
		First we claim that all the cells intersected by a hypersphere centered at  \((\tilde{x}_{C}^{(1)},\dots,\tilde{x}_{C}^{(d)})\) of radius $r_L- (d+1)\epsc$ are in $\G_{b, C_L}$. 
		For contradiction, assume there is a gray cell $S_g$ intersected by a circle $C$ centered at  \((\tilde{x}_{C}^{(1)},\dots,\tilde{x}_{C}^{(d)})\) of radius $r_L- (d+1)\epsc$.
		Then, the distance of any point in $S_g$ from the center of the circle is at most $r_L- (d+1)\epsc +\sqrt{d}\epsc < r_L-\eps/n$. Thus, $S_g$  is completely contained within $C$.
		
		Similarly, we can show that the cells that are outside of a circle centered at  \((\tilde{x}_{C}^{(1)},\dots,\tilde{x}_{C}^{(d)})\) of radius $r_L+ (d+1)\epsc$ are not in $\G_{b, C_L}$.
		
		Hence, 
		\begin{align*}
		\area(\G_{b, C_L}) &\geqslant \frac{\pi^{\frac{d}{2}}}{\Gamma\left(\frac{d}{2} + 1\right)} \cdot (r_L-(d+1)\epsc)^d \\
		\area(\G_{g, C_L}) &\leqslant \frac{\pi^{\frac{d}{2}}}{\Gamma\left(\frac{d}{2} + 1\right)} \cdot (r_L+(d+1)\epsc)^d - \frac{\pi^{\frac{d}{2}}}{\Gamma\left(\frac{d}{2} + 1\right)} \cdot (r_L-(d+1)\epsc)^d \\ 
		&\leqslant \frac{\pi^{\frac{d}{2}}}{\Gamma\left(\frac{d}{2} + 1\right)} \cdot 2(d+1)((3/2)^d-(1/2)^d)\epsc r_{L}^{d-1}
		\end{align*}
		The inequality follows from the fact, \(r_{L}/2\geqslant(d+1)\epsc\)\\
		
		Hence we can bound the ratio of \(\frac{\area(\G_{g, C_L})}{\area(\G_{b, C_L})}\) as
		\[
		\begin{aligned}
		\frac{\area(\G_{g, C_L})}{\area(\G_{b, C_L})} &\leqslant \frac{\frac{\pi^{\frac{d}{2}}}{\Gamma\left(\frac{d}{2} + 1\right)} \cdot 2(d+1)((3/2)^d-(1/2)^d)\epsc r_{L}^{d-1}}{\frac{\pi^{\frac{d}{2}}}{\Gamma\left(\frac{d}{2} + 1\right)} \cdot (r_L-(d+1)\epsc)^d} \\
		&\leqslant \frac{2(d+1)((3/2)^d-(1/2)^d)\epsc r_{L}^{d-1}}{(r_L/2)^2} \\
		&\leqslant \frac{2(d+1)(3^d-1)\epsc}{r_L} \leqslant \frac{2(d+1)(3^d-1)\epsc}{\epsl} \\
		&= 2(d+1)(3^d-1)\epsl^{6d-1} \leqslant \frac{(d+1)(3^d-1)\eps\epsl^d}{2^{5d-3}}
		\end{aligned}
		\]
		
		Inequalities follow from the fact that $\epsl \leqslant \eps \leqslant 1/2$ and $r_L \geqslant \epsl$.  The last inequality uses the fact that $\eps \leqslant 1/2$.
		Therefore, $\sum_{C \in \C_{L}^{*}} {\area(\G_{g, C_L})} \leqslant \bigg(\dfrac{(d+1)(3^d-1)\eps\epsl^d}{2^{5d-3}}\bigg) \cdot \sum_{C \in \C_{L}^{*}}  {\area(\G_{b, C_L})} \leqslant \dfrac{(d+1)(3^d-1)\eps\epsl^d}{2^{5d-3}}$.
	\end{proof}
	\begin{lem}\label{lem:white-circles-d}
		The total volume of the white grid cells \(\area(\G_w)\) is at least \(\frac{\epsl^d}{2^d}\), We can compute \(\G_w\) in polynomial time.
	\end{lem}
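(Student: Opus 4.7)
The plan is to extend the 2-dimensional corner argument from \Cref{lem:white-circles} to every one of the $2^d$ corners of the knapsack $K=[0,1]^d$. For each corner $v$, I would define an axis-aligned corner hypercube $S_{w}^{v}$ of side length $\epsl/4$ that shares $v$ and is contained in $K$. The main geometric claim to prove is that for every legal placement of every large hypersphere $C_L$, the hypersphere $C_L$ does not intersect $S_{w}^{v}$. Granting this, all grid cells in $\G$ that are contained in $\bigcup_{v} S_{w}^{v}$ are white by definition, so $\area(\G_w)$ is at least the total volume of these grid-aligned sub-hypercubes.

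To establish the geometric claim, by symmetry I focus on the corner at the origin and $S_{w}^{v} = [0,\epsl/4]^d$. Assume for contradiction that a large hypersphere $C_L$ with radius $r_L \ge \epsl$ (legally placed in $K$) has a point $p \in S_{w}^{v}$. Since $C_L$ lies in the knapsack, each coordinate $c_i$ of its center satisfies $c_i \ge r_L \ge \epsl > \epsl/4$, and therefore the closest point of $S_w^v$ to the center is its ``top-right'' corner $(\epsl/4,\dots,\epsl/4)$. The squared distance from the center to any point $p \in S_w^v$ is at least $\sum_{i=1}^d (c_i - \epsl/4)^2 \ge d (r_L - \epsl/4)^2$. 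To derive a contradiction with $\|p - c\| \le r_L$, I need $\sqrt{d}\,(r_L - \epsl/4) > r_L$, i.e.\ $r_L > \tfrac{\sqrt d}{4(\sqrt d - 1)}\epsl$. A short computation shows $\tfrac{\sqrt d}{4(\sqrt d - 1)} < 1$ exactly when $d > 16/9$, which is satisfied for every integer $d \ge 2$; since $r_L \ge \epsl$, the required inequality holds and we have a contradiction.

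Having shown the claim, I then count volume. The $2^d$ hypercubes $S_{w}^{v}$ are pairwise disjoint (they sit in different corners of $K$) and each has volume $(\epsl/4)^d$, so $\sum_v \vol(S_w^v) = 2^d(\epsl/4)^d = \epsl^d/2^d$. The white grid cells contained in $S_w^v$ miss only a thin boundary strip of width at most $\epsc$ in each dimension. Since $\epsc = \epsl^{6d}$ is polynomially smaller than $\epsl$, this loss is negligible and can be absorbed by choosing the corner hypercubes of side length $\epsl/4 + \epsc$ instead (which is still strictly below the threshold $\epsl(1-1/\sqrt d)$ derived above, so the non-intersection argument still goes through). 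This yields $\area(\G_w) \ge \epsl^d/2^d$ as required.

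Finally, for the algorithmic statement, I would simply enumerate every grid cell $G \in \G$ (there are at most $(1/\epsc)^d = O_\eps(1)$ of them) and, for each, test whether some hypersphere $C_i \in \C_L^*$ at some legal placement intersects $G$. Each such test reduces to checking feasibility of a constant-sized system of polynomial inequalities analogous to~\eqref{eq:equations} augmented with the constraint that the hypersphere meets the axis-aligned cell $G$, which can be decided in polynomial time by \Cref{thm:GrigorevVV88}. The cells that pass the test for every $C_i$ constitute $\G_w$. The only subtlety I anticipate is the careful handling of the boundary loss between the corner hypercubes and the aligned grid; but since $\epsc \ll \epsl$, this is a small quantitative fix rather than a conceptual obstacle.
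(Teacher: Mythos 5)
Your proof is correct and takes essentially the same approach as the paper: the paper's $d$-dimensional argument is a one-line appeal to the $2$D case with corner cubes of side $\epsl/4$, and you fill in the same details (closest point of the corner box to the sphere's center is the interior corner, distance at least $\sqrt d\,(r_L-\epsl/4)>r_L$ for $d\ge 2$). Your explicit handling of the grid-cell boundary issue (enlarging the corner cube to $\epsl/4+\epsc$) is in fact slightly more careful than what the paper writes down.
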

	\begin{proof}
		Similarly arguing as in \Cref{lem:white-circles}, we can show that no hypersphere can intersect a $d$-dimensional bin of side length \(\epsl/4\) at each corner. Hence the total volume of white grid cells must be at least \(2^d\cdot(\epsl/4)^d=(\epsl/2)^d\)
	\end{proof}
	
	We note here that the total volume of gray cell and \(\C_{cut}\) is at most \(d\eps\epsl^d/2^{5d-3}+\dfrac{(d+1)(3^d-1)\eps\epsl^d}{2^{5d-3}} \leqslant \eps (\epsl/2)^d\). Hence similar to \Cref{lem:small-circles-white-grid-cells} we state the following lemma for $d$-dimensional case also using \Cref{lem:small-hypersphere-grid-cells}, \Cref{lem:gray-circles-d}, and \Cref{lem:white-circles-d}.
	\begin{lem}
		\label{lem:small-circles-white-grid-cells-d}
		There
		is a set $\C'_S \subseteq \C^*_S$ of small hyperspheres with
		$p(\C'_S)\geqslant (1-\eps) p(\C^*_S)$
		such that there is a packing for $\C'_S$ using the grid cells in $\G_w$ only.
	\end{lem}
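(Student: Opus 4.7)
The plan is to mirror the argument of Lemma~\ref{lem:small-circles-white-grid-cells-1} in arbitrary constant dimension $d$, using the $d$-dimensional volume bounds already established in Lemmas~\ref{lem:small-hypersphere-grid-cells},~\ref{lem:gray-circles-d}, and~\ref{lem:white-circles-d}. Starting from the optimal packing of $\C_S^*$ inside the knapsack, I would first split $\C_S^*$ into three parts: (i) the hyperspheres $\C_{cut} \subseteq \C_S^*$ that are cut by at least one grid hyperplane, (ii) the hyperspheres $\C_{gray}$ that lie entirely inside some gray grid cell, and (iii) the hyperspheres $\C_{white}$ that lie entirely inside some white grid cell. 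Note that no small hypersphere can lie inside a black cell, since a black cell is covered by some large hypersphere of $\C_L^*$ under any legal placement and would overlap with it in $\OPT$.

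Next, I would bound the total volume of the ``bad'' part $\C_{bad} := \C_{cut} \cup \C_{gray}$. Combining Lemma~\ref{lem:small-hypersphere-grid-cells} (which gives $\area(\C_{cut}) \le d\eps\epsl^d/2^{5d-3}$) with Lemma~\ref{lem:gray-circles-d} (which gives $\area(\G_g) \le (d+1)(3^d-1)\eps\epsl^d/2^{5d-3}$), and using Lemma~\ref{lem:white-circles-d} which gives $\area(\G_w) \ge (\epsl/2)^d$, I would conclude that $\area(\C_{bad}) \le \eps \cdot \area(\G_w)$ for the chosen parameters.

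The repacking step is the main point. I would partition the white grid cells into ``buckets'' according to which cell the corresponding small hyperspheres of $\C_{white}$ lie in, order the cells by the profit of their contents in non-increasing profit density, and discard the last (least profitable) set of cells whose total volume just exceeds $\area(\C_{bad})$. Since the total volume of discarded white cells is at most $2\eps \cdot \area(\G_w)$ and their profit density is the smallest, the profit lost from $\C_{white}$ is at most $O(\eps)\,p(\C_{white})$ by a standard averaging argument. I then place the hyperspheres from $\C_{bad}$ into the freed white cells. Because each hypersphere of $\C_{bad}$ has radius at most $\epss$ and each freed white cell has side $\epsc \gg \epss$, I can re-pack them greedily (for instance, by treating each bad hypersphere as the smallest enclosing axis-aligned hypercube and using a $d$-dimensional NFDH-type scheme within each freed cell), losing only a constant factor of the volume, which is easily absorbed into the volume slack.

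The main obstacle is bookkeeping the ratio between the profit of the discarded white cells and the profit of $\C_{bad}$: the argument goes through only because the relative volume $\area(\C_{bad})/\area(\G_w) = O(\eps)$ is small, so ordering by profit density and dropping a prefix of the least dense cells loses at most an $O(\eps)$ fraction of $p(\C_{white})$, while the freed volume exceeds $\area(\C_{bad})$ by a constant factor that accommodates the packing inefficiency of the greedy re-packing. Setting $\C'_S$ to be $\C_{white}$ minus the hyperspheres in the discarded cells, together with all of $\C_{bad}$, and rescaling $\eps$ by an absolute constant, yields the claimed $(1-\eps)$-approximation packed entirely inside $\G_w$.
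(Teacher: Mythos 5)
Your proposal is correct and follows essentially the same approach as the paper: the paper gives no separate proof of Lemma~\ref{lem:small-circles-white-grid-cells-d}, simply stating it is analogous to Lemma~\ref{lem:small-circles-white-grid-cells-1} using the $d$-dimensional volume bounds, and your argument faithfully mirrors that 2D proof (discard the lowest-profit $O(\eps)$-fraction of white cells, observe that $\area(\C_{cut})+\area(\G_g)\le\eps\,\area(\G_w)$, and repack the displaced small hyperspheres into the freed white cells). Two small remarks: since all grid cells have equal volume, your ``order by profit density'' step is the same as the paper's ``discard the $\eps$-fraction of white cells with smallest total weight''; and for the greedy repacking to have enough slack for the bounding-box/NFDH inefficiency you should discard cells of total volume a constant factor \emph{larger} than $\area(\C_{bad})$ (e.g.\ a full $\eps$-fraction of $\G_w$) rather than volume that ``just exceeds'' $\area(\C_{bad})$ — but this is exactly what a rescaling of $\eps$ fixes, as you already note.
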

	We extend \Cref{lem:grid-resource-augmentation} for the $d$-dimensional case.
	\begin{lem} \label{lem:grid-resource-augmentation-d}
		There is a set of small hyperspheres $\C''_{S}\subseteq\C'_{S}$ such that $p(\C''_{S})\geqslant(1-\eps)p(\C'_{S})$
		and it is possible to place the hyperspheres in $\C''_{S}$ non-overlappingly
		inside $|\G_{w}|$ $d$-dimensional knapsacks of size $(1-\eps)\epsc \times\cdots \times(1-\eps)\epsc$
		each. \end{lem}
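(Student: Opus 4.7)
The plan is to extend the shifting argument from the two-dimensional Lemma~\ref{lem:grid-resource-augmentation-1} coordinate-by-coordinate to $d$ dimensions. Starting from the packing of $\C'_S$ inside the cells of $\G_w$ guaranteed by Lemma~\ref{lem:small-circles-white-grid-cells-d}, I would process each white grid cell $G \in \G_w$ independently; let $\C_G \subseteq \C'_S$ denote the hyperspheres contained in $G$. Inside $G$, the goal is to discard a subset of hyperspheres of total profit at most $\eps\, p(\C_G)$ so that the remainder can be repacked inside a hypercube of side $(1-\eps)\epsc$. Summing over cells will then give the claimed $\C''_S$.

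To shrink the $i$-th side of $G$, I would partition $G$ along the $i$-th axis into $d/\eps$ parallel slabs of thickness $\eps\epsc/d$. The key parameter fact is that every small hypersphere has radius at most $\epss = \epsc^2$, and since $\epsc \le \epsl^{6d}$ we have $2\epss \le \eps\epsc/d$ for any constant $d$ once $\eps$ is small enough; therefore each hypersphere meets at most two consecutive slabs. Counting each hypersphere's profit at most twice and averaging over the $d/\eps$ slabs yields a slab $\sigma_i$ whose intersecting hyperspheres have total profit at most $(2\eps/d)\,p(\C_G)$. I would discard those hyperspheres, delete $\sigma_i$, and translate every surviving hypersphere lying on one side of $\sigma_i$ toward $\sigma_i$ by $\eps\epsc/d$ in dimension $i$. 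Because no surviving hypersphere ever occupied $\sigma_i$, the translated packing is still non-overlapping, and its extent in dimension $i$ has shrunk to $(1-\eps/d)\epsc$ while all other extents are unchanged.

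Iterating this step for $i=1,\ldots,d$ on the packing output by the previous step, the total profit discarded in cell $G$ is at most $2\eps\,p(\C_G)$, and the remaining hyperspheres sit inside a hypercube of side $(1-\eps/d)^d \epsc$. Since $d$ is a fixed constant, $(1-\eps/d)^d \le 1-\eps/2$ for small $\eps$; after replacing $\eps$ by $\eps/4$ throughout the argument we obtain a set fitting inside a hypercube of side $(1-\eps)\epsc$ while losing at most an $\eps$-fraction of profit. Summing over all $|\G_w|$ cells produces $\C''_S \subseteq \C'_S$ with $p(\C''_S) \ge (1-\eps)p(\C'_S)$ that fits into $|\G_w|$ hypercubes of side $(1-\eps)\epsc$, as required.

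The main obstacle is purely parametric, namely verifying the separation $2\epss \le \eps\epsc/d$ needed to ensure that each small hypersphere lies in at most two consecutive slabs; this follows directly from the gap $\epss = \epsc^2$ built into Section~\ref{sec:spheres} and the constant-$d$ assumption. Once that separation is in hand, the argument is a dimension-by-dimension repetition of a one-dimensional averaging step, with the translations in different dimensions commuting and never increasing overlap.
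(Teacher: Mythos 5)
Your overall strategy---a dimension-by-dimension shifting/averaging argument applied independently to each white grid cell---matches the paper's proof, which deletes hyperspheres touching a random strip of width $\eps\cdot\epsc$ in each of the $d$ coordinate directions and bounds the expected loss by $O_d(\eps)$. However, your accounting of the shrinkage contains an error that invalidates the conclusion. You remove a single slab of thickness $\eps\epsc/d$ per dimension; each dimension is processed exactly once, so the extent along every axis drops from $\epsc$ to $\epsc-\eps\epsc/d=(1-\eps/d)\epsc$. The remaining hyperspheres therefore fit inside a hypercube of side $(1-\eps/d)\epsc$, not $(1-\eps/d)^d\epsc$ as you claim: you are subtracting an absolute amount once per dimension, not multiplying the extent by a factor $d$ times. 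Since $(1-\eps/d)\epsc > (1-\eps)\epsc$ for $d\ge 2$, the shrinkage you actually achieve is strictly weaker than what the lemma requires, and the final remark about ``replacing $\eps$ by $\eps/4$'' does not repair this, because $(1-\eps/d)$ never drops to $(1-\eps)$ by rescaling $\eps$ without also rescaling the number of slabs you remove.

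The fix is to take slabs of thickness $\eps\epsc$ (not $\eps\epsc/d$), giving $1/\eps$ slabs per axis. The parameter check is then $2\epss\le\eps\epsc$, which still holds comfortably from $\epss=\epsc^2$ and $\epsc\le\epsl^{6d}\le\eps^{6d}$. Removing the cheapest slab per axis loses at most a $2\eps$-fraction of profit per axis, hence $2d\eps$ in total, which is $O(\eps)$ for constant $d$; after a constant-factor reparametrization of $\eps$ (depending on $d$), this gives both the $(1-\eps)\epsc$ side length and the $(1-\eps)$-fraction profit bound of the lemma. This is precisely the trade-off the paper accepts: its stated loss bound $d(dr/\epsc+\eps)$ contains the $d$-dependent constant, which is then absorbed into $O(\eps)$.

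Two smaller points. First, your translation step must be clear that it applies in the coordinate frame already shrunk by the earlier iterations---the translations in different dimensions do commute, as you note, because each removes a slab orthogonal to a distinct axis and all spheres are strictly contained in one side of the removed slab. Second, the paper's argument is phrased in expectation over a random strip offset and concludes by a derandomization/averaging observation; your explicit partition into $1/\eps$ slabs and picking the cheapest one is the standard deterministic formulation of the same idea, and once the slab-thickness error is fixed the two arguments are equivalent.
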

	\begin{proof}
			Consider a $d$-dimensional knapsack in $\G_{w}$ of side length $\epsc$.
		Similar to Lemma \ref{lem:grid-resource-augmentation-1}, by removing all circles intersected by a random strip of width $\eps \cdot \epsc$ in all the $d$ directions, we incur a loss of at most $d(d r/\epsc + \eps)$ fraction of optimal profit. As $r\le \epss$, we incur a loss of at most $O(\eps)$ fraction of profit as $\epss = \epsc^2 = \epsl^{12d}$.
	\end{proof}
	We further extend \Cref{lem:structured-packing} to $d$-dimensional case using \Cref{lem:grid-resource-augmentation-d}
	\begin{lemma}\label{lem:structured-packing-d}
		In polynomial time, we can compute a set of grid cells $\G_w$ such that no grid cell in $\G_w$ intersects with any
		hypersphere $C_i \in\C_{L}^{*}$ for any legal placement of $C_i$. Moreover, there is a set of small circles $\C_S \subseteq \C_{S}^{*}$ such that $w(\C_{S}) \geqslant (1-\eps)w(\C_{S}^{*})$ and
		the circles in $\C_S$ can be packed non-overlappingly inside $|\G_w|$ grid cells of size $(1-\eps) \epsc \times\dots\times(1-\eps) \epsc$ each.
	\end{lemma}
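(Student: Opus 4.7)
The plan is to replicate the structure of the proof of Lemma~\ref{lem:structured-packing} (the two-dimensional version) essentially verbatim, substituting the $d$-dimensional lemmas that have just been established in place of their planar counterparts. The backbone of the argument is: (i) small spheres that cross grid lines have negligible total volume; (ii) gray grid cells have negligible total volume; (iii) white cells have at least a constant volume $\Omega(\epsl^d)$; combining these three facts lets us repack $\C_S^*$ into white cells only, losing at most an $\eps$-fraction of profit; finally resource augmentation gives the claimed $(1-\eps)\epsc$-sized cells.

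Concretely, first I compute $\G_w$ in polynomial time by using the algorithm from Lemma~\ref{lem:white-circles-d}: we simply enumerate the grid cells $G_{\ell^{(1)},\dots,\ell^{(d)}}$ and, for each, test whether any sphere $C_i\in\C_L^*$ in some legal placement can meet it; since there are $O_\eps(1)$ spheres and a polynomial number of cells, this takes polynomial time. By definition, cells in $\G_w$ are disjoint from all legal placements of every large sphere, which gives the first part of the statement.

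Next, I transfer the optimal small spheres into white cells. Let $\C^*_S$ denote the small spheres of $\OPT$ and $\C_{cut}\subseteq \C^*_S$ those crossing a grid line. By Lemma~\ref{lem:small-hypersphere-grid-cells}, $\area(\C_{cut})\le d\eps\epsl^d/2^{5d-3}$; by Lemma~\ref{lem:gray-circles-d}, the total volume of the gray cells is at most $(d+1)(3^d-1)\eps\epsl^d/2^{5d-3}$; so the combined volume of $\C_{cut}$ plus spheres sitting inside gray cells is at most $\eps(\epsl/2)^d \le \eps\,\area(\G_w)$ by Lemma~\ref{lem:white-circles-d}. Now I discard the $O(\eps|\G_w|)$ white cells whose total profit is smallest, reassigning each small sphere originally packed in a discarded white cell, a gray cell, or cut by a grid line into one of the emptied white cells instead. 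Since small spheres have radius at most $\epss \ll \epsc$, each such sphere fits in any empty grid cell, and the total volume being repacked is dominated by the free volume gained. A standard volume accounting argument (identical to the one used in Lemma~\ref{lem:small-circles-white-grid-cells-1} and formalised for $d$ dimensions in Lemma~\ref{lem:small-circles-white-grid-cells-d}) guarantees that the resulting set $\C'_S$ satisfies $w(\C'_S)\ge (1-\eps)w(\C_S^*)$ and is packed entirely inside $\G_w$.

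Finally, I shrink each cell by an $\eps$-factor using Lemma~\ref{lem:grid-resource-augmentation-d}: in each white cell, removing the spheres that meet a random axis-parallel strip of thickness $\eps\epsc$ in each of the $d$ coordinates loses only an $O(\eps)$-fraction of profit (since each small sphere has radius at most $\epss=\epsc^2$, the expected loss per direction is $O(d\epss/\epsc + \eps) = O(\eps)$). This yields a subset $\C_S\subseteq \C'_S$ with $w(\C_S)\ge(1-\eps)w(\C'_S)\ge (1-O(\eps))w(\C_S^*)$ and a non-overlapping packing into $|\G_w|$ cells of side length $(1-\eps)\epsc$. Rescaling $\eps$ gives the lemma. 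The only real subtlety is the quantitative bookkeeping: we must verify that the dimension-dependent constants in Lemmas~\ref{lem:small-hypersphere-grid-cells}, \ref{lem:gray-circles-d} and \ref{lem:white-circles-d} compose so that the volume lost (gray cells, cut spheres, discarded white cells, and resource-augmentation strips) is bounded by $\eps\cdot\area(\G_w)$; this holds because our choice $\epsc=\epsl^{6d}$ makes each of these terms polynomial in $\epsl$ with exponent strictly greater than $d$, so they are an arbitrarily small fraction of $\epsl^d$. The main obstacle — and the reason the lemma is not purely mechanical — is ensuring that these constants, which blow up like $(3^d-1)$ and $2^{5d-3}$, remain compatible with the gap $\epss\ll \epsc\ll \epsl$ fixed earlier; this is why the recursive shifting in Lemma~\ref{lem:create-gap} was defined with exponent $12d$ in the $d$-dimensional setting.
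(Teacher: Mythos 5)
Your proposal is correct and follows essentially the same route as the paper: just as the paper does, you cite Lemmas~\ref{lem:small-hypersphere-grid-cells}, \ref{lem:gray-circles-d}, \ref{lem:white-circles-d}, and \ref{lem:grid-resource-augmentation-d} and combine them exactly as Lemmas~\ref{lem:small-circles-grid-cells}, \ref{lem:gray-circles}, \ref{lem:white-circles}, and \ref{lem:grid-resource-augmentation-1} are combined to prove the two-dimensional Lemma~\ref{lem:structured-packing}. One small slip in wording: in the repacking step you say you ``reassign'' the small spheres from the discarded low-profit white cells into the emptied cells --- those spheres must simply be \emph{deleted} (that deletion is precisely what costs the $\eps$-fraction of profit), and the cells they vacate are then used to house the gray-cell spheres and the spheres in $\C_{cut}$, as in Lemma~\ref{lem:small-circles-white-grid-cells-d}.
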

		Using it, we compute an approximation to the set $\C_S$ via an extension of our result from \Cref{thm:fat-ra} to $d$-dimensions and setting fatness $f$ to 1 for spheres.

	We pack the computed hyperspheres into our grid cells $\G_w$, denote them by $\C'_S$. In particular, they do not intersect any of the
	large hyperspheres in $\C_{L}^{*}$ in any legal placement of them. Our solution (corresponding to the considered guesses) consists of $\C_{L}^{*}\cup \C'_S$.

In order to further improve the precision of the large hyperspheres, we replicate techniques from \Cref{subsec:Improve_Large} which completes the proof of \Cref{thm:d-dim-ptas}.

\end{document}